\documentclass[11pt]{article}

\usepackage[margin=1in]{geometry}
\usepackage[utf8]{inputenc}
\usepackage[T1]{fontenc}
\usepackage{graphicx}

\usepackage{amsmath}
\usepackage{amssymb}

\usepackage{amsthm}
\newtheorem{definition}{Definition}[section]
\newtheorem{lemma}[definition]{Lemma}
\newtheorem{proposition}[definition]{Proposition}

\newtheorem{theorem}[definition]{Theorem}
\theoremstyle{definition}
\newtheorem{assumption}[definition]{Assumption}
\theoremstyle{remark}
\newtheorem{remark}[definition]{Remark}
\newtheorem{notation}[definition]{Notation}

\usepackage{authblk}

\usepackage{hyperref}
\hypersetup{
    colorlinks=true,
    linkcolor=black,
    citecolor=black,
    urlcolor=blue,
    pdftitle={A principled closure framework for higher-order SIS epidemic models on networks}
}

\usepackage{siunitx}      
\usepackage{bm}           
\usepackage{dcolumn}      

\usepackage{cleveref}

\newcommand{\prob}[1]{\langle#1\rangle}

\newcommand{\set}[1]{\{#1 \}}

\newcommand{\overbracket}[2]{\overbrace{#1}^{\textrm{#2}}}
\newcommand{\underbracket}[2]{\underbrace{#1}_{\textrm{#2}}}

\title{A principled closure framework for higher-order SIS epidemic models on networks}

\author[1]{Kevin Teo\thanks{Contact: \texttt{teo.ke@northeastern.edu}}}
\author[2]{Peter L. Simon}
\author[1]{Istv\'an Zolt\'an Kiss}

\affil[1]{Network Science Institute, Northeastern University London, London, \break E1W 1LP, United Kingdom}
\affil[2]{Institute of Mathematics, E\"otv\"os Lor\'and University, Budapest, Hungary}

\date{}

\begin{document}

\maketitle


\begin{abstract}
\noindent
Susceptible-infected-susceptible (SIS) epidemic models on networks are governed by hierarchical moment equations in which the dynamics of smaller subsystems depend on the state of larger ones. Moment closure approximations, which truncate this hierarchy by expressing higher-order state probabilities in terms of lower-order ones, are therefore essential for obtaining tractable reduced systems. Higher-order network models, which extend the pairwise contact structure to include group interactions encoded by hyperedges, introduce a combinatorial explosion of possible closure configurations, making systematic derivation significantly harder. Consequently, existing higher-order SIS models are typically derived heuristically, where structural and dynamical assumptions underpinning their closures are not always immediately apparent from~the formulation alone. We develop a principled bottom-up derivation of higher-order SIS dynamics, building systematically from node-level equations through pairs, triplets, and three-body interactions. Central to our approach is a network-dependent closure operator that generates topologically appropriate approximations from local pairwise and triadic structure. Within this unified framework, we recover three existing higher-order SIS models---Burgio et al.'s maximal clique model, Malizia et al.'s pair-based and inter-order models---as special cases, each arising under specific and identifiable combinations of topological and dynamical assumptions. This derivation makes explicit assumptions that are not immediately visible from heuristic approaches: for instance, the homogeneity assumption in Burgio et al. operates at the node level, stronger than the population-level average statement; and in Malizia et al., the inter-order overlap parameter is insufficient alone to express the model within our framework, with~the original derivation implicitly invoking additional structural assumptions that collapse the requisite parameterisation down to the single overlap parameter, even as the resulting equations perform~well against simulations. Our framework thus offers both a rigorous foundation for higher-order epidemic modeling and a constructive pathway for understanding the assumptions implicit in heuristically derived mean-field closures, and provides a principled method of generating new models.
\end{abstract}

\section{Introduction} \label{sec:introduction}

The spread of infections, ideas, and behaviors through a population has traditionally been modeled as a stochastic dynamical process on a contact network~\cite{anderson1991infectious, newman2002spread}, with variants of the Susceptible--Infected--Susceptible (SIS) epidemic model as the canonical setting for simulation and analytical study~\cite{pastor2015epidemic, kiss2017mathematics}. Over the past several years, interest has grown in addressing a fundamental limitation of traditional networks: their restriction to pairwise edges~\cite{benson2021higher, bick2023higher}. Certain interactions --- the coauthorship of a paper, a conversation among several people, a group activity --- cannot be reduced to combinations of pairwise interactions, since the number of participants alters the qualitative character of the interaction; these naturally live as irreducible higher-order interactions, and projecting them onto pairwise links discards critical, dynamically relevant information~\cite{battiston2020networks, st2025defining}.

Epidemic models defined on higher-order networks, which retain both pairwise edges and grouped hyperedges, exhibit qualitatively new phenomena absent from their pairwise counterparts, governed by the strength of the group interaction~\cite{majhi2022dynamics}. Iacopini et al.~\cite{iacopini2019simplicial} first established bistability and discontinuous (``explosive'') transitions in simplicial SIS contagion, and de Arruda et al.~\cite{deArruda2020} extended the framework to general higher-order social contagion on hypergraphs. Subsequent work has explored the role of degree heterogeneity~\cite{malizia2025disentangling}, pair-level correlations~\cite{malizia2025pair}, overlap between higher-order edges~\cite{malizia2025hyperedge}, and the effects of nested hyperedge structures~\cite{burgio2024triadic, malizia2026nested}. This progress has been built on a common methodological foundation: compact mean-field models --- small systems of ordinary differential equations in the global infected fraction and one or two macroscopic parameters such as the mean degrees. These models are analytically tractable and have been reconciled with stochastic simulation closely enough to underpin the field's quantitative understanding. Their governing equations, however, are not derived from a fully general node-level description of the stochastic dynamics. They are obtained instead from an intermediate-level description in which a collection of structural and dynamical assumptions has already been imposed: node-state independence, neighbor correlations, hyperedge homogeneity, and the order of closure. The resulting implicit assumptions about topology and dynamics are not always made explicit, which limits the generalizability of the resulting models.

This research gap is a natural consequence of the explosion in combinatorial complexity that arises when considering higher-order interactions in conjunction with pairwise interactions. A first-principles derivation starts from the exact microscopic rate equation for the infection probability at a single node~\cite{sharkey2008deterministic, sharkey2011deterministic}, which is a function of the joint state probabilities on every local sub-graph (a \emph{motif}) in which the node participates. Each motif corresponds to a distinct infection pathway, and the rate equation for its joint state probability depends in turn on motifs of larger size, generating a hierarchy of coupled equations. Moment-closure techniques address such hierarchies by approximating higher-order motif probabilities as functions of lower-order ones~\cite{Kuehn2016}, and have a long history on pairwise networks~\cite{rand1999correlation, keeling1999effects, house2009motif, wuyts2022mean}. On higher-order networks, however, the number of motif isomorphism classes grows faster than exponentially in the motif size~\cite{hedge1983enumeration}. The bookkeeping is already infeasible at motifs of four or five nodes, and existing closures have been constructed manually under heavy structural restrictions on the underlying network to limit the number of dynamically relevant cases.

Our central contribution is a \emph{closure operator} that, given any local motif specified by its adjacency data, returns a pseudo-probability for the joint node-state on that motif as a product of lower-order joint probabilities. The operator uses the Kirkwood superposition approximation~\cite{kirkwood1935statistical, matsuda2000physical} as its basis and augments it with two topology-aware operators: a \emph{hyperedge operator} that preserves triplet-level joint probabilities, and an \emph{independence operator} that decomposes higher-order joint probabilities into lower-order marginals. Together, these account for the presence and absence of specific hyperedges and pairwise edges within any motif and its isomorphs. This yields a closed set of rate equations for SIS dynamics that retain dependencies on any motif of size four and five, where prior closures avoided most of them altogether.

To deploy this closure operator, we first write the exact microscopic SIS rate equations at the node, pair, and triplet levels on a generic higher-order network, imposing no structural assumptions beyond the existence of pairwise edges and hyperedges. We then apply the operator to each term involving joint probabilities over more than three nodes. The resulting equations recover existing models as special cases, both a consistency check and a precise account of where each prior model's structural assumptions enter, and in one case expose an implicit assumption in the equations of~\cite{malizia2025pair} and suggest a refinement. Identifying which of these assumptions can be relaxed and which are~essential to the mean-field approach is an important direction for future research in higher-order epidemics.

The remainder of the paper is organized as follows. Section~\ref{sec:preliminaries} fixes the network model and the macroscopic topological descriptors. Section~\ref{sec:model-and-methods} provides the exact microscopic rate equations and introduces the closure operator. Section~\ref{sec:canonical-closures} demonstrates the closure operator to derive several commonly used closures in the literature, while section~\ref{sec:mean-field-approx} lays out the mean-field assumptions used. Section~\ref{sec:results} derives existing models~\cite{burgio2024triadic, malizia2025pair, malizia2026nested} as special cases of our general framework. Section~\ref{sec:discussion} discusses extensions to other higher-order dynamical processes and open directions.

\section{Preliminaries}\label{sec:preliminaries}

We consider static, undirected, unweighted contact structures that include both pairwise and triplet~interactions.

{
\begin{definition} \label{def:higher-order-networks}
    A \emph{higher-order network} $\mathcal{G} = (\mathcal{N}, \mathcal{E}, \mathcal{H})$ consists of a finite set of nodes~$\mathcal{N}$, a set of edges $\mathcal{E} \subseteq \binom{\mathcal{N}}{2}$, and a set of hyperedges $\mathcal{H} \subseteq \binom{\mathcal{N}}{3}$, where $\binom{\mathcal{N}}{k}$ denotes the collection of all $k$-element subsets of~$\mathcal{N}$. We write $e=\set{ij}$ for a generic edge and $h=\set{ijk}$ for a generic hyperedge. We denote the total number of nodes as $|\mathcal{N}|=N$.
\end{definition}
}

\begin{definition}\label{def:adjacency-tensors}
  The \emph{adjacency matrix}
  $\mathbf{A} \in \set{0,1}^{N \times N}$ and the \emph{hyperedge
  adjacency tensor}
  $\mathbf{H} \in \set{0,1}^{N \times N \times N}$ of a higher-order
  network~$\mathcal{G} = (\mathcal{N}, \mathcal{E}, \mathcal{H})$ are defined by
  \[
    A_{ij} =
    \begin{cases}
      1 & \text{if } \set{ij} \in \mathcal{E}, \\
      0 & \text{otherwise},
    \end{cases}
    \qquad
    H_{ijk} =
    \begin{cases}
      1 & \text{if } \set{ijk} \in \mathcal{H}, \\
      0 & \text{otherwise}.
    \end{cases}
  \]
  Since the network is undirected, $\mathbf{A}$ is symmetric
  ($A_{ij} = A_{ji}$) and $\mathbf{H}$ is symmetric under all
  permutations of its indices
  ($H_{ijk} = H_{ikj} = H_{jik} = \cdots$).  We set $A_{ii} = 0$ for all~$i$ and $H_{ijk} = 0$
  whenever any two indices coincide. For convenience, we also denote the complements as
  $\bar{A}_{ij} = 1 - A_{ij}$ and
  $\bar{H}_{ijk} = 1 - H_{ijk}$.
\end{definition}

\begin{remark}\label{rem:hyperedge-edge-independence}
  At this stage we make no assumption about any correlation (or lack thereof) between the distribution of hyperedge and edges among nodes and pairs of nodes. The presence of edges between pairs within a triplet provides no information about whether those three nodes also form a~hyperedge.
\end{remark}

\begin{notation} \label{not:summation}
  We write $\sum_{i,j,k \in \mathcal{N}}$ for the triplet sum $\sum_{i \in \mathcal{N}} \sum_{j \in \mathcal{N}}
  \sum_{k \in \mathcal{N}}$ with all indices running independently. When mutual distinctness among the indices is required we write
  $\sum_{\substack{i,j,k \in \mathcal{N} \\ \textup{distinct}}}$. Ordered restricted sums are written $\sum_{\substack{x < y \\ x,y \neq i,j}}$, meaning the sum runs
  over all $x, y \in \mathcal{N}$ with $x < y$ and $x, y \notin \set{i,j}$ (and similarly for larger exclusion sets).
  Unless stated otherwise, all sums run over~$\mathcal{N}$. As a convention, we use letters $i,j,k,l$ for fixed node indices, and $x,y,z,w$ for summation indices. 
\end{notation}

\pagebreak

\begin{remark} \label{rem:summation-convention}
    An ordered double sum can be rewritten as an unordered double sum with a factor of $\frac{1}{2}$ to account for double-counting:
    \[
        \sum_{\substack{x < y \\ x,y \neq i,j}} f(x,y) 
        = \frac{1}{2} \sum_{\substack{x,y \\ y \neq x,\; x,y \neq i,j}} f(x,y)
    \]
    for any function $f$ symmetric in its arguments. The unordered form is used commonly within the existing~literature. 
\end{remark}

\begin{definition}\label{def:mean-degrees}
  The \emph{degree} and \emph{hyperedge degree} of node~$i$ are
  \[
    k_{1,i} = \sum_{j} A_{ij},
    \qquad
    k_{2,i} = \frac{1}{2}
      \sum_{\substack{j,k \\ \textup{distinct}}} H_{ijk}.
  \]
  The \emph{mean degree} and \emph{mean hyperedge degree} are
  \[
    k_1 = \frac{1}{N}\sum_{i} k_{1,i}
        = \frac{1}{N}\sum_{i,j} A_{ij}
        = \frac{2|\mathcal{E}|}{N},
    \qquad
    k_2 = \frac{1}{N}\sum_{i} k_{2,i}
        = \frac{1}{2N}\sum_{i,j,k} H_{ijk}
        = \frac{3|\mathcal{H}|}{N}.
  \]
\end{definition}

\begin{definition}\label{def:clustering}
  The \emph{global clustering coefficient} of~$\mathcal{G}$ is
  \[
    \phi
    = \frac{
        \sum_{\substack{i,j,k \\ \textup{distinct}}}
          A_{ij}\, A_{ik}\, A_{jk}
      }{
        \sum_{\substack{i,j,k \\ \textup{distinct}}}
          A_{ij}\, A_{ik}
      },
  \]
  the proportion of connected triples that close into triangles.
  Consequently, the proportion of connected triples that form open wedges is
  \[
    \frac{
        \sum_{\substack{i,j,k \\ \textup{distinct}}}
          A_{ij}\, A_{ik}\, (1 - A_{jk})
      }{
        \sum_{\substack{i,j,k \\ \textup{distinct}}}
          A_{ij}\, A_{ik}
      }
    = 1 - \phi.
  \]
\end{definition}

\subsection{State probabilities}\label{subsec:state-prob}

Each node $i \in \mathcal{N}$ has an associated state $X_i \in \set{S_i, I_i}$, denoting whether $i$ is susceptible ($S$) or infected ($I$).  More generally, any group of nodes $i, j, \ldots, k \in \mathcal{N}$ has a joint state $(X_i, X_j, \ldots, X_k)$ with each $X_\ell \in \set{S_\ell, I_\ell}$ independently.  The joint state is defined for any collection of nodes, regardless of whether they are connected by edges or
hyperedges.

\begin{notation}\label{not:state-prob}
  Throughout, angled brackets $\prob{\,\cdot\,}$ denote \emph{probabilities}. For a single node,
  \[
    \prob{X_i} \equiv \mathbb{P}(X_i),
  \]
  is the probability that node~$i$ is in state~$X_i$. For a group of nodes $i, j, \ldots, k \in \mathcal{N}$,
  \[
    \prob{X_i X_j \ldots X_k} \equiv \mathbb{P}(X_i, X_j, \ldots, X_k),
  \]
  is the joint probability that nodes $i, j, \ldots, k$ simultaneously occupy states $X_i, X_j, \ldots, X_k$ respectively. Each $X_\ell$ denotes a generic state of node~$\ell$; specific states are substituted as needed, e.g. $\prob{S_i I_j I_k}$ for the probability that $i$ is susceptible while $j$ and $k$ are infected. 
\end{notation}

\section{Model and Methods} \label{sec:model-and-methods}

\subsection{SIS dynamics and microscopic rate equations}\label{subsec:sis-dynamics}

The Susceptible--Infected--Susceptible (SIS) model describes the evolution of a disease spreading process on a contact network~\cite{hethcote2000mathematics, sharkey2008deterministic, kiss2017mathematics}, where we treat the global state $\mathbf{X}(t) \in \set{S, I}^\mathcal{N}$ as a continuous-time Markov chain: each allowed transition (recovery of an infected node, or infection of a susceptible node through an edge or hyperedge) occurs as an independent Poisson process at a prescribed rate, and the resulting distribution evolves according to the Kolmogorov forward equation. The microscopic rate equations stated below describe the evolution of a node or a group of nodes, obtained by taking the marginal probabilities of the Kolmogorov forward equation over the nodes outside the group of interest.

On a contact network described by a higher-order network $\mathcal{G} = (\mathcal{N}, \mathcal{E}, \mathcal{H})$, nodes are either susceptible or infected at any given time. A susceptible node can become infected through (i) a pairwise interaction with an infected node at rate $\beta_1$, or (ii) a triplet interaction with two infected nodes at rate $\beta_2$. The presence of pairwise (edge) and triplet (hyperedge) interactions are encoded by $\mathcal{E}$ and $\mathcal{H}$ respectively. Note that hyperedge infections require both hyperedge neighbors to be infected. Finally, infected nodes can recover back to a susceptible state at a rate $\gamma$. 

The state probabilities evolve according to a hierarchy of coupled rate equations. We state the equations for the first three moments; higher-order terms will be addressed by the closure approximation developed in Subsection~\ref{subsec:closure}.

\subsubsection{First moment}

The probability that node~$i$ is infected evolves as
\begin{equation} \label{eq:dI}
    \frac{d}{dt}\prob{I_i} = 
    \underbracket{-\gamma \prob{I_i}}{recovery} + 
    \underbracket{\beta_1 \sum_{x\not=i} A_{ix} \prob{S_i I_x}}{pairwise infection} +
    \underbracket{\beta_2 \sum_{\substack{x<y \\x,y \not= i}} H_{ixy}\prob{S_i I_x I_y}}{hyperedge infection}.
\end{equation}
The pairwise infection term couples the single-node probability $\prob{I_i}$ to pair probabilities $\prob{S_i I_x}$, while the hyperedge infection term introduces triplet probabilities $\prob{S_i I_x I_y}$. The pre-factors $\gamma$, $\beta_1$, and $\beta_2$ clearly distinguish recovery, pairwise infection, and hyperedge infection terms, respectively. 

As seen from equation~\eqref{eq:dI}, evaluating the first moment equation requires the pair and triplet level state probabilities. These are the second and third moments of the system; we develop the rate equations subsequently.

\subsubsection{Second moment}\label{subsubsec:second-moment}
Here we introduce the rate equation for the pair probabilities, i.e. the second moment. For a fixed pair of nodes $i,j$, we consider how interactions between $i$ and $j$ as well as with external nodes $x,y \not \in \set{i,j}$ drive the system.

The evolution of $\prob{S_i I_j}$ is
\begin{equation}\label{eq:dSI}
\begin{aligned}
  \frac{d}{dt}\prob{S_i I_j} = 
    & \overbracket{-\gamma\,\prob{S_i I_j}
    + \gamma\,\prob{I_i I_j}}{recovery} \\
    & \underbracket{- \beta_1 A_{ij}\,\prob{S_i I_j}}{internal pairwise infection} \\
    & \underbracket{- \beta_1 \sum_{x \neq i,j} A_{ix}\,\prob{S_i I_j I_x}
    + \beta_1 \sum_{x \neq i,j} A_{jx}\,\prob{S_i S_j I_x}}{external pairwise infection} \\
    & \underbracket{- \beta_2 \sum_{x \neq i,j} H_{ijx}\,\prob{S_i I_j I_x}}{mixed hyperedge infection} \\
    & \underbracket{- \beta_2 \sum_{\substack{x < y \\ x,y \neq i,j}}
      H_{ixy}\,\prob{I_x I_y S_i I_j}
    + \beta_2 \sum_{\substack{x < y \\ x,y \neq i,j}}
      H_{jxy}\,\prob{S_i S_j I_x I_y}}{external hyperedge infection}.
\end{aligned}
\end{equation}
Here, the equation is organized into $5$ lines, each reserved for a particular type of interaction, depending on which nodes drive the infection --- the fixed nodes $i$ and/or $j$ (internal), dummy nodes $x$ and $y$ that are summed over (external), or both (mixed). The first line is the recovery term. The second line is the \emph{internal} pairwise infection term driven by node $j$. The third line is the \emph{external} pairwise infection term driven by a dummy node $x$ that is summed over. The fourth line describes the \emph{mixed} hyperedge infection term driven by one of the fixed nodes ($j$ in this case) and a dummy node $x$ that is summed over. The fifth and final line describes the purely \emph{external} hyperedge infection term driven by two dummy nodes $x$ and $y$ that are both summed over. 

This equation is valid for any pair $i, j$, regardless of whether they are connected by an edge or share a hyperedge---although as equation~\eqref{eq:dI} implies, in practice we only need to track the case where $A_{ij}=1$. We have written the equation in this way for generality; in practice, we can multiply the left-hand side and right-hand side of~\eqref{eq:dSI} by $A_{ij}$ to achieve the relevant form of the equation. 

\subsubsection{Third moment}\label{subsubsec:third-moment}

We also require an expression for the triplet probabilities, i.e. the third moment. For a fixed triplet $i, j, k$, we consider the evolution of the triplet under internal interactions between $i, j, k$, as well as interactions with external nodes $x,y \not\in \set{i,j,k}$. This yields the following equations:

\begin{equation} \label{eq:dSSI}
\begin{aligned}
    \frac{d}{dt}\prob{S_i S_j I_k} =& 
      \overbracket{\gamma \Big(\prob{I_i S_j I_k} + \prob{S_i I_j I_k} - \prob{S_i S_j I_k} \Big)}{recovery}  +
      \overbracket{\beta_1 \Big(- { [A_{ik} + A_{jk}]} \times \prob{S_i S_j I_k} \Big)}{internal pairwise infection}   +\\
    &  \underbracket{\beta_1 \Big( \sum_{x \not= i,j,k} [-A_{ix} - A_{jx}]\prob{I_x S_i S_j I_k} +A_{kx}\prob{S_i S_j S_k I_x}  \Big)}{external pairwise infection}   +
      \underbracket{0}{internal hyperedge infection}   + \\&
      \underbracket{\beta_2 \Big( \sum_{x \not=i,j,k} [-H_{ikx} - H_{jkx}] \prob{I_x S_i S_j I_k} \Big)}{mixed hyperedge infection}   +\\
    & \underbracket{\beta_2 \Big( \sum_{\substack{x<y \\x,y\not= i,j,k }} [-H_{ixy} -H_{jxy}] \prob{I_x I_y S_i S_j I_k} + H_{kxy} \prob{S_i S_j S_k I_x I_y} \Big)}{external hyperedge infection}
\end{aligned}
\end{equation}

\begin{equation} \label{eq:dSII}
\begin{aligned}
    \frac{d}{dt}\prob{S_i I_j I_k} =& 
      \overbracket{\gamma \Big(\prob{I_i I_j I_k} - 2\times\prob{S_i I_j I_k}\Big)}{recovery} +
      \overbracket{\beta_1 \Big({ [-A_{ij} - A_{ik}]} \prob{S_i I_j I_k} + {A_{jk}}[\prob{S_i I_j S_k} + \prob{S_i S_j I_k}] \Big)}{internal pairwise infection}   + \\&
      \underbracket{\beta_1 \Big(\sum_{x\not=i,j,k} -A_{ix} \prob{I_x S_i I_j I_k} + A_{jx} \prob{I_x S_i S_j I_k} + A_{kx} \prob{I_x S_i I_j S_k}\Big)}{external pairwise infection}   +
      \underbracket{-\beta_2 H_{ijk} \prob{S_i I_j I_k}}{internal hyperedge infection}   + \\&
      \underbracket{\beta_2 \Big(\sum_{x\not= i,j,k} [-H_{ijx} - H_{ikx}] \prob{I_x S_i I_j I_k} + H_{jkx}[\prob{I_x S_i S_j I_k} + \prob{I_x S_i I_j S_k}] \Big)}{mixed hyperedge infection}   + \\&
      \underbracket{\beta_2 \Big( \sum_{\substack{x<y \\ x,y\not= i,j,k}} -H_{ixy}\prob{I_x I_y S_i I_j I_k} + H_{jxy}\prob{S_i S_j I_k I_x I_y} + H_{kxy}\prob{S_i I_j S_k I_x I_y} \Big)}{external hyperedge infection}
\end{aligned}
\end{equation}
As with the pair-level equation, these terms are organized by interaction type. The first lines of equations~\eqref{eq:dSSI} and~\eqref{eq:dSII} contain the recovery terms followed by the \emph{internal} pairwise infection terms. The second lines contain the \emph{external} pairwise infection terms together with the \emph{internal} hyperedge infection term. The latter requires two infectious fixed nodes to activate the hyperedge $\{i,j,k\}$, so it appears only in equation~\eqref{eq:dSII} (where $I_j$ and 
$I_k$ are both infected) and vanishes in equation~\eqref{eq:dSSI} (where only $I_k$ is infected); it cannot appear at the pair level at all, since a hyperedge requires three fixed nodes. The third lines describe the \emph{mixed} hyperedge infection terms, and the fourth and final lines describe the purely \emph{external} hyperedge infection terms.

These equations are valid for any triplet $i, j, k$ regardless of the connectivity among the three nodes. 
The rest of the individual and pair terms can be obtained by
\begin{equation}
\begin{aligned}
    \prob{S_i} =& 1 - \prob{I_i}, \\
    \prob{I_i I_j} =& \prob{I_i} - \prob{I_i S_j}, \\
    \prob{S_i S_j} =& \prob{S_i} - \prob{S_i I_j}, 
\end{aligned}
\end{equation}
and the rest of the triplet terms can be obtained by
\begin{equation}
\begin{aligned}
    \prob{I_i I_j I_k} =& \prob{I_i} - \prob{I_i S_j I_k} - \prob{I_i I_j S_k} - \prob{I_i S_j S_k}, \\
    \prob{S_i S_j S_k} =& \prob{S_i} - \prob{S_i I_j I_k} - \prob{S_i S_j I_k} - \prob{S_i I_j S_k}.
\end{aligned}
\end{equation}

\subsection{Closure problem} \label{subsec:closure-problem}
The rate equations in~\eqref{eq:dI}--\eqref{eq:dSII} exhibit a hierarchical coupling: each equation at the $m$-node level introduces joint probabilities on $m+1$ nodes (for pairwise infections) and $m+2$ nodes (for hyperedge infections). Concretely, the single-node equation depends on pair and triplet probabilities; the pair equation depends on triplet and quadruplet probabilities; the triplet equations depend on quadruplet and quintuplet probabilities. In theory, this hierarchy extends until $m$ equals the network size $N$, making it analytically intractable and computationally impractical. 

To obtain a reasonably small and closed system of equations, we truncate the hierarchy at the level of triplets. This requires approximating all quadruplet $\prob{X_i X_j X_x X_y}$ and quintuplet probabilities $\prob{X_i X_j X_k X_x X_y}$ in terms of singles $\prob{X_i}$, pairs $\prob{X_i X_j}$, and triplets $\prob{X_i X_j X_k}$~\cite{Kuehn2016}. 

The challenge is that the appropriate closure depends on the local topology.  Consider two configurations of three nodes $i,j,k$ with pairwise edges only.  If all three are mutually connected (a triangle), then infection of node~$i$ can spread directly to both~$j$ and~$k$.  If instead only the pairs $\set{i,j}$ and $\set{j,k}$ are connected (a wedge), then infection of~$i$ can reach~$k$ only indirectly through~$j$, producing markedly different short-time dynamics.  The same principle holds for sub-graphs (we use sub-graph and motifs interchangeably) of any size: the topology of the motif spanned by a group of nodes strongly determines the local infection dynamics.

Moreover, the closure itself eliminates information about the external network.  The information that remains---and that plays a decisive role in the dynamics---is the local topology within the lower-order node groupings, characterized by its isomorphism class.  Joint probabilities on sub-graphs of different isomorphism classes therefore require separate approximations.

\subsection{Sub-graph isomorphism classes} \label{subsec:subgraph-isos}
We now classify the sub-graphs that arise on groups of nodes up to isomorphism.  Two sub-graphs are \emph{isomorphic} if there exists a bijection between their node sets under which every edge and hyperedge is preserved~\cite{diestel2017graph}; since isomorphic sub-graphs are topologically identical, they thus admit the same topological closure decomposition, making this the natural equivalence relation.

For a single node~$i$ there is trivially one isomorphism class.  For a pair $i,j$ there are two isomorphism classes: connected ($A_{ij} = 1$) and disconnected ($A_{ij} = 0$).

For triplets $i, j, k$, the combinatorial growth begins. Hyperedges are now relevant, as three is the minimum number of nodes required to form a hyperedge.  The hyperedge attribute has $2$ possibilities: present ($H_{ijk} = 1$, denoted~$\bigcirc$) or absent ($\bar{H}_{ijk} = 1$, denoted~$\otimes$).  At the pairwise level, there are $3$ potential edges, each present or absent, giving $2^3 = 8$ configurations.  However, the symmetry of three nodes means that configurations with the same number of edges are isomorphic, so we can classify by simply counting edges: three
($\triangle$), two ($\wedge$), one ($-$), or zero ($\bullet$). Combined with the hyperedge attribute, this gives $2 \times 4 = 8$ isomorphism classes for triplets.

\begin{figure}[hb!]
    \centering
    \includegraphics[width=0.5\linewidth]{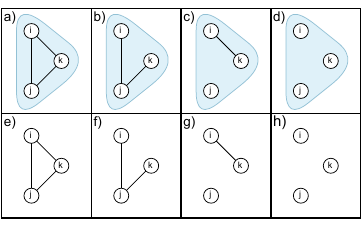}
    \caption{Diagrams of all possible $3$-node sub-graphs with pairwise edges and hyperedges. Hyperedges are shown in blue in panels (a)--(d).}
    \label{fig:triplets}
\end{figure}

\begin{notation}    
    To distinguish state probabilities belonging to different isomorphism classes, we encode the class as a superscript.  The local topologies described by the $\mathbf{A}$ and $\mathbf{H}$ factors fall into eight isomorphism classes: any given set of nodes $\set{i,j,k}$ belongs to one and only one class.  When the specific class is unknown, we treat $\prob{X_i X_j X_k}$ as a superposition over all classes,
    \[
      \prob{X_i X_j X_k}
      = \sum_{\substack{\tau \in \set{\bigcirc, \otimes},\\ \sigma \in \set{\triangle, \wedge, -, \bullet}}}
        f_{\tau\sigma}(\mathbf{A}, \mathbf{H})\,
        \prob{X_i X_j X_k}^{\tau\sigma},
    \]
    where $f_{\tau\sigma}(\mathbf{A}, \mathbf{H})$ is the product of adjacency factors on $\set{i,j,k}$ that selects class $\tau\sigma$.  The prefactors are mutually exclusive and are listed in Table~\ref{tab:isomorphism-class-notation}.  In the wedge and single-edge classes, one node is topologically distinguished (the wedge center or the isolated node respectively); we adopt the convention that the topologically distinguished node occupies the central (second) position in all state probabilities for the $\wedge$ and $-$ classes.  For example, $\prob{S_i I_j S_k}^{\otimes\wedge}$ denotes a wedge with center~$j$.
    
    \begin{table}[t]
    \centering
    \renewcommand{\arraystretch}{1.4}
    \caption{Isomorphism classes for triplets of nodes $i, j, k$.  The topologically distinguished node (the wedge center in the $\wedge$ class and the isolated node in the $-$ class) is positioned in the middle of the state probability, i.e. the second entry of $\prob{X_i X_j X_k}^{\bigcirc \wedge}$ implies node $j$ as the wedge center.}
    \label{tab:isomorphism-class-notation}
    \begin{tabular}{ l l l p{45.7446pt} }
      \hline
      \textbf{$\prob{X_i X_j X_k}$ Prefactor} &
        \textbf{Notation} & \textbf{Notes}  & \textbf{Figure~\ref{fig:triplets} panel}\\
      \hline
      $H_{ijk}\, A_{ij}\, A_{jk}\, A_{ik}$ &
        $\prob{X_i X_j X_k}^{\bigcirc\triangle}$ &
        Symmetric under all permutations  &(a)\\
      $H_{ijk}\, A_{ij}\, A_{jk}\, \bar{A}_{ik}$ &
        $\prob{X_i {X}_j X_k}^{\bigcirc\wedge}$ &
        Symmetric under $i \leftrightarrow k$; center is $j$  &(b)\\
      $H_{ijk}\, \bar{A}_{ij}\, \bar{A}_{jk}\, A_{ik}$ &
        $\prob{X_i {X}_j X_k}^{\bigcirc-}$ &
        Symmetric under $i \leftrightarrow k$; isolated node is $j$  &(c)\\
      $H_{ijk}\, \bar{A}_{ij}\, \bar{A}_{jk}\, \bar{A}_{ik}$ &
        $\prob{X_i X_j X_k}^{\bigcirc\,\bullet}$ &
        Symmetric under all permutations  &(d)\\[3pt]
      $\bar{H}_{ijk}\, A_{ij}\, A_{jk}\, A_{ik}$ &
        $\prob{X_i X_j X_k}^{\otimes\triangle}$ &
        Symmetric under all permutations  &(e)\\
      $\bar{H}_{ijk}\, A_{ij}\, A_{jk}\, \bar{A}_{ik}$ &
        $\prob{X_i {X}_j X_k}^{\otimes\wedge}$ &
        Symmetric under $i \leftrightarrow k$; center is $j$  &(f)\\
      $\bar{H}_{ijk}\, \bar{A}_{ij}\, \bar{A}_{jk}\, A_{ik}$ &
        $\prob{X_i {X}_j X_k}^{\otimes-}$ &
        Disconnected; isolated node is $j$&(g)\\
      $\bar{H}_{ijk}\, \bar{A}_{ij}\, \bar{A}_{jk}\, \bar{A}_{ik}$ &
        $\prob{X_i X_j X_k}^{\otimes\,\bullet}$ &
        Disconnected &(h)\\
      \hline 
      \textbf{$\prob{X_i X_j}$ Prefactor} &
        \textbf{Notation} & \textbf{Notes}  &\\
      \hline
      $A_{ij}$ &
        $\prob{X_i X_j}^{-}$ &
          &\\
      $\bar{A}_{ij}$ &
        $\prob{X_i X_j}^{\, \bullet}$ & 
         Can be ignored (disconnected) &\\
        \hline
    \end{tabular}
    \end{table}
\end{notation}

\begin{remark}\label{rem:shorthand-notation}
  When only part of the isomorphism class is known, we use partial superscripts to denote the known structure, treating the unknown attribute as a superposition over its possible values.  If the hyperedge status is fixed but the pairwise structure is unspecified,
  \[
    \prob{X_i X_j X_k}^{\tau}
    = \sum_{\sigma \in \set{\triangle, \wedge, -, \bullet}}
      f_{\tau\sigma}(\mathbf{A}, \mathbf{H})\,
      \prob{X_i X_j X_k}^{\tau\sigma};
  \]
  conversely, if the pairwise structure is fixed but the hyperedge status is unspecified,
  \[
    \prob{X_i X_j X_k}^{\sigma}
    = \sum_{\tau \in \set{\bigcirc, \otimes}}
      f_{\tau\sigma}(\mathbf{A}, \mathbf{H})\,
      \prob{X_i X_j X_k}^{\tau\sigma},
    \qquad \sigma \in \set{\triangle, \wedge},
  \]
  where $\sigma$ is restricted to $\triangle$ or $\wedge$ since the single-edge and empty classes ($-$ and $\bullet$) are not dynamically relevant.  Once the unknown prefactors are revealed, the mutual exclusivity of the $f_{\tau\sigma}$ selects a single surviving term.  Both shorthands arise naturally when solving the rate equations under structural constraints that specify only partial information on the adjacency structure.
\end{remark}

\begin{remark}
    Since the pair probability that appears in~\eqref{eq:dI} has an $A_{ix}$ prefactor, in practice it is much simpler to only keep track of $\prob{X_i X_j}^{-}$ and ignore $\prob{X_i X_j}^{\,\bullet}$ entirely.
\end{remark}

\subsubsection{Larger sub-graphs}\label{subsubsec:larger-classes}

For sub-graphs on $m$ nodes, the number of adjacency matrices/tensors is $\binom{m}{2} + \binom{m}{3}$ (potential edges plus potential hyperedges), giving $2^{\binom{m}{2} + \binom{m}{3}}$ configurations before accounting for isomorphisms.  For quadruplets ($m = 4$) this is $2^{10} = 1024$; for quintuplets ($m = 5$) it is $2^{20} \approx 10^6$. Even after collapsing isomorphic configurations, the number of distinct classes remains prohibitively large, and manually writing a separate closure for each is impractical. To handle this, we develop a closure operator in the next section, which decomposes any sub-graph into its constituent singles, pairs, and triplets, applying simple topology-dependent rules at each level.

\subsection{Closure Operator} \label{subsec:closure}
We develop a systematic method of generating closure approximations for quadruplet and quintuplet joint probabilities in terms of singles, pairs, and triplets. The closure method rests on two observations about how sub-graph topology governs the joint distribution over node states.

First, statistical dependence between node states is mediated by transmission paths.  If removing an edge---pairwise or hyperedge---from a sub-graph $\mathcal{S}=\set{i,j,\ldots}$ disconnects a node~$i$ from a set of nodes~$\mathcal{S}'=\mathcal{S} \setminus \set{i}$, then the evolution of $X_i$ is assumed to be independent of the states of all the nodes in $\mathcal{S}'$ and the joint probability factorizes:
\begin{equation}\label{eq:disconnection-factorisation}
  \prob{X_i X_{\mathcal{S}'}} = \prob{X_i} \cdot \prob{X_{\mathcal{S}'}}.
\end{equation}
Second, hyperedges represent irreducible three-body interactions.  Removing a pairwise edge from a hyperedge triplet changes the internal pairwise structure but does not destroy the three-node correlation; instead, the state probability changes into the corresponding isomorphism class variant. Without loss of generality, this can be expressed as:
\begin{equation}\label{eq:hyperedge-irreducible}
\begin{aligned}
    \prob{X_i X_j X_k}^{\bigcirc\triangle}
  \xrightarrow{\;\text{remove }  \set{i,j}\;} &
  \prob{X_i {X}_k X_j}^{\bigcirc\wedge}
  \quad \text{(demoted but not decomposed)}. \\ 
  \prob{X_i {X}_k X_j}^{\bigcirc\wedge}
  \xrightarrow{\;\text{remove }  \set{i,k}\;} &
  \prob{X_k {X}_i X_j}^{\bigcirc -} 
  \quad \text{(demoted but not decomposed)} \\
  \prob{X_k {X}_i X_j}^{\bigcirc-}
  \xrightarrow{\;\text{remove }  \set{i,k}\;} &
  \prob{X_i X_j X_k}^{\bigcirc  \, \bullet} 
  \quad \text{(demoted but not decomposed)} \\
\end{aligned}
\end{equation}
Recall that the position of the node labels $i,j,k$ changes to identify the topologically distinct node in the $\wedge$ and $-$ cases.

We proceed in three stages: (i) defining the maximally correlated approximation ansatz, (ii) tagging hyperedge triplets via a hyperedge operator to preserve triplet joint probability, and (iii) demoting (pairwise) correlations based on the presence/absence of edges via an independence operator. The composition of these operations defines the closure.

\subsubsection{Kirkwood superposition approximation} \label{subsubsec:kirkwood}
The Kirkwood superposition approximation provides an ansatz for the joint-probability of an $m$-state joint probability in terms of the product of $(m-1)$ state joint probabilities, which is a starting point for our approximation~\cite{kirkwood1935statistical, pearl2014probabilistic, matsuda2000physical}. 

\begin{definition}
    We define the \emph{Kirkwood superposition approximation} $\kappa \colon \binom{\mathcal{N}}{m} \to \mathbb{R}_{\geq 0}$ for a set of $m$ nodes $\mathcal{S} = \{x_1, x_2, \dots, x_m\} \in \binom{\mathcal{N}}{m}$ as 
    \begin{equation} \label{eq:kirkwood}
        \kappa \bigl(\mathcal{S}=(x_1, \ldots, x_m)\bigr) = \prod_{i=1}^{m-1} \Bigg[\prod_{\substack{s_i \subseteq \mathcal{S} \\ |s_i| = i}} \prob{s_i} \Bigg]^{(-1)^{m-1-i}},
    \end{equation}
    where $s_i$ is a subset of $\mathcal{S}$ of size $i$~\cite{matsuda2000physical}. The function $\kappa$ approximates the $m$-variable joint probability as a combination of $m'<m$-variable joint probabilities.  
\end{definition}

\begin{remark}
    The Kirkwood superposition approximation function is not a true probability density function because it is not guaranteed to be normalized, except in the case where all variables are statistically independent of each other.
\end{remark}

The Kirkwood superposition approximation alone is not sufficient to close our system of equations, as applying $\kappa$ to quintuplet terms from~\eqref{eq:dSII} and~\eqref{eq:dSSI} removes the quintuplet terms but generates new quadruplet terms. To obtain closure, we have to break down the quadruplet terms again, which we can do by iterating $\kappa$ again. Formally, we do this by introducing the truncated Kirkwood superposition approximation.
\begin{definition}
    For a set of $m$ nodes $\mathcal{S}$ with $m \geq 3$, the 
    \emph{truncated Kirkwood approximation} is
    \begin{equation} \label{eq:kirkwood-tilde}
        \tilde{\kappa}(\mathcal{S}) = 
        \Bigg(\prod_{i=1}^{m-2} \Bigg[\prod_{\substack{s \subseteq 
        \mathcal{S} \\ |s| = i}} \prob{s} 
        \Bigg]^{(-1)^{m-1-i}} \Bigg) 
        \times
        \Bigg[\prod_{\substack{s \subseteq \mathcal{S} \\ |s| = m-1}} 
        \kappa(s) \Bigg],
    \end{equation}
    where the $(m-1)$-node joint probabilities are replaced by their Kirkwood approximations~$\kappa$.
\end{definition}

\begin{remark}\label{rem:kirkwood-default-class}
  The Kirkwood approximation retains all pairwise correlations: every pair probability $\prob{X_i X_j}$ appears in the equations~\eqref{eq:kirkwood-quadruple-decomposed} and~\eqref{eq:kirkwood-quintuple-decomposed}. The resulting triplet expression therefore corresponds to the maximally correlated pairwise structure: a triangle.  Since the Kirkwood approximation carries no information about hyperedges at this stage, we assign every triplet emerging from $\kappa$ (or $\tilde{\kappa}$) the default isomorphism class $\otimes \triangle$: all three pairwise edges present, no hyperedge present.
\end{remark}

For our triadic closure specifically, we can close $4$-variable terms with $\kappa$ and $5$-variable terms with $\tilde \kappa$, such that these functions only contain up to $3$-variable terms:

\begin{equation} \label{eq:kirkwood-quadruple-decomposed}
\begin{aligned}
    \kappa \big( \set{ijkl} \big) \approx& \frac{\prob{ijk}^{\otimes\triangle} \prob{ijl}^{\otimes\triangle}  \prob{ikl}^{\otimes\triangle} \prob{jkl}^{\otimes\triangle}   }{\prob{ij} \prob{ik} \prob{il} \prob{jk} \prob{jl} \prob{kl}  } \prob{i} \prob{j} \prob{k} \prob{l},
    \end{aligned}
\end{equation}

\begin{equation} \label{eq:kirkwood-quintuple-decomposed}
\begin{aligned}
    &\tilde \kappa \big( \set{ijklm} \big) \\
    &\approx  \frac{ [\prob{ijk}^{\otimes\triangle} \prob{ijl}^{\otimes\triangle} \prob{ijm}^{\otimes\triangle} \prob{ikl}^{\otimes\triangle} \prob{ikm}^{\otimes\triangle} \prob{ilm}^{\otimes\triangle} \prob{jkl}^{\otimes\triangle} \prob{jkm}^{\otimes\triangle} \prob{jlm}^{\otimes\triangle} \prob{klm}^{\otimes\triangle}]^1}
    { [\prob{ij} \prob{ik} \prob{il} \prob{im} \prob{jk} \prob{jl} \prob{jm} \prob{kl} \prob{km} \prob{lm}]^2} 
    \\ 
    &\qquad  \times [\prob{i} \prob{j} \prob{k} \prob{l} \prob{m}]^3 ,
\end{aligned}
\end{equation}
where we have dropped the state labels $X$ for convenience and brevity. At this stage, every triplet carries the default class $\otimes\triangle$ (Remark~\ref{rem:kirkwood-default-class}), reflecting the maximal pairwise correlation assumed by the Kirkwood approximation.

\subsubsection{Hyperedge operator} \label{subsubsec:hyperedge-operator}
The Kirkwood approximation does not distinguish between triplets connected by a hyperedge and those connected by pairwise edges, or any combination of the two. However, as discussed in Subsection~\ref{subsec:closure-problem}, different isomorphism classes (determined by the hyperedge and pairwise connections) should be assigned to their own joint probability measures, as the correlation structures will yield different~dynamics. 

The hyperedge operator transforms the naive triplet probability $\prob{X_i X_j X_k}^{\otimes \triangle}$ into the hyperedge-class probabilities from Table~\ref{tab:isomorphism-class-notation}.

\begin{definition}\label{def:hyperedge-operator}
  Let $h = \set{ijk} \in \mathcal{H}$ be a hyperedge. The \emph{hyperedge operator} $\lambda_h$ acts on pseudo-probability expressions from the Kirkwood approximation by replacing a triplet probability $\prob{X_i X_j X_k}$ on the nodes $\set{i,j,k}$ with its $\bigcirc$-class counterpart:
  \[
    \lambda_{h=\set{ijk}}\colon \prob{X_i X_j X_k}^\triangle \mapsto \prob{X_i X_j X_k}^{\bigcirc \triangle}.
  \]
  If the triplet is already in a $\bigcirc$ class, $\lambda_h$ acts as the identity. 
\end{definition}
\begin{proposition}\label{prop:lambda-properties}
  The hyperedge operator has the following properties:
  \begin{enumerate}
    \item \emph{Idempotency:} $\lambda_h \circ \lambda_h = \lambda_h$.
    \item \emph{Commutativity:} $\lambda_h \circ \lambda_{h'} = \lambda_{h'} \circ \lambda_h$ for any two hyperedges $h, h' \in \mathcal{H}$.
  \end{enumerate}
\end{proposition}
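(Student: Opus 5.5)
The plan is to treat the pseudo-probability expressions produced by $\kappa$ and $\tilde\kappa$ as formal products of factors, namely single, pair and triplet probabilities raised to integer powers, and to regard $\lambda_h$ as an operator acting factor by factor. By Definition~\ref{def:hyperedge-operator}, $\lambda_h$ changes only those triplet factors whose node set is exactly $h=\set{ijk}$, relabelling their class superscript $\triangle$ (default $\otimes\triangle$, Remark~\ref{rem:kirkwood-default-class}) to $\bigcirc\triangle$. It leaves every other factor unchanged, including singles, pairs, triplets on other node sets, and triplets already carrying a $\bigcirc$ label. Because $\lambda_h$ acts multiplicatively, both properties reduce to checking them on a single factor $\prob{s}^{c}$, where $s$ is a node subset and $c$ is its class label. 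I would first record this reduction explicitly: $\lambda_h\bigl(\prod_r \prob{s_r}^{c_r\,e_r}\bigr)=\prod_r \lambda_h(\prob{s_r}^{c_r})^{e_r}$, with $e_r$ the integer exponents.

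For idempotency, I would split into cases on a single factor. If $s\neq h$ or $|s|\neq 3$, then $\lambda_h$ fixes the factor, so applying it twice also fixes it. If $s=h$ and the class is already $\bigcirc$, then $\lambda_h$ is the identity by definition. If $s=h$ and the class is $\otimes\triangle$, the first application yields $\bigcirc\triangle$, which lies in a $\bigcirc$ class, so the second application is the identity by the final sentence of the definition. In every case $\lambda_h\circ\lambda_h$ agrees with $\lambda_h$.

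For commutativity, the case $h=h'$ is trivial. For $h\neq h'$, the two hyperedges are distinct $3$-element sets. Any triplet factor on node set $s$ is affected by at most one of the two operators, since it can only be affected when $s=h$ or $s=h'$, and these cannot both hold. So on each factor, one operator acts as the identity, and both orders of composition give the same result: either the factor is unchanged, or it is relabelled once by whichever operator matches $s$. Multiplicativity then lifts this from single factors to whole expressions.

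The main obstacle is making precise what "acts on pseudo-probability expressions" means. If the expressions are treated as real numbers, the operator is not well defined. I would therefore state explicitly that expressions are elements of the free abelian group on labelled factors $\prob{s}^{c}$, and that $\lambda_h$ is the group endomorphism induced by the relabelling map on generators. Once this is fixed, both claims follow from idempotency and commutativity of the generator maps. A related subtlety is that factors whose class is already a demoted $\bigcirc$ variant (for example $\bigcirc\wedge$) must stay fixed, and the definition's identity clause guarantees this.
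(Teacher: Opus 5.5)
Your proof is correct and uses the same argument as the paper. Idempotency comes from the clause that $\lambda_h$ is the identity on triplets already in a $\bigcirc$ class, and commutativity comes from distinct hyperedges being distinct node sets, so each triplet factor is touched by at most one of $\lambda_h, \lambda_{h'}$; your added step of treating expressions as elements of the free abelian group on labelled factors, with $\lambda_h$ the induced endomorphism, is a useful tightening the paper leaves implicit, since it makes the operator well defined under the cancellations in $\tilde\kappa$.
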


\begin{proof}
  Idempotency is immediate: if the triplet is already tagged $\bigcirc$, then $\lambda_h$ acts as the identity. Commutativity between distinct hyperedge operators follows from the fact that hyperedges are defined entirely by the triplet of nodes that compose it and are independent of other hyperedges, thus the operator on $h$ has no effect on $h'$ and vice versa, leading to commutativity. 
\end{proof}

\subsubsection{Independence operator}\label{subsubsec:independence-operator}

The Kirkwood approximation treats all pair correlations as present. As established, when an edge $\set{ij}$ is absent ($A_{ij} = 0$), removing that transmission pathway weakens the correlation between $X_i$ and $X_j$. If removing the edge disconnects $i$ from $j$ within the local sub-graph, the factorization~\eqref{eq:disconnection-factorisation} applies. We first describe the effect on pairs and on non-hyperedge triplets, where removing enough edges leads to disconnection and hence factorization, and then extend to hyperedge triplets, where the irreducible three-body correlation~\eqref{eq:hyperedge-irreducible} prevents decomposition.

\begin{definition}\label{def:independence-operator}
  Let $\set{i,j}$ be a pair of nodes with $A_{ij} = 0$. The \emph{independence operator} $\Phi_{ij}$ acts on pseudo-probability expressions according to the following rules.

  \medskip
  \emph{Action on pairs.} The pair probability decomposes into a product of singles:
  \[
    \Phi_{ij}\colon \prob{X_i X_j} \mapsto \prob{X_i}\,\prob{X_j}.
  \]

  \emph{Action on non-hyperedge triplets ($\otimes$).} Removing one edge from a non-hyperedge triplet either demotes it to a lower class or decomposes it into lower-order products. Applied iteratively, one edge at a time, the full demotion chain is:
  \begin{align}
    \prob{X_i X_j X_k}^{\otimes\triangle}
      &\xrightarrow{\;\Phi_{ik}\;}
      \prob{X_i {X}_j X_k}^{\otimes\wedge},
      \label{eq:phi-tri-to-wedge} \\
    \prob{X_i {X}_j X_k}^{\otimes\wedge}
      &\xrightarrow{\;\Phi_{jk}\;}
      \prob{X_i X_j}\,\prob{X_k}.
      \label{eq:phi-wedge-to-edge} 
  \end{align}
  In~\eqref{eq:phi-tri-to-wedge}, removing $\set{ik}$ from the triangle leaves edges $\set{ij}$ and $\set{jk}$, forming a wedge with center~$j$. In~\eqref{eq:phi-wedge-to-edge}, removing $\set{jk}$ from the wedge disconnects node~$i$ from the remaining pair $\set{ij}$, and since there is no hyperedge to maintain the three-node correlation, the triplet decomposes. 

  \medskip
  \emph{Action on hyperedge triplets ($\bigcirc$).} Removing an edge from a hyperedge triplet demotes the pairwise structure exactly as in the $\otimes$ case, but the triplet \emph{never decomposes} --- the hyperedge preserves the three-node correlation regardless of how many pairwise edges are absent:
  \begin{align}
    \prob{X_i X_j X_k}^{\bigcirc\triangle}
      &\xrightarrow{\;\Phi_{ij}\;}
      \prob{X_i {X}_k X_j}^{\bigcirc\wedge},
      \label{eq:phi-h-tri-to-wedge} \\
    \prob{X_i {X}_k X_j}^{\bigcirc\wedge}
      &\xrightarrow{\;\Phi_{ik}\;}
      \prob{X_j {X}_i X_k}^{\bigcirc-},
      \label{eq:phi-h-wedge-to-edge} \\
    \prob{X_i {X}_j X_k}^{\bigcirc-}
      &\xrightarrow{\;\Phi_{ik}\;}
      \prob{X_i X_j X_k}^{\bigcirc\,\bullet}.
      \label{eq:phi-h-edge-to-empty}
  \end{align}
  The demotion chain mirrors the $\otimes$ case step by step, but at every stage the result remains a triplet probability tagged with~$\bigcirc$. As before, if the edge being removed is already absent, the operator has no effect.
\end{definition}

The contrast between the two cases is the central feature of the closure: hyperedge triplets are demoted in their pairwise structure but survive as correlated units, while non-hyperedge triplets decompose into products of lower-order terms once enough edges are removed. Therefore, the hyperedge operator must be applied before the independence operator --- the $\bigcirc$/$\otimes$ distinction must be established before the independence rules can act correctly.

\begin{proposition}\label{prop:phi-properties}
  The independence operator has the following properties:
  \begin{enumerate}
    \item \emph{Idempotency:} $\Phi_{ij} \circ \Phi_{ij} = \Phi_{ij}$.
    \item \emph{Commutativity:} $\Phi_{ij} \circ \Phi_{kl} = \Phi_{kl} \circ \Phi_{ij}$ for any two pairs $\set{i,j}$ and $\set{k,l}$.
  \end{enumerate}
\end{proposition}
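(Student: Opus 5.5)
The plan is to treat $\Phi_{ij}$ as an operator on the monomials produced by $\tilde\kappa$ and $\lambda_h$. Each such expression is a product (with integer exponents) of factors of three kinds: singles $\prob{X_a}$, pairs $\prob{X_a X_b}$, and tagged triplets $\prob{X_a X_b X_c}^{\tau\sigma}$. Both $\Phi_{ij}$ and $\Phi_{kl}$ act factor by factor and respect products and powers. It therefore suffices to verify idempotency and commutativity on a single factor, and then extend multiplicatively. The one thing to check for the extension is that the image of a factor is again a product of admissible factors. Singles are fixed by every $\Phi$.

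The key step is to encode each factor by its node set together with a state. For a pair, the state is ``correlated'' or ``decomposed''. For a triplet, the state is the set $E$ of pairwise edges currently present on its three nodes, plus the tag $\tau$. I will show that, for $\tau=\bigcirc$, $\Phi_{ij}$ acts as $E\mapsto E\setminus\set{ij}$. The demotion chains \eqref{eq:phi-h-tri-to-wedge}--\eqref{eq:phi-h-edge-to-empty} list exactly the isomorphism class of the edge set remaining after removal, with positional relabelling. The rule ``no effect if the edge is already absent'' covers the case $\set{ij}\not\subseteq$ node set or $\set{ij}\notin E$.

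For $\tau=\otimes$, the same description holds for $|E|\ge 2$. Once $|E|$ would drop to $1$, the factor decomposes into the surviving edge's pair times a single, by \eqref{eq:phi-wedge-to-edge}. Later operators then act on the pair via the pair rule. The composite effect of any sequence of $\Phi$'s on a factor is therefore the following. Remove the set $R$ of targeted edges. Then factorise into connected components: a hyperedge triplet always stays one piece, a pair becomes singles if its edge is removed, and a non-hyperedge triplet splits according to its remaining edges. This final form depends only on the set $R$, not on order or multiplicity.

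From this, idempotency follows because $R\cup\set{ij}=R$ when $\set{ij}\in R$. Commutativity follows because $R\cup\set{ij}\cup\set{kl}$ is symmetric in the two pairs. Disjoint pairs, or a pair not contained in the factor's node set, are the trivial cases. The genuine cases are two distinct edges of the same triplet.

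The main obstacle is the $\otimes$ case, where decomposition occurs. Take a triangle $\prob{X_iX_jX_k}^{\otimes\triangle}$ and apply $\Phi_{ik}$ and $\Phi_{jk}$ in both orders. One order gives a wedge centred at $j$ and then $\prob{X_iX_j}\prob{X_k}$. The other gives a wedge centred at $i$ and then also $\prob{X_iX_j}\prob{X_k}$. I would check this explicitly for the three pairs of distinct edges, using the symmetry $i\leftrightarrow k$ of wedge classes from Table~\ref{tab:isomorphism-class-notation}.

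Applying all three removals on a $\otimes$ triangle must yield $\prob{X_i}\prob{X_j}\prob{X_k}$ in every order. This needs the convention that once the triplet decomposes, subsequent $\Phi$'s act on the resulting pair by the pair rule. I would state that convention explicitly, since it is what makes the ``connected components'' description valid.
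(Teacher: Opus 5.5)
Your proposal is correct and follows the same idea as the paper's proof: the outcome of applying independence operators depends only on which edges have been removed, so a second application of $\Phi_{ij}$ finds the edge already absent, and two removals in either order give the same configuration. You make this rigorous by encoding each factor by its current edge set and checking explicitly the $\otimes$ decomposition case, including the convention that later $\Phi$'s act on the split-off pair by the pair rule; the paper's short argument only asserts this step.
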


\begin{proof}
  Idempotency follows from the fact that each rule either decomposes a term (pair~$\to$ product of singles, non-hyperedge triple~$\to$ lower-order product) or demotes the pairwise structure of a hyperedge triple. In either case, applying $\Phi_{ij}$ a second time encounters a term in which the edge $\set{i,j}$ is already treated as absent, so the operator acts as the identity.

  Commutativity between independence operators follows from the fact that $\Phi_{ij}$ modifies only terms involving the pair $\set{i,j}$. When $\set{i,j} \cap \set{k,l} = \emptyset$, the operators act on disjoint terms. When the pairs share a node, commutativity holds because the transformation rules depend only on which edges are present in the current state of the expression, and removing two edges in either order produces the same final configuration.
\end{proof}
\subsubsection{Closure map}\label{subsubsec:closure-map}

We now compose the operators to define the full closure. For a set of nodes $\mathcal{S}$ with local edge set $\mathcal{E}_\mathcal{S} = \set{e \in \binom{\mathcal{S}}{2} : A_e = 1}$ and local hyperedge set $\mathcal{H}_\mathcal{S} = \set{h \in \binom{\mathcal{S}}{3} : H_h = 1}$, define the set of absent edges $\bar{\mathcal{E}}_\mathcal{S} = \binom{\mathcal{S}}{2} \setminus \mathcal{E}_\mathcal{S}$.

\begin{definition}\label{def:closure-map}
  The \emph{closure map} for a set of nodes $\mathcal{S}$ with $|\mathcal{S}| \geq 4$ is
  \begin{equation}\label{eq:closure-map}
    \mathcal{C}_\mathcal{S,  E_S, H_S} = \Biggl(\,\prod_{e \in \bar{\mathcal{E}}_\mathcal{S}} \Phi_e \Biggr) \circ \Biggl(\,\prod_{h \in \mathcal{H}_\mathcal{S}} \lambda_h \Biggr),
  \end{equation}
  where the products denote composition. By Propositions~\ref{prop:lambda-properties} and~\ref{prop:phi-properties}, all $\lambda$ operators commute internally, and all $\Phi$ operators commute internally, but $\Phi$ and $\lambda$ do not necessarily commute between each other; $\lambda$ should be applied first.
\end{definition}

\begin{remark}\label{rem:closure-algorithm}
  Equation~\eqref{eq:closure-map} can be implemented as a simple algorithm: (1)~construct the Kirkwood approximation $\kappa$ or $\tilde{\kappa}$; (2)~for each hyperedge $h \subseteq \mathcal{S}$, tag the corresponding triplet with its $\bigcirc \triangle$-class annotation via $\lambda_h$; (3)~for each absent edge $e \notin \mathcal{E}_\mathcal{S}$, apply the independence transformation rules from Definition~\ref{def:independence-operator}. The commutativity and idempotency properties guarantee that the result is unique and independent of the internal ordering of hyperedge processing or edge processing.
\end{remark}

\subsubsection{Closure of rate equations}\label{subsubsec:closure-examples}

Since the closure depends on the local network topology, we make this dependence explicit by inserting factors of $1 = A_{xy} + \bar{A}_{xy}$ for each pair and $1 = H_{xyz} + \bar{H}_{xyz}$ for each triplet within the group.  For a quadruplet probability, this gives
\begin{equation}\label{eq:closure-4-expand}
  \prob{X_i X_j X_k X_l} = \Biggl( \prod_{\set{x,y} \in \binom{\set{i,j,k,l}}{2}} \bigl[A_{xy} + \bar{A}_{xy}\bigr] \Biggr) \Biggl( \prod_{\set{x,y,z} \in \binom{\set{i,j,k,l}}{3}} \bigl[H_{xyz} + \bar{H}_{xyz}\bigr] \Biggr) \prob{X_i X_j X_k X_l}.
\end{equation}
Each factor acts as a filter to select whether a given edge or hyperedge is present or absent.  Expanding the products yields $2^{\binom{4}{2}} \times 2^{\binom{4}{3}}$ terms, one for each possible topology on the four nodes. For each hyperedge that is present, we apply the hyperedge operator~$\lambda$; for each edge that is absent, we apply the independence operator~$\Phi$.  Replacing the joint probability by its Kirkwood approximation gives the closure:
\begin{multline}\label{eq:closure-4}
  \prob{X_i X_j X_k X_l} \approx \Biggl( \prod_{\set{x,y} \in \binom{\set{i,j,k,l}}{2}} \bigl[A_{xy} + \bar{A}_{xy}\,\Phi_{xy}\bigr] \Biggr) \\
  \times \Biggl( \prod_{\set{x,y,z} \in \binom{\set{i,j,k,l}}{3}} \bigl[H_{xyz}\,\lambda_{xyz} + \bar{H}_{xyz}\bigr] \Biggr) \bigl[\kappa(\set{X_i,X_j,X_k,X_l})\bigr].
\end{multline}
For quintuplet probabilities, the same construction applies with the iterated Kirkwood approximation:
\begin{multline}\label{eq:closure-5}
  \prob{X_i X_j X_k X_l X_m} \approx \Biggl( \prod_{\set{x,y} \in \binom{\set{i,j,k,l,m}}{2}} \bigl[A_{xy} + \bar{A}_{xy}\,\Phi_{xy}\bigr] \Biggr) \\
  \times \Biggl( \prod_{\set{x,y,z} \in \binom{\set{i,j,k,l,m}}{3}} \bigl[H_{xyz}\,\lambda_{xyz} + \bar{H}_{xyz}\bigr] \Biggr) \bigl[\tilde{\kappa}(\set{X_i,X_j,X_k,X_l,X_m})\bigr].
\end{multline}
The rate equations are then closed by substituting equations~\eqref{eq:closure-4} and~\eqref{eq:closure-5} into the quadruplet and quintuplet terms of~\eqref{eq:dSI}--\eqref{eq:dSII}.  In practice, when the left-hand side of a rate equation fixes certain adjacency values (e.g.\ $A_{ij} = 1$ or $H_{ijk} = 1$), many terms in the product vanish, substantially reducing the number of surviving configurations.  Further simplification may arise from structural constraints on the network; for instance, if all hyperedges are simplices ($H_{ijk} = 1 \implies A_{ij} = A_{jk} = A_{ik} = 1$), the number of non-trivial closures is reduced further.

We illustrate the closure on a quadruplet configuration.


\textit{Example 1: Hyperedge with partial edges.}  Let $A_{ij} = A_{ik} = A_{kl} = 1$, $A_{jk} = A_{il} = A_{jl} = 0$, and $H_{ijk} = 1$ (no other hyperedges). Examining all the triples:
\begin{itemize}
    \item $\set{i,j,k}$: Part of a hyperedge and wedge with center $i$ $\rightarrow \bigcirc \wedge$.
    \item $\set{i,k,l}$: wedge with $k$ in the center $\rightarrow \otimes \wedge$.
    \item $\set{j,k,l}$: edge with node $j$ isolated from $k,l$ $\rightarrow$ $\prob{X_k X_l} \prob{X_j}$.
    \item $\set{i,j,l}$: edge with node $l$ isolated from $i,j$ $\rightarrow$ $\prob{X_i X_j} \prob{X_l}$.
\end{itemize}
After factorizing existing pairs and cancelling terms, we obtain:
\begin{equation}\label{eq:closure-example-2-result}
    \mathcal{C}_{\mathcal{S, \bar{E}, H}}\bigl[\kappa(\set{X_i,X_j,X_k,X_l})\bigr]
  = \frac{\prob{X_j {X}_i X_k}^{\bigcirc\wedge}\,\prob{X_i{X}_k X_l}^{\otimes\wedge}}{\prob{X_i X_k}}.
\end{equation}
The hyperedge triplet survives as $\bigcirc\wedge$ despite the absent edge $\set{j,k}$, while the non-hyperedge triplet $\set{i,k,l}$ survives as $\otimes\wedge$ because it retains two edges.  The remaining triplets decomposed and canceled with pair and single factors.
\begin{figure}[h]
    \centering
    \includegraphics[width=0.18\linewidth]{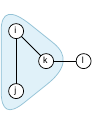}
    \caption{Example of 4 nodes with a hyperedge $\set{i,j,k}$ and edges $\set{i,j}$, $\set{i,k}$ and $\set{k,l}$. }
    \label{fig:4-node-example}
\end{figure}

\section{Canonical closure approximations} \label{sec:canonical-closures}
We derive several canonical closure approximations commonly used in the existing literature using our closure operator framework. The key insight is that closure expressions in the literature are often limited to well-known or simple motifs, but our closure operator framework permits a closure of \emph{any} motif configuration provided its topological structure is known.

\subsection{Triplet closure} \label{subsec:triplet-closure}
While our general framework described by equations~\eqref{eq:dI}--\eqref{eq:dSII} explicitly tracks triplets in all possible configurations (see Table~\ref{tab:isomorphism-class-notation}), many existing models close wedges at the level of pairs. We derive the closures of triangles and wedges in this section for reference.

\begin{lemma} \label{lem:triplet-closure}
    When closing at the level of pairs, the joint probability of a \emph{connected} triplet $\prob{X_i X_j X_k}$ can be approximated as
    \begin{equation} \label{eq:triplet-closure-lemma}
        \prob{X_i X_j X_k} \approx 
        \begin{cases}
            \dfrac{\prob{X_i X_j} \prob{X_j X_k} \prob{X_i X_k}}{\prob{X_i} \prob{X_j} \prob{X_k}} 
            \qquad &A_{ij} = A_{jk} = A_{ik} = 1 \textrm{ (triangle)},
            \\[10pt]
            \dfrac{\prob{X_i X_j} \prob{X_j X_k}}{\prob{X_j}} 
            \qquad &A_{ij} = A_{jk} = 1,\; \bar{A}_{ik} = 1 \text{ (wedge with center $j$)},
        \end{cases}
    \end{equation}
    where, without loss of generality, the edge $\set{ik}$ is taken as the variable edge.
\end{lemma}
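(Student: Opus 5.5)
The plan is to obtain both cases by specialising the closure machinery of Subsection~\ref{subsec:closure} to $m=3$ nodes, with $\mathcal{S}=\set{i,j,k}$ and no hyperedge present. For the triangle case I will apply the Kirkwood superposition approximation~\eqref{eq:kirkwood} directly with $m=3$. The exponent $(-1)^{m-1-i}$ equals $+1$ for $i=2$ and $-1$ for $i=1$, so
\[
  \kappa(\set{ijk}) = \frac{\prob{X_i X_j}\,\prob{X_j X_k}\,\prob{X_i X_k}}{\prob{X_i}\,\prob{X_j}\,\prob{X_k}} .
\]
When $A_{ij}=A_{jk}=A_{ik}=1$ we have $\bar{\mathcal{E}}_\mathcal{S}=\emptyset$ and $\mathcal{H}_\mathcal{S}=\emptyset$ (pairwise closure, Remark~\ref{rem:kirkwood-default-class}). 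The closure map~\eqref{eq:closure-map} is therefore the identity, and the first case of~\eqref{eq:triplet-closure-lemma} follows at once. This matches Remark~\ref{rem:kirkwood-default-class}, which says Kirkwood retains every pair correlation and so encodes the triangle.

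For the wedge case with centre $j$, I will start from the same $\kappa(\set{ijk})$ and apply $\Phi_{ik}$, since $\set{ik}$ is the only absent edge. By Definition~\ref{def:independence-operator}, the action on pairs gives $\prob{X_i X_k}\mapsto\prob{X_i}\prob{X_k}$, while the pairs $\set{ij}$ and $\set{jk}$ are untouched. Substituting gives
\[
  \frac{\prob{X_i X_j}\,\prob{X_j X_k}\,\prob{X_i}\prob{X_k}}{\prob{X_i}\prob{X_j}\prob{X_k}}
  = \frac{\prob{X_i X_j}\,\prob{X_j X_k}}{\prob{X_j}} ,
\]
which is the second case.

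The step I expect to be the main obstacle is justifying that the closure map, which Definition~\ref{def:closure-map} states only for $|\mathcal{S}|\ge 4$, may be used for a triplet when closing at pair level. In that setting the triplet is the object being approximated, not an input, so $\Phi$ acts on the pair factors of $\kappa$ only. I will argue this in two ways.

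\begin{itemize}
\item \emph{Consistency with the triplet demotion chain.} Applying~\eqref{eq:phi-tri-to-wedge} and then treating the wedge as if a further edge were removed would give a product of the form $\prob{X_iX_j}\prob{X_k}$. The wedge formula above is the conditional-independence analogue: $X_i$ and $X_k$ are independent given $X_j$. This agrees with the disconnection principle~\eqref{eq:disconnection-factorisation}, because removing $j$ disconnects $i$ from $k$.
\item \emph{Sanity check by marginalisation.} Summing the wedge expression over $X_i$ recovers $\prob{X_j X_k}$ exactly. Hence the wedge closure is a genuine normalised probability, in contrast to the triangle case, where $\kappa$ need not be normalised.
\end{itemize}

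Finally, I will note that the result is unchanged if the variable edge is chosen differently. This follows by relabelling, using the symmetry conventions of Table~\ref{tab:isomorphism-class-notation}.
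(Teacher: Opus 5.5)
Your proof is correct and follows the paper's own argument: the triangle case is the $m=3$ Kirkwood approximation $\kappa(\set{ijk})$, and the wedge case comes from applying $\Phi_{ik}$ to that expression, so that $\prob{X_iX_k}\mapsto\prob{X_i}\prob{X_k}$ cancels against the singles. The paper simply treats the lemma as a direct application of $\kappa$ and $\Phi_{ik}$, so your extra justifications are not needed: the $|\mathcal{S}|\ge 4$ caveat, the appeal to the disconnection principle (which strictly concerns full rather than conditional independence, so it is only heuristic here), and the normalisation check.
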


\begin{proof}
    The triangle closure is obtained directly from the Kirkwood approximation:
    \[
        \kappa(\set{X_i, X_j, X_k}) = \frac{\prob{X_i X_j} \prob{X_j X_k} \prob{X_i X_k}}{\prob{X_i} \prob{X_j} \prob{X_k}}.
    \]
    The wedge closure follows by applying the independence operator $\Phi_{ik}$ to the triangle expression:
    \[
        \Phi_{ik} \Bigg[ \frac{\prob{X_i X_j} \prob{X_j X_k}\prob{X_i X_k}}{\prob{X_i} \prob{X_j} \prob{X_k}} \Bigg] = \frac{\prob{X_i X_j} \prob{X_j X_k} \prob{X_i}\prob{X_k}}{\prob{X_i} \prob{X_j} \prob{X_k}} = \frac{\prob{X_i X_j} \prob{X_j X_k}}{\prob{X_j}}.
    \]
\end{proof}

\subsection{Pendant closure} \label{subsec:pendant-closure}
A common four-node closure is the \emph{pendant} (or \emph{wine-glass}) closure. The motif consists of a triplet core $\set{i,j,k}$---either a triangle or simplex---connected to an external node~$l$ via a single pairwise edge from one of the core nodes.

\begin{lemma} \label{lem:pendant-closure}
    Consider the pendant motif on nodes $i,j,k,l$, where the core $\set{i,j,k}$ forms a triangle or simplex and the external node~$l$ is connected to the core via the edge $\set{kl}$ only, with $\bar{A}_{il} = \bar{A}_{jl} = 1$. The triadic closure approximation is
    \begin{equation} \label{eq:pendant-closure}
        \prob{X_i X_j X_k X_l} \approx \frac{\prob{X_i X_j X_k}^\sigma \prob{X_i X_k X_l}^{\otimes \wedge} \prob{X_j X_k X_l}^{\otimes \wedge}}{\prob{X_i X_k} \prob{X_j X_k} \prob{X_k X_l}} \prob{X_k},
    \end{equation}
    for $\sigma \in \set{\bigcirc\triangle, \otimes \triangle}$, where the choice of pivotal node~$k$ and attachment edge $\set{kl}$ is without loss of generality.
\end{lemma}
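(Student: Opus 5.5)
We apply the closure map of Definition~\ref{def:closure-map} to the four-node Kirkwood approximation~\eqref{eq:kirkwood-quadruple-decomposed}, following the algorithm of Remark~\ref{rem:closure-algorithm}. The local structure is as follows. The present edges are $\set{ij},\set{ik},\set{jk},\set{kl}$ and the absent edges are $\set{il},\set{jl}$. In the simplex case the only hyperedge is $H_{ijk}=1$; in the triangle case there are none. Since no hyperedge can contain $l$ (the statement fixes the core as the only candidate), the triplets $\set{ijl},\set{ikl},\set{jkl}$ are all of class $\otimes$.

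We start from $\kappa(\set{ijkl})$, which has numerator $\prob{ijk}\prob{ijl}\prob{ikl}\prob{jkl}\,\prob{i}\prob{j}\prob{k}\prob{l}$ and denominator the product of all six pair probabilities, with every triplet in class $\otimes\triangle$. The first step applies $\lambda_{ijk}$ in the simplex case. This turns $\prob{ijk}^{\otimes\triangle}$ into $\prob{ijk}^{\bigcirc\triangle}$; in the triangle case the factor is left unchanged, and both cases are recorded as $\prob{ijk}^\sigma$. Because $\set{ij},\set{ik},\set{jk}$ are all present, no $\Phi$ touches this factor afterwards.

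The second step applies $\Phi_{il}$ and $\Phi_{jl}$ term by term; Proposition~\ref{prop:phi-properties} makes the order irrelevant. The effects on each factor are:
\begin{itemize}
\item[] $\prob{ikl}^{\otimes\triangle}$ loses only $\set{il}$, so by~\eqref{eq:phi-tri-to-wedge} it becomes the wedge $\prob{X_iX_kX_l}^{\otimes\wedge}$ with center $k$.
\item[] $\prob{jkl}^{\otimes\triangle}$ similarly becomes $\prob{X_jX_kX_l}^{\otimes\wedge}$.
\item[] $\prob{ijl}^{\otimes\triangle}$ loses both $\set{il}$ and $\set{jl}$. It first becomes a wedge with center $j$ (or $i$), and then by~\eqref{eq:phi-wedge-to-edge} decomposes to $\prob{X_iX_j}\prob{X_l}$.
\item[] In the denominator, $\prob{il}\mapsto\prob{i}\prob{l}$ and $\prob{jl}\mapsto\prob{j}\prob{l}$.
\end{itemize}

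The final step is cancellation. The numerator is
$\prob{ijk}^\sigma\prob{ikl}^{\otimes\wedge}\prob{jkl}^{\otimes\wedge}\prob{ij}\prob{l}\cdot\prob{i}\prob{j}\prob{k}\prob{l}$
and the denominator is
$\prob{ij}\prob{ik}\prob{jk}\prob{kl}\cdot\prob{i}\prob{l}\prob{j}\prob{l}$.
The factors $\prob{ij}$, $\prob{i}$, $\prob{j}$ and both copies of $\prob{l}$ cancel. What remains is exactly~\eqref{eq:pendant-closure}, with the single leftover $\prob{X_k}$ in the numerator. The choice of $k$ and $l$ is without loss of generality by the relabelling symmetry of $\kappa$ and of the operator rules.

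The main delicate point is the double demotion of $\prob{ijl}$. Here the wedge produced by the first $\Phi$ must have the right center for~\eqref{eq:phi-wedge-to-edge} to yield $\prob{X_iX_j}\prob{X_l}$ rather than some other split. I would check both orders explicitly to confirm that they agree, as Proposition~\ref{prop:phi-properties} asserts: removing $\set{il}$ first gives center $j$, and removing $\set{jl}$ then isolates $l$. Everything else is bookkeeping of the exponents in~\eqref{eq:kirkwood-quadruple-decomposed}.
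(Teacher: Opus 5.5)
Your proposal is correct and follows essentially the same route as the paper. Both start from the Kirkwood expansion, demote $\prob{X_iX_kX_l}$ and $\prob{X_jX_kX_l}$ to wedges centred at $k$, fully decompose $\prob{X_iX_jX_l}$ into $\prob{X_iX_j}\prob{X_l}$, factorise the two absent pairs, and cancel. The only cosmetic difference is the order of operators: you apply $\lambda_{\set{ijk}}$ first, as prescribed by Definition~\ref{def:closure-map}, whereas the paper applies $\Phi_{il},\Phi_{jl}$ first, justifying this because they do not touch the core triplet and so commute with $\lambda_{\set{ijk}}$.
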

\begin{proof}
    Applying the Kirkwood approximation to the four-node joint probability yields
    \begin{multline*}
        \kappa(\set{X_i, X_j, X_k, X_l}) = 
        \frac{
        \prob{X_i X_j X_k}^{\otimes \triangle}\,
        \prob{X_i X_j X_l}^{\otimes \triangle}\,
        \prob{X_i X_k X_l}^{\otimes \triangle}\,
        \prob{X_j X_k X_l}^{\otimes \triangle}
        }{
        \prob{X_i X_j}\,
        \prob{X_i X_k}\,
        \prob{X_i X_l}\,
        \prob{X_j X_k}\,
        \prob{X_j X_l}\,
        \prob{X_k X_l}
        } \\
        \times\;\prob{X_i}\, \prob{X_j}\, \prob{X_k}\, \prob{X_l}.
    \end{multline*}
    Since the absent edges $\set{il}$ and $\set{jl}$ do not involve any pair within the core $\set{i,j,k}$, the independence operators $\Phi_{il}$ and $\Phi_{jl}$ do not act on $\prob{X_i X_j X_k}^{\otimes\triangle}$ and therefore commute with the hyperedge operator $\lambda_{\set{ijk}}$. We apply the independence operators first.

    The triplet $\prob{X_i X_j X_l}^{\otimes \triangle}$ loses two edges and decomposes: $\prob{X_i X_j X_l}^{\otimes \triangle} \xrightarrow{\Phi_{il}\,\Phi_{jl}} \prob{X_i X_j}\,\prob{X_l}$. The triplets $\prob{X_i X_k X_l}^{\otimes \triangle}$ and $\prob{X_j X_k X_l}^{\otimes \triangle}$ each lose one edge and are demoted to wedges centered at~$k$. The pairs $\prob{X_i X_l}$ and $\prob{X_j X_l}$ decompose into products of singles. After cancellation, the expression reduces to
    \[
        \frac{
        \prob{X_i X_j X_k}^{\otimes \triangle}\, 
        \prob{X_i X_k X_l}^{\otimes \wedge}\, 
        \prob{X_j X_k X_l}^{\otimes \wedge}
        }{
        \prob{X_i X_k}\, \prob{X_j X_k}\, \prob{X_k X_l}
        }\, \prob{X_k}.
    \]
    If the hyperedge $\set{i,j,k}$ is present, applying the hyperedge operator $\lambda_{\set{ijk}}$ tags the core triplet: $\prob{X_i X_j X_k}^{\otimes \triangle} \to \prob{X_i X_j X_k}^{\bigcirc \triangle}$. Since $\lambda$ acts only on this factor, the rest of the expression is unchanged, giving the result for both values of~$\sigma$.
\end{proof}

\begin{remark} \label{rem:pendant-with-wedge-closure}
    When used in conjunction with the wedge closure (Lemma~\ref{lem:triplet-closure}), the pendant closure~\eqref{eq:pendant-closure} further reduces to 
    \[
        \prob{X_i X_j X_k X_l} \approx \frac{\prob{X_i X_j X_k}^\sigma \prob{X_k X_l}}{\prob{X_k}}
    \]
    for $\sigma \in \set{\bigcirc \triangle, \otimes \triangle}$. 
\end{remark}

\begin{remark} \label{rem:pendant-hyperedge-only}
    When the core $\set{i,j,k}$ carries a hyperedge but no pairwise edges ($H_{ijk} = 1$, $\bar{A}_{ij} = \bar{A}_{ik} = \bar{A}_{jk} = 1$), the independence operators $\Phi_{ij}$, $\Phi_{ik}$, and $\Phi_{jk}$ demote the core triplet through the chain $\bigcirc\triangle \to \bigcirc\wedge \to \bigcirc{-} \to \bigcirc\bullet$ without decomposing it (Definition~\ref{def:independence-operator}), while the cross-triplets $\prob{X_i X_k X_l}$ and $\prob{X_j X_k X_l}$ fully decompose. The pendant closure then reduces to
    \[
        \prob{X_i X_j X_k X_l} \approx \frac{\prob{X_i X_j X_k}^{\bigcirc \bullet}\, \prob{X_k X_l}}{\prob{X_k}}.
    \]
    This has the same form as the wedge-reduced pendant closure (Remark~\ref{rem:pendant-with-wedge-closure}), confirming that the pendant structure $\prob{X_i X_j X_k}^\sigma\,\prob{X_k X_l}/\prob{X_k}$ is preserved for $\sigma \in \set{\bigcirc\triangle,\, \otimes\triangle,\, \bigcirc\bullet}$.
\end{remark}

\subsection{Bowtie closure} \label{subsec:bowtie-closure}
Another common closure involving two overlapping triplets is the \emph{bowtie} closure. The motif consists of two triplet cores $\set{i,j,k}$ and $\set{k,l,m}$ (each a triangle or simplex) sharing a single pivotal node~$k$.

\begin{lemma} \label{lem:bowtie-closure}
    Consider the bowtie motif on nodes $i,j,k,l,m$, where $\set{i,j,k}$ and $\set{k,l,m}$ each form a triangle or simplex, sharing only the pivotal node~$k$, with $\bar{A}_{il} = \bar{A}_{im} = \bar{A}_{jl} = \bar{A}_{jm} = 1$. The triadic closure approximation is
    \begin{equation} \label{eq:bowtie-closure}
    \begin{aligned}
        \prob{X_i X_j X_k X_l X_m} \approx{}& 
        \frac{
        \prob{X_i X_j X_k}^\sigma\, \prob{X_k X_l X_m}^{\sigma'}
        }{
        \prob{X_i X_k}^2\,
        \prob{X_j X_k}^2\,
        \prob{X_k X_l}^2\,
        \prob{X_k X_m}^2
        } \\
        &\times\;
        \prob{X_i X_k X_l}^{\otimes \wedge}\,
        \prob{X_i X_k X_m}^{\otimes \wedge}\,
        \prob{X_j X_k X_l}^{\otimes \wedge}\,
        \prob{X_j X_k X_m}^{\otimes \wedge}\,
        \prob{X_k}^3,
    \end{aligned}
    \end{equation}
    for $\sigma, \sigma' \in \set{\bigcirc \triangle, \otimes \triangle}$.
\end{lemma}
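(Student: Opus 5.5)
The plan is to reproduce the proof of Lemma~\ref{lem:pendant-closure}, starting from the iterated Kirkwood approximation~\eqref{eq:kirkwood-quintuple-decomposed} for the five nodes $\set{i,j,k,l,m}$. That expression has three parts: all ten triplets in the default class $\otimes\triangle$ in the numerator, the ten pair probabilities squared in the denominator, and the five singles cubed in the numerator. The absent edges are exactly $\set{il},\set{im},\set{jl},\set{jm}$. None of them lies inside either core $\set{i,j,k}$ or $\set{k,l,m}$. So, as in the pendant case, the operators $\Phi_{il},\Phi_{im},\Phi_{jl},\Phi_{jm}$ do not act on the two core triplets and therefore commute with $\lambda_{\set{ijk}}$ and $\lambda_{\set{klm}}$. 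I will apply the independence operators first and the hyperedge operators at the end, invoking Proposition~\ref{prop:phi-properties} so that the order among the $\Phi$'s does not matter.

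Next I sort the ten triplets by how many absent edges they contain, and apply Definition~\ref{def:independence-operator} to each group.
\begin{itemize}
\item Zero absent edges: the two cores $\set{i,j,k}$ and $\set{k,l,m}$ are untouched.
\item One absent edge: $\set{i,k,l},\set{i,k,m},\set{j,k,l},\set{j,k,m}$ each lose exactly one edge and are demoted by~\eqref{eq:phi-tri-to-wedge} to $\otimes\wedge$ wedges centred at~$k$.
\item Two absent edges: $\set{i,j,l},\set{i,j,m},\set{i,l,m},\set{j,l,m}$ are first demoted to wedges and then decomposed by~\eqref{eq:phi-wedge-to-edge}. This gives $\prob{X_iX_j}\prob{X_l}$, $\prob{X_iX_j}\prob{X_m}$, $\prob{X_lX_m}\prob{X_i}$ and $\prob{X_lX_m}\prob{X_j}$ respectively.
\end{itemize}
In the denominator, the four absent pairs factor into singles. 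Since each pair is squared, this contributes $\prob{X_i}^4\prob{X_j}^4\prob{X_l}^4\prob{X_m}^4$.

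Then I do the exponent bookkeeping, which is the only real work and where I expect slips to occur.
\begin{itemize}
\item The factors $\prob{X_iX_j}^2$ and $\prob{X_lX_m}^2$ produced by the decomposed triplets cancel the same squared pairs in the denominator.
\item The surviving denominator pairs are $\prob{X_iX_k}^2\prob{X_jX_k}^2\prob{X_kX_l}^2\prob{X_kX_m}^2$.
\item For the singles, each of $i,j,l,m$ receives one power from the decomposed triplets and three from the Kirkwood prefactor, giving $4$. This exactly cancels the $4$ coming from the decomposed denominator pairs.
\item Only $\prob{X_k}^3$ survives, because $k$ lies in no absent edge.
\end{itemize}
This yields~\eqref{eq:bowtie-closure} with both cores in class $\otimes\triangle$. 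Finally, if either core carries a hyperedge, I apply $\lambda_{\set{ijk}}$ and/or $\lambda_{\set{klm}}$. By Proposition~\ref{prop:lambda-properties} these commute, and each acts only on its own core factor. This retags the core as $\bigcirc\triangle$ and leaves the rest of the expression unchanged, which covers all four choices of $\sigma,\sigma'$.
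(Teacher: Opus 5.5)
Your proposal is correct and follows the paper's proof essentially step for step. The shared steps are: the truncated Kirkwood expansion; applying the four cross-edge independence operators first, which is justified because they do not touch the cores and so commute with the hyperedge operators; sorting the ten triplets into two untouched cores, four wedges centred at $k$, and four fully decomposed triplets; cancelling; and finally retagging the cores with $\lambda$. Your explicit exponent bookkeeping is accurate, namely that the fourth powers of the singles on $i,j,l,m$ cancel and only $\prob{X_k}^3$ survives, and it spells out the cancellation that the paper leaves implicit.
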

\begin{proof}
    Applying the truncated Kirkwood approximation $\tilde{\kappa}$ yields
    \[
        \tilde{\kappa}(\set{X_i, X_j, X_k, X_l, X_m}) = 
        \frac{
        \displaystyle\prod_{\set{w,x,y} \in \binom{\set{i,j,k,l,m}}{3}}\prob{X_w X_x X_y}^{\otimes \triangle} \;\cdot\;
        \prod_{x \in \set{i,j,k,l,m}}\prob{X_x}^3
        }{
        \displaystyle\prod_{\set{x,y} \in \binom{\set{i,j,k,l,m}}{2}}\prob{X_x X_y}^2
        }.
    \]
    The independence operators $\Phi_{il}$, $\Phi_{im}$, $\Phi_{jl}$, $\Phi_{jm}$ act on the four absent cross-edges. Since none of these edges appear within the core triplets $\set{i,j,k}$ or $\set{k,l,m}$, these operators commute with the hyperedge operators $\lambda_{\set{ijk}}$ and $\lambda_{\set{klm}}$; we may therefore apply them first.

    \textit{Action on pair terms.} The four absent-edge pairs decompose into products of singles: $\prob{X_i X_l},\, \prob{X_i X_m},\, \prob{X_j X_l},\, \prob{X_j X_m} \mapsto \prob{X_i}\prob{X_l},\, \prob{X_i}\prob{X_m},\, \prob{X_j}\prob{X_l},\, \prob{X_j}\prob{X_m}$.

    \textit{Action on triplet terms.} Among the $\binom{5}{3} = 10$ triplets, the two core triplets $\set{i,j,k}$ and $\set{k,l,m}$ are unaffected. Four triplets each lose one cross-edge and are demoted to wedges centered at~$k$: for example, $\prob{X_i X_k X_l}^{\otimes \triangle} \xrightarrow{\Phi_{il}} \prob{X_i X_k X_l}^{\otimes \wedge}$. The remaining four triplets each lose two cross-edges, disconnecting one node, and fully decompose: for example, $\prob{X_i X_l X_m}^{\otimes \triangle} \xrightarrow{\Phi_{il}\,\Phi_{im}} \prob{X_l X_m}\,\prob{X_i}$. Canceling the decomposed pair and single factors between numerator and denominator, and applying the hyperedge operators to the core triplets, yields the stated expression. 
\end{proof}

\begin{remark} \label{rem:bowtie-with-wedge-closure}
    When used in conjunction with the wedge closure (Lemma~\ref{lem:triplet-closure}), the bowtie closure~\eqref{eq:bowtie-closure} reduces to
    \[
        \prob{X_i X_j X_k X_l X_m} \approx \frac{\prob{X_i X_j X_k}^{\sigma}\, \prob{X_k X_l X_m}^{\sigma'}}{\prob{X_k}}
    \]
    for $\sigma, \sigma' \in \set{\bigcirc \triangle, \otimes \triangle}$.
\end{remark}

\begin{remark} \label{rem:bowtie-hyperedge-only}
    When both cores $\set{i,j,k}$ and $\set{k,l,m}$ carry hyperedges but no pairwise edges (except those within the respective cores already accounted for), the additional independence operators $\Phi_{ij}$, $\Phi_{ik}$, $\Phi_{jk}$ and $\Phi_{lm}$, $\Phi_{kl}$, $\Phi_{km}$ demote each core triplet through the chain $\bigcirc\triangle \to \bigcirc\wedge \to \bigcirc{-} \to \bigcirc\bullet$ without decomposing them (Definition~\ref{def:independence-operator}), while the four cross-wedges fully decompose. The bowtie closure then reduces to
    \[
        \prob{X_i X_j X_k X_l X_m} \approx \frac{\prob{X_i X_j X_k}^{\bigcirc \bullet}\, \prob{X_k X_l X_m}^{\bigcirc \bullet}}{\prob{X_k}}.
    \]
    This has the same form as the wedge-reduced bowtie closure (Remark~\ref{rem:bowtie-with-wedge-closure}), confirming that the bowtie structure $\prob{X_i X_j X_k}^\sigma\,\prob{X_k X_l X_m}^{\sigma'}/\prob{X_k}$ is preserved for $\sigma, \sigma' \in \set{\bigcirc\triangle,\, \otimes\triangle,\, \bigcirc\bullet}$.
\end{remark}

\section{Mean field approximations} \label{sec:mean-field-approx}

The microscopic rate equations derived in Subsection~\ref{subsec:sis-dynamics} depend on specific node labels and on the full adjacency structure of the network. While technically a closed system of equations, on large networks this still leads to impractical computational complexity. To maintain numerical tractability, we introduce two simplifying assumptions: first on node states, then on network topology.

\subsection{Dynamical homogeneity} \label{subsubsec:dyanmical-homogeneity}
As established in Section~\ref{subsec:subgraph-isos}, the isomorphism class of a sub-graph captures the topological information that governs the local dynamics.  The dynamical homogeneity assumption utilizes this to assert that the dynamical evolution, described by the state joint probabilities, is approximately equal for sub-graphs that are isomorphic.

\begin{assumption} \label{ass:dynamical-homogeneity}
    The joint state probability over any group of nodes depends only on the isomorphism class of the sub-graph they span, not on the specific node labels.  That is,
  \[
  \begin{aligned}
  \prob{X_i } 
  &\approx \prob{X}\\
  \prob{X_i X_j}^{-} 
  &\approx \prob{X X}^{-}\\
          \prob{X_i X_j X_k}^{\sigma\tau}
    &\approx \prob{X X X}^{\sigma\tau}
  \end{aligned}
  \]
  for all $\sigma \in \set{\bigcirc, \otimes}$,  $\tau \in \set{\triangle, \wedge, -, \bullet}$ and $X \in \set{S,I}$, where the right-hand side denotes a common or average value shared by all sub-graphs in the same class.
\end{assumption}
In effect, this assumption replaces node-specific probabilities with class-specific averages, eliminating node labels from the dynamical variables; the only information about the network topology that is retained is the isomorphism class associated with each joint probability.  An unspecified state variable can be expanded over its isomorphism classes using adjacency prefactors.  For pairs,
\[
  \prob{X_i X_j} = A_{ij}\,\prob{X_i X_j}^{-} + \bar{A}_{ij}\,\prob{X_i X_j}^{\,\bullet}.
\]
Since $A_{ij}$ and $\bar{A}_{ij} = 1 - A_{ij}$ are complementary matrices, exactly one term is nonzero for any given pair: the prefactors select the unique isomorphism class that the pair belongs to.  The same principle extends to triplets, where the product of $H$, $\bar{H}$, $A$, $\bar{A}$ prefactors partitions the eight isomorphism classes of Table~\ref{tab:isomorphism-class-notation}, with exactly one term surviving for any given triple.

Applying this expansion to equation~\eqref{eq:dI} gives
    \begin{equation}
    \begin{aligned}
        \frac{d}{dt}\prob{I_i} =& {-\gamma \prob{I_i}} + {\beta_1 \sum_{x\not=i} A_{ix} \prob{S_i I_x}} \ldots\\ 
        & \downarrow \\
        \frac{d}{dt}\prob{I_i} \approx& -\gamma \prob{I} + \beta_1 \underbrace{\sum_{x\not=i} A_{ix} 
        \big( A_{ix} \prob{S I }^- + \bar{A}_{ix} \prob{S I }^\bullet \big) }_{= \prob{S I}^- \sum_{x\not=i}A_{ix}} + \ldots 
    \end{aligned}
    \end{equation}
where the $\bar{A}_{ix}$ term vanishes because $A_{ix}\,\bar{A}_{ix} = 0$, leaving only the connected-class contribution.  The same logic extends to the $\beta_2$ infection term via an expansion similar to that described in equation~\eqref{eq:closure-4}.  After dynamical homogeneity, the only node-specific quantities remaining on the right-hand side are purely topological: sums of adjacency products such as $\sum_{x\not=i} A_{ix}$, the degree of node~$i$.  To decouple the dynamics from node labels entirely, we use the topological average of these quantities.

\subsection{Topological homogeneity}\label{subsubsec:topological-homogeneity}

After applying dynamical homogeneity, the remaining node dependence takes the form of partial sums of adjacency tensors with one or more nodes fixed.  For example, the sum $A_{ij}\sum_{x \neq i,j} H_{ijx}$ counts the number of hyperedges that the edge $\set{i,j}$ belongs to.  Generally, this count is not guaranteed to be the same at every edge for an arbitrary network.  Topological homogeneity assumes that the local count is equivalent to the global average: the total number of edge-in-hyperedge occurrences, divided by the total number of edges.  Concretely,
\[
  A_{ij} \sum_{\substack{x \neq i,j}} H_{ijx}
  \approx
  \frac{\displaystyle\sum_{\substack{i,j,x \\ \text{distinct}}} A_{ij}\, H_{ijx}}
       {\displaystyle\sum_{\substack{i,j \\ \text{distinct}}} A_{ij}}.
\]

The numerator and denominator, in such cases, are typically well-defined quantities, often related to network parameters (or are defined as network parameters themselves), allowing for analytic solutions. More generally, for any sub-graph $\mathcal{S}'$ that can be embedded in a larger sub-graph $\mathcal{S}$, we assume that the number of instances of $\mathcal{S}$ containing a given copy of $\mathcal{S}'$ is approximately equal to the network-wide average, namely the total number of (labeled) copies of $\mathcal{S}$ (e.g. hyperedges) divided by the total number of (labeled) copies of $\mathcal{S}'$ (e.g. edges).

\begin{assumption}\label{ass:topological-homogeneity}
  Let $f = p \cdot q$ be a product of adjacency tensors $\mathbf{A}$, $\bar{\mathbf{A}}$, $\mathbf{H}$, $\bar{\mathbf{H}}$, where $p(i, j, \ldots)$ depends only on fixed nodes and $q(i, j, \ldots, x, y, \ldots)$ contains all dependence on the dummy nodes $x, y, \ldots$ (all indices mutually distinct).  The factor $p$ encodes the sub-graph $\mathcal{S}'$ on the fixed nodes, and $p \cdot q$ encodes the larger sub-graph $\mathcal{S}$ on all nodes.  The assumption states that the local membership count is approximately equal to the network-wide average:
  \begin{equation} \label{eq:sum-promotion}
    p(i, j, \ldots)
    \sum_{x, y, \ldots} q(i, j, \ldots, x, y, \ldots)
    \approx
    \frac{
      \displaystyle\sum_{i, j, \ldots, x, y, \ldots}
        p(i, j, \ldots)\, q(i, j, \ldots, x, y, \ldots)
    }{
      \displaystyle\sum_{i, j, \ldots} p(i, j, \ldots)
    }
  \end{equation}
  for all distinct $i, j, \ldots \in \mathcal{N}$.
\end{assumption}

We refer to the right-hand side as the \emph{promoted sum}, since the partial sum over $x, y, \ldots$ with fixed nodes $i, j, \ldots$ has been promoted to a complete sum over all nodes.  The denominator depends on the fixed sub-graph $\mathcal{S}'$: when a single node is fixed, then $p = 1$ and $\sum_{i} 1 = N$; when an edge is fixed with $p = A_{ij}$, then $\sum_{i,j} A_{ij} = Nk_1$; and so on for larger fixed structures.

Recalling the previous example then, we can see the formula being applied:
\[
  \underbrace{A_{ij}}_{p(i,j)} \; \sum_{\substack{x \neq i,j}} \; \underbrace{H_{ijx}}_{q(i,j,x)}
  \approx
  \frac{\displaystyle\sum_{\substack{i,j,x \\ \text{distinct}}} \overbrace{A_{ij}\, H_{ijx}}^{p(i,j) \,q(i,j,x)} }
       {\displaystyle\sum_{\substack{i,j \\ \text{distinct}}} \underbrace{A_{ij}}_{p(i,j)}}.
\]

\begin{remark}\label{rem:topological-homogeneity}
  Assumption~\ref{ass:topological-homogeneity} can be understood as requiring that all nodes are topologically equivalent: every node has the same degree, the same hyperedge degree, and more generally the same count of every isomorphism class of sub-graph in its neighborhood.
\end{remark}

\section{Recovery of existing models} \label{sec:results}
We now demonstrate that several existing models for higher-order contagion can be derived as special cases of our general framework. Each model is recovered by imposing specific structural constraints on the network~$\mathcal{G}$ and utilizing the closure operator from Subsection~\ref{subsec:closure} to obtain a closed system of equations. We treat the following three models:
\begin{enumerate}
  \item the microscopic maximal clique model of Burgio et al.~\cite{burgio2024triadic},
  \item the pair-based simplicial contagion model of Malizia et al.~\cite{malizia2025pair},
  \item the inter-order overlap model of Malizia et al.~\cite{malizia2026nested}.
\end{enumerate}
The first model operates at the microscopic level and requires only structural constraints on the network. The second and third models are mean-field approximations, which additionally require invoking dynamical and topological homogeneity assumptions. 

For each model, we format the respective subsection in a similar manner. We begin by defining the network and model-specific parameters and notation, establishing a connection between the model's quantities and parameters with our adjacency tensors~$\mathbf{A}$ and~$\mathbf{H}$. Next, we state the target equations from the reference paper, translated into our probability notation. Throughout, we use the \emph{internal}, \emph{mixed}, and \emph{external} terminology for pairwise and hyperedge infection terms in the rate equations~\eqref{eq:dI}--\eqref{eq:dSII}, classifying each term by whether its infectious drivers are the fixed nodes, dummy nodes, or both.

Finally, we present the derivation: supporting lemmas are stated as needed, followed by a theorem that recovers the target equations from our general microscopic rate equations~\eqref{eq:dI}--\eqref{eq:dSII}.

\begin{figure}[h!]
    \centering
    \includegraphics[width=0.7\linewidth]{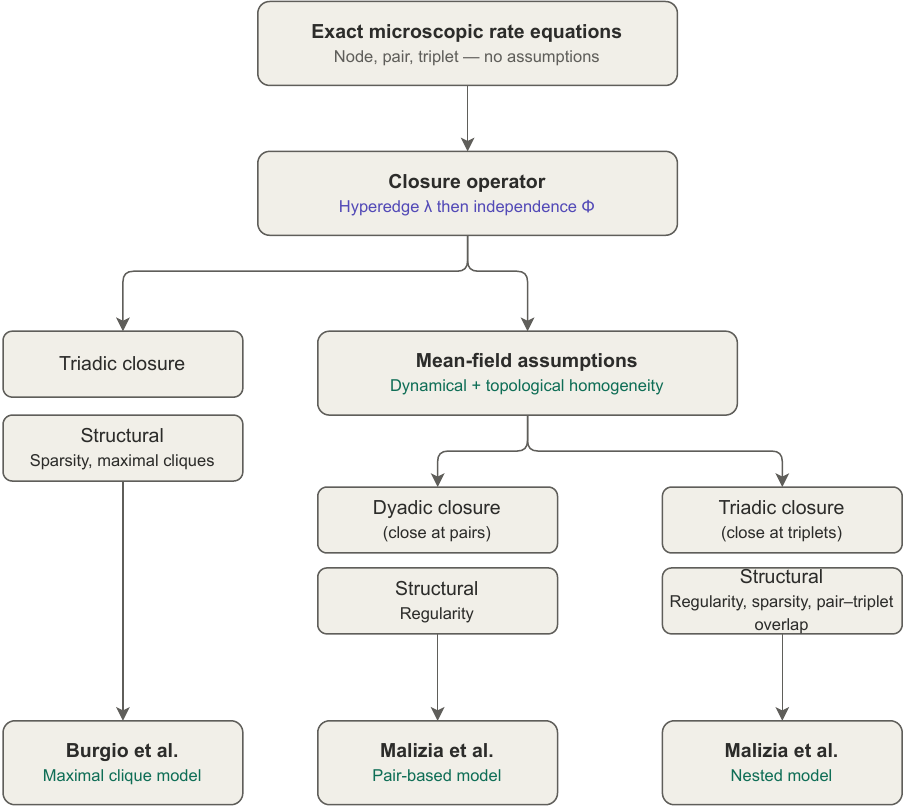}
    \caption{Schematic illustrating the pipeline used in deriving specific higher-order contagion models from the exact microscopic rate equations.}
    \label{fig:derivation-schematic}
\end{figure}

\subsection{Deriving the Burgio maximal clique model} \label{subsec:burgio-derivation}

We show that the microscopic rate equations of the model stated by Burgio et al.~\cite{burgio2024triadic} (hereafter BMC model) can be recovered directly from equations~\eqref{eq:dI}--\eqref{eq:dSII} under structural constraints on the network.

\subsubsection{Model parameters and notation} \label{subsubsec:burgio-notation}
The BMC model is defined on a higher-order network in which the only permitted maximal cliques have size~$2$ or~$3$.  Maximal $2$-cliques are pairwise edges that do not belong to any triangle or hyperedge; maximal $3$-cliques are triangles ($3$~pairwise edges), hyperedges, or simplicial complexes (carrying both a triangle and a hyperedge).  The model further imposes a sparsity constraint: distinct maximal cliques share at most one node.

Rather than the adjacency matrix~$\mathbf{A}$ and hyperedge tensor~$\mathbf{H}$, the BMC model uses tensors that record membership in each type of maximal clique.  Following~\cite{burgio2024triadic}, we define:
\begin{center}
\renewcommand{\arraystretch}{1.3}
\begin{tabular}{ll}
  $B^{(1)}_{ij}$    & maximal $2$-clique: edge $\set{i,j}$ not contained in any $3$-clique \\
  $B^{(1,0)}_{ijk}$ & triangle on $\set{i,j,k}$, with or without a hyperedge \\
  $B^{(0,1)}_{ijk}$ & hyperedge on $\set{i,j,k}$, with or without a triangle \\
  $B^{(1,1)}_{ijk} = B^{(1,0)}_{ijk}\, B^{(0,1)}_{ijk}$ & simplicial complex: both triangle and hyperedge
\end{tabular}
\end{center}
The tensor~$B^{(1,1)}$ is not independent; it is the pointwise product of~$B^{(1,0)}$ and~$B^{(0,1)}$.  These are related to the standard adjacency tensors by
\begin{align}
  A_{ij} &= B^{(1)}_{ij} + \sum_{x} B^{(1,0)}_{ijx}, \label{eq:burgio-A-decomp} \\
  H_{ijk} &= B^{(0,1)}_{ijk}. \label{eq:burgio-H-decomp}
\end{align}
Equation~\eqref{eq:burgio-A-decomp} decomposes each edge into a maximal link contribution and a sum over triangles containing that edge. Equation~\eqref{eq:burgio-H-decomp} identifies the hyperedge tensor directly with~$B^{(0,1)}$.
 
The maximality and sparsity constraints impose algebraic relations among the $B$~tensors that are central to the derivation.
 
\begin{proposition}\label{prop:burgio-constraints}
  The following identities hold for all distinct indices.
 
  \medskip
  \emph{Maximality.}  A maximal $2$-clique edge cannot belong to any $3$-clique:
  \begin{equation}\label{eq:burgio-maximality}
  \begin{aligned}
    B^{(1)}_{ij}\, B^{(1,0)}_{ijk} &= 0, \\
    B^{(1)}_{ij}\, B^{(0,1)}_{ijk} &= 0.
  \end{aligned}
  \end{equation}
 
  \medskip
  \emph{Sparsity.}  Distinct maximal $3$-cliques share at most one node:
  \begin{equation}\label{eq:burgio-sparsity}
  \begin{aligned}
    B^{(1,0)}_{ijk}\, B^{(1,0)}_{ijl} &= B^{(1,0)}_{ijk}\, \delta_{kl}, \\
    B^{(0,1)}_{ijk}\, B^{(0,1)}_{ijl} &= B^{(0,1)}_{ijk}\, \delta_{kl}, \\
    B^{(1,0)}_{ijk}\, B^{(0,1)}_{ijl} &= B^{(1,1)}_{ijk}\, \delta_{kl}.
  \end{aligned}
  \end{equation}
\end{proposition}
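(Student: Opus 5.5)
The plan is to verify each identity entrywise. All tensors are $\set{0,1}$-valued, so a product of two indicators is again an indicator. For each identity it therefore suffices to check two things: the left-hand side vanishes whenever the structural constraints forbid the configuration, and it agrees with the right-hand side in the remaining cases. We work with distinct $i,j,k$, and in the sparsity relations allow $l$ to be either equal to $k$ or distinct from $i,j,k$.

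For maximality, suppose $B^{(1)}_{ij}B^{(1,0)}_{ijk}=1$. Then $\set{i,j}$ is a maximal $2$-clique, yet it lies in the triangle $\set{i,j,k}$, which contradicts the definition of $B^{(1)}$ as an edge not contained in any $3$-clique. Hence the product is $0$. The second relation follows in the same way, since a hyperedge $\set{i,j,k}$ is also a maximal $3$-clique by the BMC definition. Here we read "contained in a $3$-clique" as including membership in a hyperedge, consistent with the definitions listed above.

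For sparsity, split each relation on $\delta_{kl}$. If $l=k$, idempotency of $\set{0,1}$ entries gives $B^{(1,0)}_{ijk}B^{(1,0)}_{ijk}=B^{(1,0)}_{ijk}$ and $B^{(0,1)}_{ijk}B^{(0,1)}_{ijk}=B^{(0,1)}_{ijk}$. For the mixed relation, $B^{(1,0)}_{ijk}B^{(0,1)}_{ijk}=B^{(1,1)}_{ijk}$ holds by the definition of $B^{(1,1)}$. If $l\neq k$, then $\set{i,j,k}$ and $\set{i,j,l}$ are distinct sets of three nodes sharing the two nodes $i,j$. When both are triangles, or both are hyperedges, they are distinct maximal $3$-cliques sharing two nodes, which the sparsity constraint forbids, so the product is $0$.

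The mixed case with $l\neq k$ is the main obstacle. A triangle on $\set{i,j,k}$ and a hyperedge on $\set{i,j,l}$ are distinct maximal $3$-cliques: a simplicial complex is a single clique only when the triangle and hyperedge sit on the same three nodes, which fails here because $l\neq k$. Sparsity therefore again forces the product to vanish. I would make explicit that, in the BMC setting, each three-node set carries at most one maximal $3$-clique, namely a triangle, a hyperedge, or both. Two different node sets thus always give distinct cliques, whatever their types. This is precisely the reading of "maximal clique" that makes the $\delta_{kl}$ factorisation valid.
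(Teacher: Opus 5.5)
Your proposal is correct and follows essentially the same route as the paper. Maximality is shown by contradiction with the definition of $B^{(1)}$. Sparsity is shown by splitting on $\delta_{kl}$: when $l=k$ you use idempotency of $\{0,1\}$ entries and the definition of $B^{(1,1)}$, and when $l\neq k$ the two distinct $3$-cliques share the pair $\{i,j\}$, which violates sparsity. Your extra remark, that a triangle and a hyperedge on different node sets are distinct cliques, simply makes explicit what the paper's proof takes for granted.
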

 
\begin{proof}
  For maximality: if $B^{(1)}_{ij} = 1$, then $\set{i,j}$ is a maximal $2$-clique and by definition does not appear as an edge in any $3$-clique, so $B^{(\psi)}_{ijk} = 0$ for all $(\psi) \in \set{(1,0),\, (0,1),\, (1,1)}$ and all~$k$.
 
  For sparsity: if $k \neq l$, then $B^{(\psi)}_{ijk}$ and $B^{(\psi')}_{ijl}$ represent two distinct $3$-cliques sharing the pair $\set{i,j}$, violating the sparsity constraint, so the product vanishes.  When $k = l$, the first two identities reduce to $[B^{(\psi)}_{ijk}]^2 = B^{(\psi)}_{ijk}$ since $B^{(\psi)}_{ijk} \in \set{0,1}$.  For the third identity with $k = l$, the product $B^{(1,0)}_{ijk}\, B^{(0,1)}_{ijk} = 1$ exactly when $\set{i,j,k}$ carries both a triangle and a hyperedge, which is the definition of~$B^{(1,1)}_{ijk}$.
\end{proof}
 
We adopt the double-summation convention (Remark~\ref{rem:summation-convention}) throughout this subsection, following~\cite{burgio2024triadic}.

\subsubsection{Target equations} \label{subsubsec:burgio-target-eqs}
The BMC model tracks three levels of dynamical variables: single-node probabilities $\prob{I_i}$, maximal-link pair probabilities $B^{(1)}_{ij}\prob{S_i I_j}$, and maximal $3$-clique triplet probabilities $B^{(\sigma)}_{ijk}\prob{S_i I_j I_k}$ for $\sigma \in \set{(1,0),\, (0,1),\, (1,1)}$.  The rate equations, stated in equations~(2a)--(2c) of~\cite{burgio2024triadic} and translated here into our notation, are as follows.

\begin{equation}\label{eq:burgio-target-order1}
\begin{aligned}
  \frac{d}{dt}\prob{I_i}
  ={}& -\gamma\,\prob{I_i}
  + \beta_1 \sum_{x \neq i} 
    B^{(1)}_{ix}\,\prob{S_i I_x} \\
    &+ \frac{1}{2} \sum_{\substack{x,y \\ y \neq x}}
    \Bigg[ 
      \beta_1 B^{(1,0)}_{ixy}\Big(
        \prob{S_i S_x I_y}
        + \prob{S_i I_x S_y}
        + 2\,\prob{S_i I_x I_y}
      \Big)
      + \beta_2 B^{(0,1)}_{ixy} \prob{S_i I_x I_y}
  \Bigg] 
\end{aligned}
\end{equation}
 
\begin{equation}\label{eq:burgio-target-order2}
\begin{aligned}
\frac{d}{dt}\prob{S_i I_j} &= -(1 + \beta^{(1)}) \prob{S_i I_j} + \prob{I_i I_j}
  - \beta_1 \sum_{x \neq j} B_{ix}^{(1)} \prob{I_j S_i I_x}^{\otimes \wedge}
  + \beta_1 \sum_{x \neq i} B_{jx}^{(1)} \prob{S_i S_j I_x}^{\otimes \wedge} \\
  &\quad - \frac{1}{2} \sum_{x,y} \Big[ B_{ixy}^{(1,0)} \beta_1 \big( \prob{I_j S_i I_x S_y} + \prob{I_j S_i S_x I_y} + 2\prob{I_j S_i I_x I_y} \big)  \\
  &\qquad\qquad + B_{ixy}^{(0,1)} \beta_2 \prob{I_j S_i I_x I_y} \Big]
  + \{i \leftrightarrow j\}
\end{aligned}
\end{equation}

\begin{equation}\label{eq:burgio-target-order3-SSI}
\begin{aligned}
\frac{d}{dt} \prob{S_i S_j I_k} &= -(1 + 2B_{ijk}^{(1,0)} \beta_1) \prob{S_i S_j I_k}
  + \prob{I_i S_j I_k} + \prob{S_i I_j I_k} \\
  &\quad - \beta_1 \sum_{x \neq j,k} B_{ix}^{(1)} \prob{S_i S_j I_k I_x}
  - \beta_1 \sum_{x \neq i,k} B_{jx}^{(1)} \prob{S_i S_j I_k I_x}
  + \beta_1 \sum_{x \neq i,j} B_{kx}^{(1)} \prob{S_i S_j S_k I_x} \\
  &\quad - \frac{1}{2} \sum_{x,y \neq j,k} \Big[ B_{ixy}^{(1,0)} \beta_1 \big( \prob{S_i S_j I_k I_x S_y} + \prob{S_i S_j I_k S_x I_y} + 2\prob{S_i S_j I_k I_x I_y} \big) \\
  &\qquad\qquad + B_{ixy}^{(0,1)} \beta_2 \prob{S_i S_j I_k I_x I_y} \Big]
  - \{i \leftrightarrow j\} + \{i \leftrightarrow k\},
\end{aligned}
\end{equation}

\begin{equation}\label{eq:burgio-target-order3-SII}
\begin{aligned}
\frac{d}{dt} \prob{S_i I_j I_k} &= -(2 + 2B_{ijk}^{(1,0)} \beta_1 + B_{ijk}^{(0,1)} \beta_2) \prob{S_i I_j I_k}
  + B_{ijk}^{(1,0)} \beta_1 (\prob{S_i S_j I_k} + \prob{S_i I_j S_k}) + \prob{I_i I_j I_k} \\
  &\quad - \beta_1 \sum_{x \neq j,k} B_{ix}^{(1)} \prob{S_i I_j I_k I_x}
  + \beta_1 \sum_{x \neq i,k} B_{jx}^{(1)} \prob{S_i S_j I_k I_x}
  + \beta_1 \sum_{x \neq i,j} B_{kx}^{(1)} \prob{S_i I_j S_k I_x} \\
  &\quad - \frac{1}{2} \sum_{x,y \neq j,k} \Big[ B_{ixy}^{(1,0)} \beta_1 \big( \prob{I_j I_k S_i I_x S_y} + \prob{I_j I_k S_i S_x I_y} + 2\prob{I_j I_k S_i I_x I_y} \big) \\
  &\qquad\qquad + B_{ixy}^{(0,1)} \beta_2 \prob{I_j I_k S_i I_x I_y} \Big]
  + \{i \leftrightarrow j\} + \{i \leftrightarrow k\},
\end{aligned}
\end{equation}
where the braces $\set{i \leftrightarrow j}$ denote the terms obtained by taking the term in square brackets $[ \cdots  ]$ (including the summation and factors but excluding the sign), swapping the indices $i$ and $j$ and fixing both nodes in state $S$. 

The BMC model, as stated in~\cite{burgio2024triadic}, closes the hierarchy using the following approximations. Translated into our notation, these are:
\begin{subequations}\label{eq:burgio-target-closures}
\begin{align}
  \prob{X_i X_j X_l}^{\otimes \wedge}
  &\approx \frac{\prob{X_i X_j}\,\prob{X_j X_l}}{\prob{X_j}},
  \label{eq:burgio-closure-triple} \\[4pt]
  \prob{X_i X_j X_l X_h}
  &\approx \frac{\prob{X_j X_l X_h}^{\sigma}\,\prob{X_i X_j}}{\prob{X_j}},
  \label{eq:burgio-closure-quad} \\[4pt]
  \prob{X_i X_j X_l X_h X_k}
  &\approx \frac{\prob{X_i X_j X_l}^{\sigma}\,\prob{X_l X_h X_k}^{\sigma'}}{\prob{X_l}},
  \label{eq:burgio-closure-quint}
\end{align}
\end{subequations}
where in~\eqref{eq:burgio-closure-triple} the open triplet is a wedge with center~$j$; in~\eqref{eq:burgio-closure-quad} the triplet $\set{j,l,h}$ is a maximal $3$-clique connected to the external node~$i$ via the edge $\set{ij}$; and in~\eqref{eq:burgio-closure-quint} the two maximal $3$-cliques $\set{i,j,l}$ and $\set{l,h,k}$ share the pivotal node~$l$.

\subsubsection{Derivation} \label{subsubsec:burgio-derivation}
Since the BMC model is specified at the microscopic level---node labels are retained throughout the target equations---no mean-field approximations are required, and the derivation proceeds entirely through the algebraic manipulation of the adjacency tensors.

The key algebraic step in recovering the BMC equations is substituting the pairwise adjacency matrix with the $B$ tensors (see equation~\eqref{eq:burgio-A-decomp}) into the rate equations. Since the right-hand side of~\eqref{eq:burgio-A-decomp} introduces a three-index tensor, the joint probability must be expanded to accommodate the additional node.
 
\begin{lemma}\label{lem:burgio-substitution}
  For any joint probability $\prob{\ldots S_i I_x \ldots}$ appearing in a sum over~$x$,
  \begin{equation}\label{eq:burgio-substitution}
  \begin{aligned}
    \sum_{x \neq i} A_{ix}\,\prob{\ldots S_i I_x \ldots}
    = \sum_{x \neq i} \Bigg[
      &B^{(1)}_{ix}\,\prob{\ldots S_i I_x \ldots} \\
      &+ \sum_{\substack{y \not=i \\ y \neq x}} B^{(1,0)}_{ixy}\,\frac{1}{2}\Big(
        \prob{\ldots S_i I_x S_y \ldots}
        + \prob{\ldots S_i S_x I_y \ldots}
        + 2\,\prob{\ldots S_i I_x I_y \ldots}
      \Big)
    \Bigg].
  \end{aligned}
  \end{equation}
\end{lemma}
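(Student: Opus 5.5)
We substitute the decomposition \eqref{eq:burgio-A-decomp}, $A_{ix} = B^{(1)}_{ix} + \sum_{y} B^{(1,0)}_{ixy}$, directly into the left-hand side. The $B^{(1)}$ part reproduces the first term of \eqref{eq:burgio-substitution} verbatim. The work is therefore all in the triangle part,
\[
  T = \sum_{x\neq i}\sum_{\substack{y\neq i\\ y\neq x}} B^{(1,0)}_{ixy}\,\prob{\ldots S_i I_x \ldots},
\]
where the restriction $y\neq i,x$ is free because $B^{(1,0)}$ vanishes on coincident indices. The first step is to marginalise over the state of the new node $y$, using the law of total probability:
\[
  \prob{\ldots S_i I_x\ldots} = \prob{\ldots S_i I_x S_y\ldots} + \prob{\ldots S_i I_x I_y\ldots}.
\]
This gives $T = \sum_{x,y} B^{(1,0)}_{ixy}\bigl(\prob{\ldots S_i I_x S_y\ldots} + \prob{\ldots S_i I_x I_y\ldots}\bigr)$.

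The second step is to symmetrise in the dummy indices $x\leftrightarrow y$. Since $B^{(1,0)}_{ixy}$ is permutation symmetric and both sums range over the same set, we may write $T = \tfrac12(T + T^{x\leftrightarrow y})$. The relabelled copy contributes $\prob{\ldots S_i S_x I_y\ldots} + \prob{\ldots S_i I_y I_x\ldots}$. The two $II$ terms are the same joint probability, since joint probabilities do not depend on the ordering of the listed nodes. Adding the copies therefore yields
\[
  T = \sum_{x,y} B^{(1,0)}_{ixy}\,\tfrac12\bigl(\prob{\ldots S_i I_x S_y\ldots} + \prob{\ldots S_i S_x I_y\ldots} + 2\prob{\ldots S_i I_x I_y\ldots}\bigr),
\]
which is exactly the bracketed term in \eqref{eq:burgio-substitution}. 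This is consistent with the $\tfrac12$ double-sum convention of Remark~\ref{rem:summation-convention}.

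The step needing the most care is the treatment of the ellipsis, namely the other fixed nodes (e.g.\ $j$, or $j,k$) in the joint probability. When $y$ coincides with one of those fixed nodes, marginalising over $X_y$ is illegitimate, because $X_y$ is already specified. I would handle this in one of two ways. The first is to restrict the $y$-sum to exclude the fixed nodes, consistent with how the lemma is used; for example, the target equation \eqref{eq:burgio-target-order3-SSI} sums over $x,y\neq j,k$. The second is to note that, for the fixed pairs/triplets tracked by the BMC model, such coincidences correspond to internal (rather than external) infection and are accounted for separately. For instance, under sparsity (Proposition~\ref{prop:burgio-constraints}), if $y=j$ then the triangle $\{i,x,j\}$ must be the same maximal clique that contains the fixed pair. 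I expect to state this as the implicit convention that $y$ ranges over nodes distinct from all fixed indices. With that convention, the marginalisation and symmetrisation steps apply unchanged and the identity is exact; no closure is involved.
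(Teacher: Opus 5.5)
Your argument (substitute \eqref{eq:burgio-A-decomp}, marginalise over $X_y$, then symmetrise using $B^{(1,0)}_{ixy}=B^{(1,0)}_{iyx}$) is exactly the paper's proof, and it is correct. Your caveat about the ellipsis goes beyond the paper but is slightly misplaced: reading $\prob{\cdot}$ as the probability of the intersected event, marginalising over a $y$ that coincides with a fixed node is still valid (one term vanishes, the other reproduces the original), and because $x$ and $y$ both range over all nodes $\neq i$ in the statement, the $(x,y)$ index set is symmetric and the lemma holds literally. The real issue arises only in the applications, where $x\neq i,j$ (or $x\neq i,j,k$) while $y$ is unrestricted, so it is the symmetrisation step that breaks. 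There the $y\in\set{j,k}$ terms vanish because of the $B^{(1)}_{ij}$ or $B^{(2)}_{ijk}$ prefactor, by maximality and sparsity. That is your second option, and it is what justifies the restriction, rather than imposing it as a convention.
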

 
\begin{proof}
  Substitute equation~\eqref{eq:burgio-A-decomp} to obtain a $B^{(1)}_{ix}$ term and a $\sum_y B^{(1,0)}_{ixy}$ term.  For the latter, the joint probability must be expanded to include node~$y$ by marginalizing:
  \[
    \prob{\ldots S_i I_x \ldots}
    = \prob{\ldots S_i I_x S_y \ldots}
    + \prob{\ldots S_i I_x I_y \ldots}.
  \]
  Since $x$ and $y$ are both dummy indices summed with symmetric coefficient $B^{(1,0)}_{ixy} = B^{(1,0)}_{iyx}$, we symmetrize:
  \[
    \sum_{\substack{x,y \\ y \neq x}} B^{(1,0)}_{ixy}\,\prob{\ldots S_i I_x S_y \ldots}
    = \sum_{\substack{x,y \\ y \neq x}} B^{(1,0)}_{ixy}\,\frac{1}{2}\Big(
      \prob{\ldots S_i I_x S_y \ldots}
      + \prob{\ldots S_i S_x I_y \ldots}
    \Big).
  \]
  The $\prob{\ldots S_i I_x I_y \ldots}$ term is already symmetric under $x \leftrightarrow y$, contributing $\frac{1}{2} \cdot 2 = 1$.  Combining gives~\eqref{eq:burgio-substitution}.
\end{proof}

\begin{theorem}\label{thm:burgio-order1}
  Under the BMC structural constraints, equation~\eqref{eq:dI} reduces to equation~\eqref{eq:burgio-target-order1}.
\end{theorem}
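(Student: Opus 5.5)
We start from equation~\eqref{eq:dI} and treat its three terms separately. The recovery term $-\gamma\prob{I_i}$ already appears verbatim in~\eqref{eq:burgio-target-order1}, so nothing needs to be done there.

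For the pairwise infection term $\beta_1\sum_{x\neq i}A_{ix}\prob{S_iI_x}$, we apply Lemma~\ref{lem:burgio-substitution} with the ellipses empty. This splits the term in two. The first piece is $\beta_1\sum_{x\neq i}B^{(1)}_{ix}\prob{S_iI_x}$, which is exactly the maximal-link term of the target. The second piece is
\[
\beta_1\sum_{x\neq i}\sum_{y\neq i,x}B^{(1,0)}_{ixy}\,\tfrac12\big(\prob{S_iI_xS_y}+\prob{S_iS_xI_y}+2\prob{S_iI_xI_y}\big).
\]
Because $B^{(1,0)}_{ixy}$ vanishes whenever two indices coincide, the restrictions $x,y\neq i$ can be dropped. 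This rewrites the second piece as the unordered double sum $\tfrac12\sum_{x,y,\,y\neq x}$ appearing in~\eqref{eq:burgio-target-order1}. We should also check that~\eqref{eq:burgio-A-decomp} is consistent with its use here: under sparsity, $\sum_x B^{(1,0)}_{ijx}\in\set{0,1}$, and by maximality it is never $1$ at the same time as $B^{(1)}_{ij}$. Hence $A_{ij}\in\set{0,1}$, and edges are neither double counted nor missed.

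For the hyperedge term, we substitute $H_{ixy}=B^{(0,1)}_{ixy}$ using~\eqref{eq:burgio-H-decomp}. We then convert the ordered sum $\sum_{x<y,\;x,y\neq i}$ into $\tfrac12\sum_{x,y,\,y\neq x}$ via Remark~\ref{rem:summation-convention}. This is valid because both $B^{(0,1)}_{ixy}$ and $\prob{S_iI_xI_y}$ are symmetric under $x\leftrightarrow y$, and because the restriction to $x,y\neq i$ is automatic from the vanishing of $B^{(0,1)}$ on repeated indices. Collecting the $B^{(1,0)}$ and $B^{(0,1)}$ contributions under a single $\tfrac12\sum_{x,y}$ then reproduces~\eqref{eq:burgio-target-order1}.

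The main point needing care is the pairwise substitution. Lemma~\ref{lem:burgio-substitution} enlarges the pair probability into triplet probabilities, and the symmetrization there is what produces the weights $1,1,2$ in the target. The reduction therefore hinges on correctly invoking that lemma together with the maximality and sparsity identities of Proposition~\ref{prop:burgio-constraints}. No closure and no homogeneity assumption is needed at this level: every quantity is at most a triplet probability, so the reduction is an exact algebraic identity under the structural constraints.
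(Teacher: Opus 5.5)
Your proposal is correct and follows the paper's proof essentially step for step: you substitute \eqref{eq:burgio-A-decomp}--\eqref{eq:burgio-H-decomp}, apply Lemma~\ref{lem:burgio-substitution} to the $\beta_1$ term, and convert the ordered hyperedge sum to $\tfrac12\sum_{x,y,\,y\neq x}$. One small point concerns your extra consistency check on \eqref{eq:burgio-A-decomp}: it shows that no edge is double counted, but the ``not missed'' direction does not follow from Proposition~\ref{prop:burgio-constraints}; it needs every pairwise edge inside a hyperedge to belong to a triangle, which is an implicit BMC modelling assumption. Since the paper, like your argument, simply takes the decomposition as given, this does not affect the proof.
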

 
\begin{proof}
  Substituting equations~\eqref{eq:burgio-A-decomp}--\eqref{eq:burgio-H-decomp} into the first-moment equation~\eqref{eq:dI} and applying Lemma~\ref{lem:burgio-substitution} to the $\beta_1$ term gives
  \begin{equation}\label{eq:burgio-order1-derivation}
  \begin{aligned}
    \frac{d}{dt}\prob{I_i}
    =&
      \underbracket{-\gamma\,\prob{I_i}}{recovery}
      + \underbracket{
        \beta_1 \sum_{x \neq i} A_{ix}\,\prob{S_i I_x}
      }{external pairwise infection}
      + \underbracket{
        \beta_2 \sum_{\substack{x < y \\ x,y \neq i}}
          H_{ixy}\,\prob{S_i I_x I_y}
      }{external hyperedge infection} \\
    =&
      \underbracket{-\gamma\,\prob{I_i}}{recovery}
      + \underbracket{\beta_1 \sum_{x \neq i} \Bigg[
        B^{(1)}_{ix}\,\prob{S_i I_x}
        + \sum_{\substack{y \not= i \\ y \neq x}}
          B^{(1,0)}_{ixy}\,\frac{1}{2}\Big(
            \prob{S_i I_x S_y}
            + \prob{S_i S_x I_y}
            + 2\,\prob{S_i I_x I_y}
          \Big)
      \Bigg]}{external pairwise infection} \\
      &+ \underbracket{\frac{\beta_2}{2}
        \sum_{\substack{x,y \\ y \neq x}}
          B^{(0,1)}_{ixy}\,\prob{S_i I_x I_y}}{external hyperedge infection}.
  \end{aligned}
  \end{equation}
  The recovery term is unchanged. The external pairwise infection term splits into maximal links ($B^{(1)}$) and triangle contributions ($B^{(1,0)}$) via Lemma~\ref{lem:burgio-substitution}.  The external hyperedge infection term follows directly from equation~\eqref{eq:burgio-H-decomp} and the double-counting convention $\sum_{x < y} \to \frac{1}{2}\sum_{\substack{x,y \\ y \neq x}}$.  This matches equation~\eqref{eq:burgio-target-order1}.
\end{proof}
 
\begin{theorem}\label{thm:burgio-order2}
  Under the BMC structural constraints, the pair equation~\eqref{eq:dSI} multiplied by $B^{(1)}_{ij}$ reduces to equation~\eqref{eq:burgio-target-order2}.
\end{theorem}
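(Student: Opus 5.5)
We multiply each line of the pair equation~\eqref{eq:dSI} by $B^{(1)}_{ij}$ and simplify it term by term, using the BMC structural identities. The main tools are the decompositions~\eqref{eq:burgio-A-decomp}--\eqref{eq:burgio-H-decomp}, the maximality and sparsity identities of Proposition~\ref{prop:burgio-constraints}, and Lemma~\ref{lem:burgio-substitution}. We then match the result against~\eqref{eq:burgio-target-order2}. The target is written in the rescaled time where $\gamma=1$ and with $\beta^{(1)}=\beta_1$. We read it as the $\{i\leftrightarrow j\}$-symmetrised form of the external terms: the loss terms are driven by neighbours of $i$, and the gain terms come from the same expression with $i$ and $j$ swapped and $j$ held susceptible.

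The recovery and internal terms are straightforward. The recovery line is unchanged. For the internal pairwise term, $B^{(1)}_{ij}A_{ij}=B^{(1)}_{ij}$ by~\eqref{eq:burgio-A-decomp} together with the maximality identity $B^{(1)}_{ij}B^{(1,0)}_{ijx}=0$. Combined with the recovery loss, this gives $-(1+\beta_1)\prob{S_iI_j}+\prob{I_iI_j}$. The mixed hyperedge term carries $B^{(1)}_{ij}H_{ijx}=B^{(1)}_{ij}B^{(0,1)}_{ijx}=0$, so it vanishes identically.

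For the external pairwise terms $-\beta_1\sum_x A_{ix}\prob{S_iI_jI_x}$ and $+\beta_1\sum_x A_{jx}\prob{S_iS_jI_x}$, we apply Lemma~\ref{lem:burgio-substitution} with the extra fixed node $j$ carried along.
\begin{itemize}
\item The $B^{(1)}_{ix}$ part is a triplet on $\{j,i,x\}$. It is an open wedge centred at $i$: $A_{jx}=0$ because the maximal link $\{i,j\}$ and the maximal link $\{i,x\}$ would otherwise close a triangle, violating maximality. We also have $H_{ijx}=0$. So it is exactly $\prob{I_jS_iI_x}^{\otimes\wedge}$.
\item The $B^{(1,0)}_{ixy}$ part produces the quadruplet combinations $\prob{I_jS_iI_xS_y}+\prob{I_jS_iS_xI_y}+2\prob{I_jS_iI_xI_y}$ with prefactor $\tfrac12$, as in the target.
\item The external hyperedge term $-\beta_2\sum_{x<y}H_{ixy}\prob{I_xI_yS_iI_j}$ becomes $-\tfrac{\beta_2}{2}\sum_{x,y}B^{(0,1)}_{ixy}\prob{I_jS_iI_xI_y}$ by~\eqref{eq:burgio-H-decomp} and Remark~\ref{rem:summation-convention}.
\end{itemize}
The $j$-driven gain terms are handled identically and produce the $\{i\leftrightarrow j\}$ block.

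The main obstacle is bookkeeping the index restrictions and topology so that the resulting motifs have the class the target assumes. Concretely, three things must be checked:
\begin{enumerate}
\item The sums in the target run over $x\neq j$ (and $x,y$ unrestricted apart from distinctness) rather than over $x\neq i,j$. This requires that terms with $x=j$ or $y=j$ vanish, which follows from $B^{(1)}_{ij}B^{(1,0)}_{ijy}=0$ and $B^{(1)}_{ij}B^{(0,1)}_{ijy}=0$.
\item In the quadruplets, node $j$ attaches to the clique $\{i,x,y\}$ only through the maximal edge $\{i,j\}$. The sparsity constraint gives $A_{jx}=A_{jy}=0$ and $H_{jxy}=0$, so these quadruplets are pendant motifs. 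This is what makes them compatible with the closure~\eqref{eq:burgio-closure-quad}, consistent with Lemma~\ref{lem:pendant-closure} and Remark~\ref{rem:pendant-with-wedge-closure}.
\item Overlapping contributions (e.g.\ $x$ lying both in a triangle with $i$ and in a maximal link with $i$) are excluded by Proposition~\ref{prop:burgio-constraints}. Hence the three contributions $B^{(1)}$, $B^{(1,0)}$, $B^{(0,1)}$ partition the infection pathways without double counting.
\end{enumerate}
After these checks, collecting terms reproduces~\eqref{eq:burgio-target-order2}.
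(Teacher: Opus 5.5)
Your proposal is correct and follows the paper's route: you multiply by $B^{(1)}_{ij}$, use $B^{(1)}_{ij}A_{ij}=B^{(1)}_{ij}$, eliminate the mixed hyperedge term by maximality, split the external pairwise terms with Lemma~\ref{lem:burgio-substitution}, and obtain the gain block by the $[I_j\to S_j,\; i\leftrightarrow j]$ substitution. Your extra bookkeeping is more careful than the paper's proof: the $\gamma=1$ rescaling, the vanishing of the $x=j$ and $y=j$ terms, and the justification of the $\otimes\wedge$ class. Your pendant-motif check (item 2) belongs to the closure step of Remark~\ref{rem:burgio-closure}, not to this unclosed identity, so this theorem does not need it.
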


\begin{proof}
  The BMC model tracks only maximal $2$-cliques, not all edges. To restrict to this case, we multiply both sides of equation~\eqref{eq:dSI} by~$B^{(1)}_{ij}$. By Proposition~\ref{prop:burgio-constraints}, $B^{(1)}_{ij}\, A_{ij} = (B^{(1)}_{ij})^2 = B^{(1)}_{ij}$, so the prefactor simplifies.
 
  The recovery, internal pairwise infection, and external hyperedge infection terms pass through directly under Proposition~\ref{prop:burgio-constraints}. The mixed hyperedge infection term contains the prefactor $B^{(1)}_{ij}\, B^{(0,1)}_{ijx}$, which vanishes by the maximality constraint~\eqref{eq:burgio-maximality}: a maximal $2$-clique cannot be contained in a $3$-clique. The external pairwise infection terms are the only contributions requiring expansion; applying Lemma~\ref{lem:burgio-substitution} to the factor $B^{(1)}_{ij}\,A_{ix}$ splits them into a maximal-edge contribution and a maximal-triangle contribution, recovering equation~\eqref{eq:burgio-target-order2} after the symmetry substitution $[I_j \to S_j,\; i \leftrightarrow j]$ is applied to the $j$-neighbor term. The full term-by-term derivation is given in~Appendix~\ref{app:bmc-order-2-derivation}.
\end{proof}

\begin{theorem}\label{thm:burgio-order3}
  Under the BMC structural constraints, the triplet equations~\eqref{eq:dSSI} 
  and~\eqref{eq:dSII} multiplied by $B^{(2)}_{ijk} = B^{(1,0)}_{ijk} + B^{(0,1)}_{ijk} - B^{(1,1)}_{ijk}$ reduce to 
  equations~\eqref{eq:burgio-target-order3-SSI} 
  and~\eqref{eq:burgio-target-order3-SII}.
\end{theorem}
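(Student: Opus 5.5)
We follow the same template as Theorems~\ref{thm:burgio-order1} and~\ref{thm:burgio-order2}. First we record the algebra of the indicator $B^{(2)}_{ijk}$ of a maximal $3$-clique on $\set{i,j,k}$. On its support, the sparsity identities~\eqref{eq:burgio-sparsity} and the decomposition~\eqref{eq:burgio-A-decomp} give
\[
B^{(2)}_{ijk}A_{ij}=B^{(2)}_{ijk}A_{jk}=B^{(2)}_{ijk}A_{ik}=B^{(1,0)}_{ijk}
\qquad\text{and}\qquad
B^{(2)}_{ijk}H_{ijk}=B^{(0,1)}_{ijk}.
\]
The reason is that an edge inside the clique can only come from the triangle on $\set{i,j,k}$ itself: maximality excludes $B^{(1)}$, and sparsity excludes any other triangle through that edge. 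Substituting these into~\eqref{eq:dSSI} and~\eqref{eq:dSII} immediately yields the internal terms of the targets. These are the coefficients $-2B^{(1,0)}_{ijk}\beta_1$ and $-B^{(0,1)}_{ijk}\beta_2$, and the term $B^{(1,0)}_{ijk}\beta_1(\prob{S_iS_jI_k}+\prob{S_iI_jS_k})$. The recovery terms pass through unchanged, with time rescaled so that $\gamma=1$, as in the targets.

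Next we treat the mixed hyperedge terms. These carry factors $B^{(2)}_{ijk}H_{ijx}$ with $x\neq k$. If the clique on $\set{i,j,k}$ is a triangle or a hyperedge, then a hyperedge $\set{i,j,x}$ would be a distinct maximal $3$-clique sharing two nodes with it. The sparsity constraints~\eqref{eq:burgio-sparsity} (the second and third identities, after symmetrising indices) therefore force every mixed term to vanish. This is why no mixed terms appear in~\eqref{eq:burgio-target-order3-SSI}--\eqref{eq:burgio-target-order3-SII}.

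For the external pairwise terms we apply Lemma~\ref{lem:burgio-substitution} to each $A_{ix}$, $A_{jx}$, $A_{kx}$ with $x\notin\set{i,j,k}$. This splits each term into a $B^{(1)}_{ix}$ part (quadruplets) and a $\frac12\sum_y B^{(1,0)}_{ixy}(\cdots)$ part (quintuplets). We must check that the triangle $\set{i,x,y}$ does not coincide with or overlap the clique $\set{i,j,k}$ in two nodes. Sparsity kills any $y\in\set{j,k}$, which yields the restriction $x,y\neq j,k$ in the targets. The external hyperedge terms become $\frac{\beta_2}{2}\sum B^{(0,1)}_{ixy}(\cdots)$ by~\eqref{eq:burgio-H-decomp} and the double-sum convention of Remark~\ref{rem:summation-convention}. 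We then group the quadruplet and quintuplet contributions node by node. The $i$-driven terms give the displayed bracket, while the $j$- and $k$-driven terms give the braces $\set{i\leftrightarrow j}$ and $\set{i\leftrightarrow k}$ with the prescribed state relabelling. The signs follow the gain/loss structure of~\eqref{eq:dSSI} and~\eqref{eq:dSII}: loss for infection of a susceptible fixed node, gain for the $I$ node being infected from $S$. This produces the minus sign on $\set{i\leftrightarrow j}$ in~\eqref{eq:burgio-target-order3-SSI} and plus signs elsewhere.

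The main obstacle is the bookkeeping in this last step. For each of the six external-driver terms we must confirm that the expanded joint probabilities match the target's state orderings, for instance $\prob{I_jI_kS_iI_xS_y}$ versus $\prob{S_iI_jI_kI_xS_y}$. We must also confirm that the relabelling rule in the braces, which fixes both swapped nodes in state $S$, reproduces exactly the gain terms. The closures~\eqref{eq:burgio-target-closures} are not needed for this theorem, since the targets are stated before closure. We would carry out the verification term by term, as in the appendix for Theorem~\ref{thm:burgio-order2}, first for~\eqref{eq:dSSI} and then for~\eqref{eq:dSII}.
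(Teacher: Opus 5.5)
Your proposal is correct and follows the paper's proof essentially step for step. Like the paper, you use the identities $B^{(2)}_{ijk}A_{ij}=B^{(1,0)}_{ijk}$ and $B^{(2)}_{ijk}H_{ijk}=B^{(0,1)}_{ijk}$ for the internal terms, sparsity to kill the mixed hyperedge terms, and Lemma~\ref{lem:burgio-substitution} for the external pairwise terms, then do term-by-term bookkeeping as in Appendix~\ref{app:bmc-order-3-derivation}. Your explicit check that sparsity removes the $y\in\set{j,k}$ contributions is a detail the paper leaves implicit; it is also what makes the $x\leftrightarrow y$ symmetrisation in that lemma legitimate in the triplet setting.
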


\begin{proof}
  The recovery, internal pairwise infection, internal hyperedge infection (which appears in equation~\eqref{eq:dSII} but not equation~\eqref{eq:dSSI}), and external hyperedge infection terms simplify directly under Proposition~\ref{prop:burgio-constraints}, following the same reasoning as in the proof for the pair equation (Theorem~\ref{thm:burgio-order2}). The mixed hyperedge infection term vanishes by sparsity: the prefactor $B^{(2)}_{ijk}\, B^{(0,1)}_{ikx}$ requires two $3$-cliques to share the pair $\set{i,k}$, which forces $x = j$, but $x \neq j$ in the summation.  The external pairwise infection terms are expanded via Lemma~\ref{lem:burgio-substitution}.  The full term-by-term derivation is given in~Appendix~\ref{app:bmc-order-3-derivation}.
\end{proof}

\begin{remark}\label{rem:burgio-closure}
  The BMC target equations~\eqref{eq:burgio-target-order1}--\eqref{eq:burgio-target-order3-SII} are not closed: quadruplet and quintuplet probabilities remain on the right-hand side. The closures~\eqref{eq:burgio-target-closures} used by Burgio et al.\ are recovered directly from the canonical closure approximations of Section~\ref{sec:canonical-closures}.

  The triplet closure~\eqref{eq:burgio-closure-triple} is the wedge closure of Lemma~\ref{lem:triplet-closure} with center~$j$: the sparsity constraint ensures that open triplets in the BMC model are always wedges, since the third edge connecting nodes in distinct maximal cliques is absent.

  The quadruplet closure~\eqref{eq:burgio-closure-quad} is the pendant closure with wedge reduction (Remark~\ref{rem:pendant-with-wedge-closure}): the sub-graph consists of a maximal $3$-clique $\set{j,l,h}$ connected to an external node~$i$ via the edge $\set{ij}$. By sparsity, no other edges connect $i$ to $\set{l,h}$, giving exactly the pendant configuration with pivotal node~$j$.

  The quintuplet closure~\eqref{eq:burgio-closure-quint} is the bowtie closure with wedge reduction (Remark~\ref{rem:bowtie-with-wedge-closure}): two maximal $3$-cliques $\set{i,j,l}$ and $\set{l,h,k}$ share the pivotal node~$l$. By sparsity, no cross-edges exist between the two cliques except through~$l$, giving the bowtie configuration.
\end{remark}

\begin{remark}
  To pass to a population-level description, Burgio et al.\ introduced averaged quantities such as $\prob{X} = \frac{1}{N}\sum_i \prob{X_i}$, with analogous averages defining the pair and triplet dynamics. The population rate equations are then obtained by differentiating these averages and substituting the microscopic rate equations. We note that this procedure does not close on its own. After closure the right-hand side contains nonlinear expressions in the node-indexed probabilities --- for example $\prob{S_i I_j}\,\prob{S_i I_k}/\prob{S_i}$ --- and the average of such a ratio cannot be rewritten in terms of the averaged variables alone. Closing the averaged system therefore still requires assuming that the individual probabilities are themselves node-independent, e.g.\ $\prob{S_i} \approx \prob{S}$, i.e. the dynamical homogeneity assumption (Assumption~\ref{ass:dynamical-homogeneity}). The final BMC results are unaffected; the difference is only that dynamical homogeneity states this assumption at the outset rather than invoking it implicitly to resolve the averaged equations.
\end{remark}

\subsection{Deriving the pair-based simplicial model} \label{subsec:pair-based}

We show that the mean-field pair-based simplicial contagion model by Malizia et al.~\cite{malizia2025pair} (hereafter PBS model) can be recovered from equations~\eqref{eq:dI}--\eqref{eq:dSII} under specific structural constraints on the network, as well as closing the hierarchy at the level of pairs instead of triplets. 

\subsubsection{Model parameters and notation} \label{subsubsec:pair-based-notation}

The PBS model is defined on a \emph{regular} higher-order network $\mathcal{G} = (\mathcal{N, E, H})$ wherein every node has fixed degree $k_1$ and fixed hyperedge degree $k_2$. The simpliciality constraint requires that every hyperedge contain a complete (pairwise) triangle:
\begin{equation} \label{eq:simplicial-constraint}
    H_{ijk} = 1 \implies A_{ij} A_{ik} A_{jk} = 1.
\end{equation}
Two additional global structural parameters govern the topology: $\phi \in [0, 1]$, the probability that a connected triplet closes to form a triangle, and $\delta \in [0,1]$, the probability that a triangle carries a hyperedge. Note that these parameters are constrained by the values of $k_1$ and $k_2$. 

Due to the structural constraints on the network, the number of possible $4$-node motifs is greatly reduced. Malizia et al.\ introduce further sub-graph notations to encode these structures within the state probabilities.
\begin{remark} \label{rem:pair-based-subgraph-notation}
     To distinguish the sub-graph labeling used by Malizia et al.\ from our notation (i.e. Table~\ref{tab:isomorphism-class-notation}) throughout this subsection, we encode them with \emph{subscripts} on joint probabilities, whereas our framework uses \emph{superscripts}. Specifically, for quadruplet probabilities involving a single 2-simplex $\prob{XYZ}$, we write $\prob{XYZ_\triangle W}_m$ where $m \in \set{0,1,2}$ counts the number of pairwise edges connecting the external node~$W$ to the nodes of the simplex, and the edge $\set{Z, W}$ is implied by their adjacent positioning in the notation.
\end{remark}

\subsubsection{Target equations}\label{subsubsec:pair-target-eqs}

The PBS model tracks five dynamical variables: the single-node densities $\prob{S}$ and $\prob{I}$, and the pair densities $\prob{SS}$, $\prob{SI}$, and $\prob{II}$.  The rate equations, stated in equation~(16) of~\cite{malizia2025pair} and translated into our notation, are:
\begin{subequations}\label{eq:pbs-target}
\begin{align}
  \frac{d}{dt}\prob{S}
  &= \gamma\,\prob{I}
    - \beta_1 k_1\,\prob{SI}
    - \beta_2 k_2\,\prob{ISI}^\bigcirc,
    \label{eq:pbs-target-S} \\[4pt]
  \frac{d}{dt}\prob{I}
  &= -\gamma\,\prob{I}
    + \beta_1 k_1\,\prob{SI}
    + \beta_2 k_2\,\prob{ISI}^\bigcirc,
    \label{eq:pbs-target-I} \\[4pt]
  \frac{d}{dt}\prob{SS}
  &= 2\gamma\,\prob{SI}
    - 2\beta_1(k_1 - 1)\,\prob{SSI}
    - 2\beta_2\,\tfrac{k_2}{k_1}(k_1 - 2)\,
      \prob{IIS_\triangle S},
    \label{eq:pbs-target-SS} \\[4pt]
  \begin{split}\label{eq:pbs-target-SI}
  \frac{d}{dt}\prob{SI}
  &= \gamma\,\prob{II}
    - \gamma\,\prob{SI}
    + \beta_1(k_1 - 1)\,\prob{SSI}
    - \beta_1(k_1 - 1)\,\prob{ISI}
    - \beta_1\,\prob{SI} \\
  &\quad
    + \beta_2\,\tfrac{k_2}{k_1}(k_1 - 2)\,
      \prob{IIS_\triangle S}
    - 2\beta_2\,\tfrac{k_2}{k_1}\,\prob{ISI}^\bigcirc
    - \beta_2\,\tfrac{k_2}{k_1}(k_1 - 2)\,
      \prob{IIS_\triangle I},
  \end{split} \\[4pt]
  \begin{split}\label{eq:pbs-target-II}
  \frac{d}{dt}\prob{II}
  &= -2\gamma\,\prob{II}
    + 2\beta_1(k_1 - 1)\,\prob{ISI}
    + 2\beta_1\,\prob{SI} \\
  &\quad
    + 4\beta_2\,\tfrac{k_2}{k_1}\,\prob{ISI}^\bigcirc
    + 2\beta_2\,\tfrac{k_2}{k_1}(k_1 - 2)\,
      \prob{IIS_\triangle I}.
  \end{split}
\end{align}
\end{subequations}
The partial and empty superscripts on the triplet probabilities here indicate a lack of information at this stage.  As a simplicial model, every pair within a hyperedge is itself an edge, so the presence of a hyperedge forces all three pairwise edges and $\prob{ISI}^{\bigcirc} = \prob{ISI}^{\bigcirc\triangle}$.  Likewise, $\prob{SSI}$ is the sum of all variations of $\prob{SSI}^{\tau\sigma}$ weighted by their density in the network, with $\tau\sigma \in \set{\bigcirc\triangle,\ \otimes\triangle,\ \otimes\wedge}$.
The quadruplet probabilities $\prob{IIS_\triangle S}$ and $\prob{IIS_\triangle I}$ appearing in equations~\eqref{eq:pbs-target-SS}--\eqref{eq:pbs-target-II} are decomposed into sub-configurations according to the number of external edges:
\begin{equation}\label{eq:pbs-target-quadruplet}
\begin{aligned}
  \prob{IIS_\triangle S}
  &= (1-\phi)^2\,\prob{IIS_\triangle S}_0
    + 2\phi(1-\phi)\,\prob{IIS_\triangle S}_1
    + \phi^2\,\prob{IIS_\triangle S}_2, \\[2pt]
  \prob{IIS_\triangle I}
  &= (1-\phi)^2\,\prob{IIS_\triangle I}_0
    + 2\phi(1-\phi)\,\prob{IIS_\triangle I}_1
    + \phi^2\,\prob{IIS_\triangle I}_2.
\end{aligned}
\end{equation}
 
The PBS model closes the hierarchy using a combination of the Kirkwood superposition and the wedge closure (equations~(6)--(7) and~(9)--(10) of~\cite{malizia2025pair}) for triplet densities, and factorized expressions (equation~(17) of~\cite{malizia2025pair}) for quadruplet densities:
\begin{subequations}\label{eq:pbs-target-triplet-closures}
\begin{align}
  \prob{XYZ}^\wedge
    &\approx \frac{\prob{XY}\,\prob{YZ}}
      {\prob{Y}},
    \label{eq:pbs-closure-wedge} \\[3pt]
  \prob{XYZ}^\triangle
    &\approx \frac{\prob{XY}\,\prob{YZ}\,
      \prob{XZ}}
      {\prob{X}\,\prob{Y}\,\prob{Z}},
    \label{eq:pbs-closure-triangle} \\[3pt]
  \prob{XYZ}
    &\approx (1 - \phi)\,\prob{XYZ}^\wedge
      + \phi\,\prob{XYZ}^\triangle.
    \label{eq:pbs-closure-triplet-mixed}
\end{align}
\end{subequations}
The quadruplet closures, stated in equation~(17) of~\cite{malizia2025pair}, factorize each sub-configuration into products of pair and single densities:
\begin{equation}\label{eq:pbs-target-4node-closures}
\renewcommand{\arraystretch}{2.0}
\begin{array}{l@{\qquad}l}
  \prob{IIS_\triangle S}_0
    = \dfrac{\prob{SI}^2\,\prob{II}\,
      \prob{SS}}
      {\prob{S}^2\,\prob{I}^2},
  &\prob{IIS_\triangle I}_0
    = \dfrac{\prob{SI}^3\,\prob{II}}
      {\prob{S}^2\,\prob{I}^2}, \\
  \prob{IIS_\triangle S}_1
    = \dfrac{\prob{SI}^3\,\prob{SS}\,
      \prob{II}}
      {\prob{S}^3\,\prob{I}^3},
  &\prob{IIS_\triangle I}_1
    = \dfrac{\prob{SI}^3\,\prob{II}^2}
      {\prob{S}^2\,\prob{I}^4}, \\
  \prob{IIS_\triangle S}_2
    = \dfrac{\prob{SS}\,\prob{SI}^4\,
      \prob{II}}
      {\prob{S}^4\,\prob{I}^4},
  &\prob{IIS_\triangle I}_2
    = \dfrac{\prob{SI}^3\,\prob{II}^3}
      {\prob{S}^2\,\prob{I}^6}.
\end{array}
\end{equation}

\subsubsection{Derivation}\label{subsubsec:pair-derivation}
 
The derivation proceeds in three stages: we first establish exact network quantities on the regular simplicial complex, then evaluate the partial sums arising under topological homogeneity, and finally recover the target equations.
 
\medskip
\noindent\textit{Network quantities.}
The structural parameters $k_1$, $\phi$, and $\delta$ determine the global counts of each sub-graph type.  Expressing these as sums of adjacency products gives the following.
 
\begin{proposition}\label{prop:pair-network-counts}
  On a regular higher-order network satisfying the simplicial constraint according to equation~\eqref{eq:simplicial-constraint}, the following identities hold:
  \begin{align}
    \sum_{\substack{i,j \\ \textup{distinct}}} A_{ij}
      &= N k_1,
      \label{eq:net-edges} \\
    \sum_{\substack{i,j,k \\ \textup{distinct}}}
      A_{ij}\, A_{ik}
      &= N k_1(k_1 - 1),
      \label{eq:net-triples} \\
    \sum_{\substack{i,j,k \\ \textup{distinct}}}
      A_{ij}\, A_{ik}\, \bar{A}_{jk}
      &= N k_1(k_1 - 1)(1 - \phi),
      \label{eq:net-wedges} \\
    \sum_{\substack{i,j,k \\ \textup{distinct}}}
      A_{ij}\, A_{ik}\, A_{jk}
      &= N k_1(k_1 - 1)\phi,
      \label{eq:net-triangles} \\
    \sum_{\substack{i,j,k \\ \textup{distinct}}}
      H_{ijk}\, A_{ij}\, A_{ik}\, A_{jk}
      &= N k_1(k_1 - 1)\phi\delta
       = 2N k_2,
      \label{eq:net-simplices} \\
    \sum_{\substack{i,j,x,y \\ \textup{distinct}}}
      A_{ij}\, H_{ixy}\, A_{ix}\, A_{iy}\, A_{xy}
      &= N k_1(k_1 - 1)(k_1 - 2)\phi\delta = 2Nk_2 (k_1 - 2).
      \label{eq:net-simplex-ext}
  \end{align}
\end{proposition}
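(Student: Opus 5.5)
The plan is to evaluate each sum by fixing the summation indices in a natural order and using regularity to count the inner sums exactly. The only inputs are the defining parameters: $k_1$ and $k_2$ from Definition~\ref{def:mean-degrees}, which here hold node by node by regularity, $\phi$ from Definition~\ref{def:clustering}, $\delta$ as the fraction of triangles carrying a hyperedge, and the simplicial constraint~\eqref{eq:simplicial-constraint}.

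\textbf{Degree-type sums.} For~\eqref{eq:net-edges}, note $\sum_{j\neq i}A_{ij}=k_{1,i}=k_1$ for every $i$; summing over the $N$ nodes gives $Nk_1$. For~\eqref{eq:net-triples}, fix the center $i$. The ordered pairs $(j,k)$ of distinct neighbours number $k_1(k_1-1)$, so the sum is $Nk_1(k_1-1)$.

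\textbf{Clustering and simplex sums.} Equations~\eqref{eq:net-triangles} and~\eqref{eq:net-wedges} follow from Definition~\ref{def:clustering}. The triangle sum equals $\phi$ times the connected-triple sum, and the open-wedge sum equals $(1-\phi)$ times it, using $A_{jk}+\bar A_{jk}=1$ inside~\eqref{eq:net-triples}.

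For the first equality in~\eqref{eq:net-simplices}, read $\delta$ as the ratio of $\sum H_{ijk}A_{ij}A_{ik}A_{jk}$ to the triangle sum. For the second equality, use~\eqref{eq:simplicial-constraint}: it gives $H_{ijk}A_{ij}A_{ik}A_{jk}=H_{ijk}$, so the sum equals $\sum_{i,j,k\text{ distinct}}H_{ijk}=2Nk_2$ by Definition~\ref{def:mean-degrees}. This also records the constraint $k_1(k_1-1)\phi\delta=2k_2$ that links the parameters.

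\textbf{The four-node sum.} For~\eqref{eq:net-simplex-ext}, first reduce $H_{ixy}A_{ix}A_{iy}A_{xy}$ to $H_{ixy}$ by simpliciality. Then fix $i$ and the ordered pair $(x,y)$ with $H_{ixy}=1$; there are $2k_2$ such ordered pairs per node. It remains to count $j\notin\{i,x,y\}$ with $A_{ij}=1$. Since $x,y$ are neighbours of $i$ by simpliciality, exactly $k_1-2$ such $j$ remain, by regularity. Summing over $i$ gives $2Nk_2(k_1-2)$, and substituting $2k_2=k_1(k_1-1)\phi\delta$ gives the other form.

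\textbf{Main obstacle.} The delicate points are interpretive rather than computational. One must identify $\delta$ as a global ratio over ordered triangle counts; this is consistent because both sums count each triangle $3!$ times. One must also check that $\phi$ is used in its global (ratio-of-sums) sense, which is exactly Definition~\ref{def:clustering}, so no local homogeneity assumption is needed here. In the last identity, the key step is that simpliciality guarantees $x,y\in$ the neighbourhood of $i$, which makes the $k_1-2$ count exact.
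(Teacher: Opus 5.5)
Your proposal is correct and follows the paper's proof essentially step by step. Regularity gives the degree and connected-triple counts, the identity $A_{jk}+\bar A_{jk}=1$ with the global $\phi$ separates triangles from wedges, $\delta$ is read as a ratio of triangle counts, and simpliciality both reduces $H_{ixy}A_{ix}A_{iy}A_{xy}$ to $H_{ixy}$ and makes the count of $k_1-2$ external neighbours exact.
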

 
\begin{proof}
  Equation~\eqref{eq:net-edges} is the definition of the mean degree on a regular network.  Equation~\eqref{eq:net-triples} counts connected triples centered at~$i$: each of the $N$ nodes has $k_1$ neighbors, from which two distinct neighbors $j,k$ are chosen in order, giving $k_1(k_1-1)$ per node.  Equations~\eqref{eq:net-wedges}--\eqref{eq:net-triangles} partition these triples via $\bar{A}_{jk} + A_{jk} = 1$ and the definition of~$\phi$.  Equation~\eqref{eq:net-simplices} selects triangles carrying a hyperedge via~$\delta$; the factor $2Nk_2$ follows from the fact that every hyperedge $\set{i,j,k}$ is counted once for each of the $N$ nodes and each node counts the pair $\set{j,k}$ twice (once as $(j,k)$ and once as $(k,j)$), giving $2Nk_2$ in total. For equation~\eqref{eq:net-simplex-ext}, each simplicial hyperedge $\set{i,x,y}$ contributes $(k_1 - 2)$ edges from~$i$ to nodes outside the hyperedge (since $i$ necessarily has $2$ edges within the hyperedge), and the sum over~$j$ restricted to these external neighbors gives the stated count.
\end{proof}

\noindent\textit{Partial sums under topological homogeneity.}
After applying dynamical homogeneity, the rate equations~\eqref{eq:dI} and~\eqref{eq:dSI} separate into class-specific probability terms and partial sums of adjacency products with certain nodes fixed.  By Assumption~\ref{ass:topological-homogeneity}, each partial sum is promoted to a complete sum and evaluated using Proposition~\ref{prop:pair-network-counts}.

 \pagebreak
 
\begin{lemma}\label{lem:pair-partial-sums}
  Under topological homogeneity, the following partial sums hold exactly or approximately as indicated:
  \begin{align}
    \sum_{x \neq i} A_{ix}
      &= k_1,
      \label{eq:partial-degree} \\
    \sum_{\substack{x < y \\ x,y \neq i}} H_{ixy}
      &= k_2,
      \label{eq:partial-hypdegree} \\
    A_{ij} \sum_{x \neq i,j} A_{ix}
      &= k_1 - 1,
      \label{eq:partial-edge-degree} \\
    A_{ij} \sum_{x \neq i,j} H_{ijx}\, A_{ix}\, A_{jx}
      &\approx (k_1 - 1)\phi\delta,
      \label{eq:partial-simplex} \\
    A_{ij} \sum_{\substack{x < y \\ x,y \neq i,j}}
      H_{jxy}\, A_{jx}\, A_{jy}\, A_{xy}
      &\approx \tfrac{1}{2}(k_1 - 1)(k_1 - 2)\phi\delta = \frac{k_2}{k_1} (k_1 - 2).
      \label{eq:partial-simplex-ext}
  \end{align}
\end{lemma}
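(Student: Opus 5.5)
The plan is to treat the five identities separately: the first three follow exactly from regularity, and the last two follow from Assumption~\ref{ass:topological-homogeneity} combined with the global counts of Proposition~\ref{prop:pair-network-counts}.

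\textbf{Exact identities.} Equation~\eqref{eq:partial-degree} is the definition of $k_{1,i}$ on a regular network. Equation~\eqref{eq:partial-hypdegree} is $k_{2,i}$ from Definition~\ref{def:mean-degrees}, once Remark~\ref{rem:summation-convention} converts the ordered sum to $\tfrac12\sum_{x\neq y}$. For~\eqref{eq:partial-edge-degree}, fix an edge with $A_{ij}=1$. Then $\sum_{x\neq i}A_{ix}=k_1$, and removing the term $x=j$, which contributes $A_{ij}=1$, leaves $k_1-1$. These three hold exactly for every node or edge, so no homogeneity assumption is needed.

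\textbf{Approximate identities.} For~\eqref{eq:partial-simplex}, apply~\eqref{eq:sum-promotion} with
\[
p(i,j)=A_{ij},\qquad q(i,j,x)=H_{ijx}A_{ix}A_{jx}.
\]
The denominator is $Nk_1$ by~\eqref{eq:net-edges}. The numerator $\sum A_{ij}A_{ix}A_{jx}H_{ijx}$ over distinct $i,j,x$ is a relabelling of~\eqref{eq:net-simplices}, namely $Nk_1(k_1-1)\phi\delta$. Dividing gives $(k_1-1)\phi\delta$.

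For~\eqref{eq:partial-simplex-ext}, take
\[
p=A_{ij},\qquad q=H_{jxy}A_{jx}A_{jy}A_{xy}.
\]
The ordered sum $x<y$ contributes a factor $\tfrac12$. The full numerator $\sum_{i,j,x,y}A_{ij}H_{jxy}A_{jx}A_{jy}A_{xy}$ is~\eqref{eq:net-simplex-ext} with the roles of $i$ and $j$ exchanged, using the symmetry of $\mathbf A$. It therefore equals $Nk_1(k_1-1)(k_1-2)\phi\delta$. Dividing by $Nk_1$ and halving gives $\tfrac12(k_1-1)(k_1-2)\phi\delta$.

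The final equality uses~\eqref{eq:net-simplices}, $Nk_1(k_1-1)\phi\delta=2Nk_2$. This gives $(k_1-1)\phi\delta=2k_2/k_1$, so the expression becomes $\tfrac{k_2}{k_1}(k_1-2)$.

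\textbf{Main obstacle.} The delicate step is the bookkeeping in the four-index numerator. I must check that the distinctness restriction $x,y\neq i,j$ matches the distinct-index sum in~\eqref{eq:net-simplex-ext}; here $i\notin\{x,y\}$ is automatic because $i$ must be an external neighbour of $j$. I must also confirm that the factor $\tfrac12$ from ordering is applied exactly once, and not also absorbed into the $2Nk_2$ count.

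I would verify this directly. Each simplex at $j$ is counted twice as the ordered pair $(x,y)$, and there are $k_1-2$ choices of external neighbour $i$. This gives a local count of $k_{2}(k_1-2)$ per node $j$, hence a total of $Nk_2(k_1-2)$ per unit of $\tfrac12$. Dividing by $Nk_1$ reproduces $\tfrac{k_2}{k_1}(k_1-2)$, which confirms the normalisation.
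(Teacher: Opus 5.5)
Your proposal is correct and follows the paper's proof: regularity gives the first three identities, and promotion with $p=A_{ij}$ against the counts of Proposition~\ref{prop:pair-network-counts} gives the last two, using the $i\leftrightarrow j$ relabelling and the ordered-to-unordered factor $\tfrac12$. One small wording fix: $i\notin\{x,y\}$ is not automatic from the adjacency structure, since the edge $\{i,j\}$ may itself lie in a simplex at $j$; it is imposed by the summation restriction, and that is exactly why the promoted numerator equals the distinct-index sum \eqref{eq:net-simplex-ext}.
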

 
\begin{proof}
  Equations~\eqref{eq:partial-degree} and~\eqref{eq:partial-hypdegree} follow directly from regularity.  Equation~\eqref{eq:partial-edge-degree} follows since node~$i$ has $k_1$ neighbors, one of which is~$j$, leaving $k_1 - 1$.
 
  For equation~\eqref{eq:partial-simplex}, we promote with $p = A_{ij}$ from Assumption~\ref{ass:topological-homogeneity}:
  \[
    A_{ij}\!\sum_{x \neq i,j}
      H_{ijx}\, A_{ix}\, A_{jx}
    \;\approx\;
    \frac{
      \displaystyle\sum_{\substack{i,j,x \\ \textup{distinct}}}
        H_{ijx}\, A_{ij}\, A_{ix}\, A_{jx}
    }{
      \displaystyle\sum_{\substack{i,j \\ \textup{distinct}}} A_{ij}
    }
    = \frac{N k_1(k_1 - 1)\phi\delta}{N k_1}
    = (k_1 - 1)\phi\delta.
  \]
  Equation~\eqref{eq:partial-simplex-ext} follows by an analogous promotion using equations~\eqref{eq:net-simplex-ext} and~\eqref{eq:net-edges}, together with the ordered-to-unordered conversion $\sum_{x<y} \to \frac{1}{2}\sum_{\substack{x,y \\ y \neq x}}$.
\end{proof}
 
\begin{remark}\label{rem:pair-partial-approx}
  Equations~\eqref{eq:partial-simplex} and~\eqref{eq:partial-simplex-ext} are approximate: $\phi$ and~$\delta$ are global network properties that need not hold locally around every edge.  However, the product $\phi\delta$ is exact in aggregate since $k_2 = k_1(k_1 - 1)\phi\delta/2$ is fixed by regularity.
\end{remark}
 
\medskip
\noindent\textit{Recovering the target rate equations.}
 
\begin{theorem}\label{thm:pair-rate-eqs}
  On a regular simplicial complex with parameters $k_1$, $\phi$, $\delta$, applying dynamical and topological homogeneity to equations~\eqref{eq:dI} and~\eqref{eq:dSI} and evaluating the partial sums via the identities in Lemma~\ref{lem:pair-partial-sums} yields the PBS target equations~\eqref{eq:pbs-target}.
\end{theorem}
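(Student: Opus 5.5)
The plan is to treat the single-node and pair equations separately. In each, I insert the class-selecting factors $1 = A+\bar A$ and $1 = H+\bar H$, apply dynamical homogeneity (Assumption~\ref{ass:dynamical-homogeneity}) to replace labelled probabilities by class averages, and then promote the remaining adjacency partial sums via Assumption~\ref{ass:topological-homogeneity}. Each promoted sum is evaluated with Lemma~\ref{lem:pair-partial-sums}.

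\emph{Single-node equation.} Multiplying the pairwise term of~\eqref{eq:dI} by $A_{ix}$ kills the $\bar A_{ix}$ class, so it becomes $\prob{SI}^-\sum_x A_{ix} = k_1\prob{SI}$ by~\eqref{eq:partial-degree}. For the hyperedge term, the simplicial constraint~\eqref{eq:simplicial-constraint} forces $H_{ixy} = H_{ixy}A_{ix}A_{iy}A_{xy}$, so only the $\bigcirc\triangle$ class survives. By~\eqref{eq:partial-hypdegree} the term becomes $k_2\prob{ISI}^{\bigcirc}$, which gives~\eqref{eq:pbs-target-I}. Equation~\eqref{eq:pbs-target-S} follows from $\prob{S}=1-\prob{I}$.

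\emph{Pair equation.} I multiply~\eqref{eq:dSI} by $A_{ij}$ and go through it line by line.
\begin{itemize}
\item The recovery and internal pairwise terms give $\gamma(\prob{II}-\prob{SI})-\beta_1\prob{SI}$ directly.
\item In the external pairwise terms, $A_{ij}\sum_x A_{ix}\prob{S_iI_jI_x}$ splits over the triangle and wedge classes of $\set{i,j,x}$ through the factor $A_{jx}+\bar A_{jx}$. The total count is $k_1-1$ by~\eqref{eq:partial-edge-degree}, and the class weights are $\phi$ and $1-\phi$ by promotion with~\eqref{eq:net-wedges}--\eqref{eq:net-triangles}. This reproduces $(k_1-1)\prob{ISI}$ with the mixture closure~\eqref{eq:pbs-closure-triplet-mixed}. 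The $\prob{SSI}$ term is handled the same way.
\item The mixed hyperedge term uses~\eqref{eq:partial-simplex}, giving $(k_1-1)\phi\delta = 2k_2/k_1$ and hence $-2\beta_2\frac{k_2}{k_1}\prob{ISI}^\bigcirc$. Here I must check that the class of $\set{i,j,x}$ is forced to be $\bigcirc\triangle$.
\item The external hyperedge terms use~\eqref{eq:partial-simplex-ext}, giving the prefactor $\frac{k_2}{k_1}(k_1-2)$ on quadruplets of the form $\prob{IIS_\triangle X}$. These are simplices attached to the fixed edge $\set{ij}$.
\end{itemize}

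The $\prob{SS}$ and $\prob{II}$ equations then follow by differentiating $\prob{SS}=\prob{S}-\prob{SI}$ and $\prob{II}=\prob{I}-\prob{SI}$ and substituting. I must verify that the coefficient bookkeeping reproduces~\eqref{eq:pbs-target-SS} and~\eqref{eq:pbs-target-II}; the factors of $2$ arise from symmetry of $\prob{SI}$ and $\prob{IS}$.

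The main obstacle is the quadruplet term. The external node $i$ (or $j$) joins the simplex $\set{j,x,y}$ through the edge $\set{ij}$, and it may also have $0$, $1$ or $2$ edges to $x,y$. To handle this I insert $(A_{ix}+\bar A_{ix})(A_{iy}+\bar A_{iy})$ and weight the resulting sub-configurations by $(1-\phi)^2$, $2\phi(1-\phi)$ and $\phi^2$, which gives~\eqref{eq:pbs-target-quadruplet}. These weights are not a consequence of Assumption~\ref{ass:topological-homogeneity} alone. They require treating the two closures as independent events with probability $\phi$, and I expect this is where an extra implicit assumption must be stated explicitly. Each of the sub-configurations $\prob{IIS_\triangle X}_m$ is then closed at the pair level: I apply the closure map~\eqref{eq:closure-4} and reduce every surviving triplet with Lemma~\ref{lem:triplet-closure}. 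For example, $m=0$ is the pendant of Remark~\ref{rem:pendant-with-wedge-closure}, with the triangle expanded by Kirkwood. I then check agreement with~\eqref{eq:pbs-target-4node-closures}. Any mismatch there, together with the independence of the $\phi$-weights, is what I would flag rather than force.
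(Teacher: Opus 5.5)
Your proposal follows the paper's proof essentially step for step: the same term-by-term reduction of \eqref{eq:dI} and \eqref{eq:dSI} via the simplicial constraint and the partial sums of Lemma~\ref{lem:pair-partial-sums}, with the $\prob{SS}$ and $\prob{II}$ equations obtained from the marginal identities. Two small remarks. First, when you check~\eqref{eq:pbs-target-SS} and~\eqref{eq:pbs-target-II}, the factors of $2$ on the $\beta_1$ and $\beta_2$ terms do not come from $\prob{SI}=\prob{IS}$ alone. They require the marginalization relations $\prob{SI}=\prob{SSI}+\prob{ISI}$ and $\prob{ISI}^{\bigcirc}=\prob{IIS_\triangle S}+\prob{IIS_\triangle I}$, applied to the unclosed densities. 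Second, the $\phi$-weighted quadruplet decomposition and its closures are not needed for this theorem; the paper treats them separately in Lemma~\ref{lem:pair-closure-quadruple}. There it simply asserts the independence of the two closure events, which you rightly flag as an extra assumption.
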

 
\begin{proof}
  We treat each rate equation in turn.
 
  \textit{Step~1: The node-level equation.}
  Equation~\eqref{eq:dI} reads
  \[
    \frac{d}{dt}\prob{I_i}
    = -\gamma\,\prob{I_i}
      + \beta_1 \sum_{x \neq i} A_{ix}\,\prob{S_i I_x}
      + \beta_2 \sum_{\substack{x < y \\ x,y \neq i}}
        H_{ixy}\,\prob{S_i I_x I_y}.
  \]
  Under dynamical homogeneity, $\prob{I_i} = \prob{I}$ and $\prob{S_i I_x} = \prob{SI}$ for all edges $\set{i,x}$.  The $\beta_1$ sum evaluates to $k_1\,\prob{SI}$ by equation~\eqref{eq:partial-degree}.  For the $\beta_2$ sum, the simplicial constraint~\eqref{eq:simplicial-constraint} ensures that every hyperedge $\set{i,x,y}$ carries a complete triangle, so $\prob{S_i I_x I_y}$ belongs to the class $\prob{ISI}^{\bigcirc\triangle}$. Equation~\eqref{eq:partial-hypdegree} then gives $k_2\,\prob{ISI}^{\bigcirc \triangle}$, recovering equation~\eqref{eq:pbs-target-I}, where the simplicial constraint~\eqref{eq:simplicial-constraint} ensures that every hyperedge carries a complete triangle, so $\prob{ISI}^{\bigcirc\triangle} = \prob{ISI}^{\bigcirc}$.
 
  \textit{Step~2: The pair-level equation.}
  Equation~\eqref{eq:dSI} contains: recovery terms, an internal pairwise infection term, external pairwise infection terms, a mixed hyperedge infection term, and external hyperedge infection terms. We evaluate each below.
 
  The recovery terms $-\gamma\,\prob{SI} + \gamma\,\prob{II}$ and the internal pairwise infection term $-\beta_1\,\prob{SI}$ pass through unchanged under dynamical homogeneity.
 
  The external pairwise infection terms contribute
  \[
    -\beta_1(k_1 - 1)\,\prob{ISI}
    + \beta_1(k_1 - 1)\,\prob{SSI},
  \]
  using equation~\eqref{eq:partial-edge-degree}. The generic triple densities $\prob{ISI}$ and $\prob{SSI}$ contain both wedge and triangle contributions, as obtained by splitting equation~\eqref{eq:partial-edge-degree} into the triangle and wedge variant through equations~\eqref{eq:net-wedges} and~\eqref{eq:net-triangles}.
 
  For the mixed hyperedge infection term, given the prefactor $A_{ij}\, H_{ijx}$, the simplicial constraint forces $A_{ix} = A_{jx} = 1$, so the triple is a simplex.  By equation~\eqref{eq:partial-simplex}, the coefficient is $(k_1 - 1)\phi\delta = 2k_2/k_1$.  The two contributions (one from each susceptible node in the simplex being infected) yield $-2\beta_2\,(k_2/k_1)\,\prob{ISI}^{\bigcirc \triangle}$.
 
  For the external hyperedge infection terms, the prefactor $A_{ij}\, H_{jxy}\, A_{jx}\, A_{jy}\, A_{xy}$ involves a simplex $\set{j,x,y}$ connected to~$i$ via the edge $\set{i,j}$, producing a four-node motif.  By equation~\eqref{eq:partial-simplex-ext}, the coefficient is $(k_2/k_1)(k_1 - 2)$.  The resulting terms involve the quadruplet densities $\prob{IIS_\triangle S}$ and $\prob{IIS_\triangle I}$ (depending on the state of~$i$), contributing the remaining terms in~\eqref{eq:pbs-target-SI}.  The equations for $\prob{SS}$ and $\prob{II}$ need not be derived independently: they follow from the marginal identities $\prob{II} = \prob{I} - \prob{SI}$ and $\prob{SS} = \prob{S} - \prob{SI}$ applied to equations~\eqref{eq:pbs-target-I} and~\eqref{eq:pbs-target-SI}.
\end{proof}
 
\medskip
\noindent\textit{Recovering the closure expressions.}

\begin{remark} \label{rem:pbs-closure-triples}
    Since the PBS model hierarchy closes at the pair level, the target triplet closures~\eqref{eq:pbs-target-triplet-closures} follows directly from the canonical triplet closure in equation~\eqref{eq:triplet-closure-lemma} (Lemma~\ref{lem:triplet-closure}) under dynamical homogeneity (Assumption~\ref{ass:dynamical-homogeneity}), which effectively drops the node labels. The proportion of each class---triangle and wedge---among all triplets is determined by $\phi$ and $(1-\phi)$, respectively, which yields equation~\eqref{eq:pbs-closure-triplet-mixed}.
\end{remark}



\begin{lemma}\label{lem:pair-closure-quadruple}
  Consider a quadruplet consisting of a simplex (three nodes with all pairwise edges and a hyperedge) connected to an external node via at least one pairwise edge. Applying the truncated Kirkwood approximation $\tilde{\kappa}$ and the independence operator $\Phi$ for each absent edge yields four cases:
  \begin{equation}\label{eq:pair-closure-quadruple}
    \prob{X Y Z W} \approx
      \begin{cases}
        \dfrac{\prob{X Y}\,\prob{X Z}\,\prob{X W}\,
          \prob{Y Z}\,\prob{Y W}\,\prob{Z W}}
          {\prob{X}^2\,\prob{Y}^2\,\prob{Z}^2\,\prob{W}^2}
        & \textup{if both external edges are present}, \\[10pt]
        \dfrac{\prob{X Y}\,\prob{X W}\,
          \prob{Y Z}\,\prob{Y W}\,\prob{Z W}}
          {\prob{X}\,\prob{Y}^2\,\prob{Z}\,\prob{W}^2}
        & \textup{if only the edge between $X$ and $W$ is present}, \\[10pt]
        \dfrac{\prob{X Y}\,\prob{X Z}\,
          \prob{Y Z}\,\prob{Y W}\,\prob{Z W}}
          {\prob{X}\,\prob{Y}^2\,\prob{Z}^2\,\prob{W}}
        & \textup{if only the edge between $X$ and $Z$ is present}, \\[10pt]
        \dfrac{\prob{X Y}\,\prob{Y Z}\,
          \prob{Y W}\,\prob{Z W}}
          {\prob{Y}^2\,\prob{Z}\,\prob{W}}
        & \textup{if both external edges are absent}.
      \end{cases}
  \end{equation}
  Under topological homogeneity, the four cases occur with densities $\phi^2$, $\phi(1-\phi)$, $\phi(1-\phi)$, and $(1-\phi)^2$ respectively, where $\phi$ is the probability that a connected triplet closes into a triangle.
\end{lemma}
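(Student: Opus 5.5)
The plan is to work at the pair level of closure, consistent with the PBS hierarchy being closed at pairs. I apply the truncated Kirkwood approximation $\tilde{\kappa}$ of \eqref{eq:kirkwood-tilde} with $m=4$, where the inner $(m-1)=3$-node factors $\kappa(s)$ are the pair-level triangle closures of Lemma~\ref{lem:triplet-closure}. I then strip absent edges with the independence operator $\Phi$. The labelling I fix is: the simplex is $\set{Y,Z,W}$ (edges $YZ$, $YW$, $ZW$ and a hyperedge), the external node is $X$ attached via the edge $XY$, and the two variable external edges are $\set{X,Z}$ and $\set{X,W}$. This is the reading consistent with the fourth case, whose numerator is exactly the edge set $\set{XY, YZ, YW, ZW}$.

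\emph{Step 1 (fully connected case).} Write $\tilde{\kappa}(\set{X,Y,Z,W}) = \prob{X}\prob{Y}\prob{Z}\prob{W}\,\bigl[\prod_{\text{pairs}}\prob{\cdot\cdot}\bigr]^{-1}\prod_{\text{triplets}}\kappa(s)$, with each $\kappa(s)$ equal to the triangle expression $\prob{ab}\prob{bc}\prob{ac}/(\prob{a}\prob{b}\prob{c})$. Each pair lies in exactly two of the four triplets and each node in exactly three. The triplet product is therefore $\prod \prob{\cdot\cdot}^2/\prod\prob{\cdot}^3$, and after the prefactors the result is $\prod_{\text{pairs}}\prob{\cdot\cdot}/\prod\prob{\cdot}^2$. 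This is the first case.

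The hyperedge operator $\lambda_{\set{YZW}}$ acts trivially here. At pair level no triplet probability survives to be tagged, so the $\bigcirc$ information is invisible in the closed expression. I would state this explicitly, since it is why the simplex and the plain triangle give the same formula.

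\emph{Step 2 (absent external edges).} Apply $\Phi_{XZ}$ and/or $\Phi_{XW}$ to the Step~1 expression, using only the pair rule $\prob{ab}\mapsto\prob{a}\prob{b}$. By Proposition~\ref{prop:phi-properties} the order of application is irrelevant.

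Removing $XZ$ deletes $\prob{XZ}$ from the numerator and lowers the powers of $\prob{X}$ and $\prob{Z}$ in the denominator by one, giving the third-listed configuration with $XW$ present. Removing $XW$ symmetrically gives the other single-edge case. Removing both gives $\prob{XY}\prob{YZ}\prob{YW}\prob{ZW}/(\prob{Y}^2\prob{Z}\prob{W})$.

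As a consistency check, this last expression equals the triangle closure on $\set{Y,Z,W}$ times $\prob{XY}/\prob{Y}$, which is the pendant-with-wedge reduction of Remark~\ref{rem:pendant-with-wedge-closure}. I would also confirm that applying $\Phi$ at the triplet stage (demoting triangles to wedges via Lemma~\ref{lem:triplet-closure}) before multiplying gives the same result. It should, because the wedge closure is itself $\Phi$ applied to the triangle closure.

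\emph{Step 3 (densities).} Under Assumption~\ref{ass:topological-homogeneity}, fix the simplex-plus-edge $XY$ motif. Each of $XZ$ and $XW$ closes a connected triple centred at $Y$ (e.g. $X$--$Y$--$Z$). By Definition~\ref{def:clustering}, promoted to every such triple, each is present with probability $\phi$. This gives $\phi^2$, $\phi(1-\phi)$, $\phi(1-\phi)$ and $(1-\phi)^2$.

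This is the step I expect to be the main obstacle. Obtaining the product form requires treating the two closure events as independent, which is not implied by the global count $\phi$ alone. The promoted sums of adjacency products such as $A_{XY}A_{XZ}A_{XW}\cdots$ would instead involve four-node clustering statistics. I would handle this by stating the independence explicitly as part of the topological-homogeneity approximation, mirroring the $(1-\phi)^2, 2\phi(1-\phi), \phi^2$ weights in \eqref{eq:pbs-target-quadruplet}, and flagging it as an additional assumption beyond Proposition~\ref{prop:pair-network-counts}.
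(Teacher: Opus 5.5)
Your proposal is correct and follows essentially the same route as the paper. Both use the same labelling (simplex $\set{Y,Z,W}$, external node $X$ attached via $XY$), reduce the four-node $\tilde{\kappa}$ to $\prod\prob{\cdot\,\cdot}/\prod\prob{\cdot}^2$, apply the pair rule of $\Phi$ for the absent edges $XZ$ and $XW$, and assign closure probability $\phi$ to each connected triple centred at $Y$ to get the densities; two small remarks: removing $XZ$ gives the \emph{second}-listed case (only $XW$ present), not the third, and your flag that the product-form densities need an explicit independence assumption is well taken, since the paper simply asserts that the two closure events are independent without deriving this from $\phi$.
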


\begin{proof}
  Although the lemma is stated without node labels, the case analysis requires us to identify \emph{which} of the two unresolved edges is present in the ``one edge present'' configurations. We therefore introduce labels as scaffolding for the proof and discard them at the end. Let $\set{j,k,l}$ be the simplex and $i$ be the external node with the known edge $\set{i,j}$; the unresolved edges are $\set{i,k}$ and $\set{i,l}$.

  The truncated Kirkwood approximation on four nodes, $\tilde\kappa$, expresses the joint probability in terms of pair and single probabilities:
  \[
    \tilde\kappa(\set{X_i Y_j Z_k W_l}) = \frac{\prob{X_i Y_j}\,\prob{X_i Z_k}\,\prob{X_i W_l}\,
        \prob{Y_j Z_k}\,\prob{Y_j W_l}\,\prob{Z_k W_l}}
        {\prob{X_i}^2\,\prob{Y_j}^2\,\prob{Z_k}^2\,\prob{W_l}^2}.
  \]
  The simplex fixes $A_{jk} = A_{kl} = A_{jl} = 1$, so the independence operator does not act on the internal pairs. For each absent external edge, $\Phi$ factorizes the corresponding pair, e.g.\ $\Phi_{ik}\colon \prob{X_i Z_k} \mapsto \prob{X_i}\prob{Z_k}$, which then cancels against factors in the denominator. The four combinations of presence/absence of $\set{i,k}$ and $\set{i,l}$ give:
  \begin{equation}\label{eq:pair-closure-quadruple-labelled}
    \prob{X_i Y_j Z_k W_l} \approx
    \begin{cases}
      \dfrac{\prob{X_i Y_j}\,\prob{X_i Z_k}\,\prob{X_i W_l}\,
        \prob{Y_j Z_k}\,\prob{Y_j W_l}\,\prob{Z_k W_l}}
        {\prob{X_i}^2\,\prob{Y_j}^2\,\prob{Z_k}^2\,\prob{W_l}^2}
      & \textup{if } A_{ik} = A_{il} = 1, \\[10pt]
      \dfrac{\prob{X_i Y_j}\,\prob{X_i W_l}\,
        \prob{Y_j Z_k}\,\prob{Y_j W_l}\,\prob{Z_k W_l}}
        {\prob{X_i}\,\prob{Y_j}^2\,\prob{Z_k}\,\prob{W_l}^2}
      & \textup{if } A_{ik} = 0,\; A_{il} = 1, \\[10pt]
      \dfrac{\prob{X_i Y_j}\,\prob{X_i Z_k}\,
        \prob{Y_j Z_k}\,\prob{Y_j W_l}\,\prob{Z_k W_l}}
        {\prob{X_i}\,\prob{Y_j}^2\,\prob{Z_k}^2\,\prob{W_l}}
      & \textup{if } A_{ik} = 1,\; A_{il} = 0, \\[10pt]
      \dfrac{\prob{X_i Y_j}\,\prob{Y_j Z_k}\,
        \prob{Y_j W_l}\,\prob{Z_k W_l}}
        {\prob{Y_j}^2\,\prob{Z_k}\,\prob{W_l}}
      & \textup{if } A_{ik} = A_{il} = 0.
    \end{cases}
  \end{equation}
  Dropping node labels under dynamical homogeneity gives the expressions in~\eqref{eq:pair-closure-quadruple}.

  For the densities, consider the triplet $\set{i,j,x}$ for $x \in \set{k,l}$. By construction $A_{ij} = A_{jx} = 1$, so this triplet is a connected triple that forms a triangle if $A_{ix} = 1$ and a wedge (centered at $j$) if $A_{ix} = 0$. By the definition of $\phi$, these occur with probabilities $\phi$ and $1 - \phi$ respectively, independently for $x = k$ and $x = l$. The four cases of the lemma therefore have densities $\phi^2$, $\phi(1-\phi)$, $\phi(1-\phi)$, and $(1-\phi)^2$.

  Specializing the state labels to $X = I$, $Y = I$, $Z = S$, and $W \in \set{S, I}$, and combining the two ``one edge present'' cases (which share density $\phi(1-\phi)$ and yield the same closure after relabeling), recovers the sub-configurations $\prob{IIS_\triangle S}_m$ and $\prob{IIS_\triangle I}_m$ of equation~\eqref{eq:pbs-target-4node-closures} with weights $(1-\phi)^2$, $2\phi(1-\phi)$, and $\phi^2$ for $m = 0, 1, 2$. This reproduces the decomposition~\eqref{eq:pbs-target-quadruplet} and matches equation~(17)~of~\cite{malizia2025pair}.
\end{proof}


\begin{remark}\label{cor:pair-model-recovery}
  Starting from the microscopic rate equations at the node and pair level~\eqref{eq:dI}--\eqref{eq:dSI}, the pair-based simplicial contagion model of~\cite{malizia2025pair}---i.e.\ the target equations~\eqref{eq:pbs-target} together with the closures~\eqref{eq:pbs-target-triplet-closures} and~\eqref{eq:pbs-target-4node-closures}---is recovered by applying the following ingredients:
  \begin{enumerate}
    \item \emph{Structural constraints on the network.} Regularity (every node has degree $k_1$ and hyperedge degree $k_2$) and the simplicial constraint~\eqref{eq:simplicial-constraint} (every hyperedge carries a complete triangle).
    \item \emph{Closure of the hierarchy at the pair level.} The truncated Kirkwood approximation $\tilde\kappa$, combined with the independence operator $\Phi$, expresses triplet and quadruplet motifs in terms of pairs and singles. For the triplets appearing in~\eqref{eq:dSI}, this yields the wedge and triangle closures of Remark~\ref{rem:pbs-closure-triples}; for the quadruplets, it yields Lemma~\ref{lem:pair-closure-quadruple}.
    \item \emph{Dynamical homogeneity} (Assumption~\ref{ass:dynamical-homogeneity}). State joint probabilities depend only on the isomorphism class of the sub-graph spanned by the nodes, not on the node labels themselves.
    \item \emph{Topological homogeneity} (Assumption~\ref{ass:topological-homogeneity}). Local sub-graph counts (degree, hyperedge membership of an edge, etc.) are replaced by their network-wide averages, evaluated using Lemma~\ref{lem:pair-partial-sums}.
  \end{enumerate}
  Combining these ingredients with Theorem~\ref{thm:pair-rate-eqs} produces the closed system~\eqref{eq:pbs-target} with no further approximation.
\end{remark}

\subsection{Deriving the inter-order model} \label{subsec:inter-order}

We show that the mean-field inter-order overlap model of Malizia et al.~\cite{malizia2026nested} can be recovered from equations~\eqref{eq:dI}--\eqref{eq:dSII} under structural and homogeneity assumptions, together with closing the hierarchy at the level of triplets. Unlike the PBS model of Subsection~\ref{subsec:pair-based}, the inter-order model does not require hyperedges to carry an underlying pairwise triangle; instead, it parametrizes the overlap between pairwise edges and hyperedges through a single global parameter $\alpha \in [0,1]$.

The derivation of this model differs from the previous two in an important respect. As we will demonstrate, the structural and dynamical assumptions \emph{explicitly stated} in~\cite{malizia2026nested} are not, on their own, sufficient to recover the target rate equations from the microscopic system~\eqref{eq:dI}--\eqref{eq:dSII}. Under the stated structural constraints and given network parameters, applying the mean-field approach to the microscopic rate equations leaves the system with extra degrees of freedom that remain unresolved. In order to recover the inter-order model exactly, further parameterizations and extra assumptions must be invoked: (a) $3$ parameters $\theta(m)$ to characterize the exact distribution of edges within hyperedges, (b) a hyperedge dynamical homogeneity assumption that treats all hyperedge triplet probabilities as equivalent regardless of their internal pairwise structure, and (c) a stronger sparsity condition than stated. We make these implicit assumptions explicit alongside the stated ones, and carry out the derivation in two stages: first solving the dynamics on terms of motif size $\leq 3$ and demonstrating where the extra parameters $\theta$ and hyperedge dynamical homogeneity assumption enter, then solving the terms for motifs of size $4$ and $5$ using the stronger sparsity condition.

\subsubsection{Model parameters and notation} \label{subsubsec:inter-order-notation}

The inter-order model is defined on a higher-order network $\mathcal{G} = (\mathcal{N}, \mathcal{E}, \mathcal{H})$ with average degree $k_1$, average hyperedge degree $k_2$ (recall Definition~\ref{def:mean-degrees}), and clustering coefficient $\phi$ (recall Definition~\ref{def:clustering}).

The inter-order model further imposes the following structural/topological sparsity assumptions on $\mathcal{G}$, stated explicitly in~\cite{malizia2026nested}.

\begin{assumption}\label{ass:inter-order-stated-sparsity}
  The network $\mathcal{G}$ satisfies:
  \begin{enumerate}
    \item \emph{Hyperedge pair-disjointedness:} $H_{ijk}\, H_{ijk'} = 0$ for all $i, j \in \mathcal{N}$ and all distinct $k, k' \in \mathcal{N}$. That is, no two hyperedges share two common nodes.
    \item \emph{Triangle edge-disjointedness:} no two triangles share a common edge. Equivalently, among any four distinct nodes, at most one triangle can exist.
  \end{enumerate}
\end{assumption}

These conditions limit the number of distinct four- and five-node motifs that need to be tracked, which is essential for the model to admit a closed system of equations.

Unlike the PBS model, the inter-order model permits hyperedges that do not carry a complete pairwise triangle. The overlap between pairwise and hyperedge structures is captured by a single global parameter $\alpha$.
\begin{definition}\label{def:alpha}
  For a hyperedge $h = \set{i,j,k} \in \mathcal{H}$, let $\epsilon(h) = A_{ij} + A_{ik} + A_{jk}$ denote the number of present pairwise edges among its three node pairs. The \emph{inter-order overlap parameter} $\alpha$ is
  \[
    \alpha
    = \frac{1}{3|\mathcal{H}|}
      \sum_{h \in \mathcal{H}} \epsilon(h).
  \]
\end{definition}

\begin{proposition}\label{prop:alpha-range}
  The inter-order overlap parameter satisfies $0 \leq \alpha \leq 1$.
\end{proposition}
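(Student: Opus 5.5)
The plan is to bound each summand $\epsilon(h)$ pointwise and then average. For any hyperedge $h=\set{i,j,k}\in\mathcal{H}$, each adjacency entry satisfies $A_{ij},A_{ik},A_{jk}\in\set{0,1}$ by Definition~\ref{def:adjacency-tensors}. Hence $\epsilon(h)=A_{ij}+A_{ik}+A_{jk}$ is an integer with $0\le\epsilon(h)\le 3$. The lower bound is attained by a hyperedge with no underlying edges (class $\bigcirc\bullet$ in Table~\ref{tab:isomorphism-class-notation}). The upper bound is attained by a simplex (class $\bigcirc\triangle$).

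Next, I would sum these inequalities over all $h\in\mathcal{H}$ to get $0\le\sum_{h\in\mathcal{H}}\epsilon(h)\le 3|\mathcal{H}|$. Dividing by $3|\mathcal{H}|$ then gives $0\le\alpha\le1$ directly from Definition~\ref{def:alpha}. Equivalently, $\alpha$ is the average of $\epsilon(h)/3\in[0,1]$ over the finite set $\mathcal{H}$, and an average of numbers in $[0,1]$ lies in $[0,1]$.

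The only real obstacle is that $\alpha$ must be well defined. The division requires $|\mathcal{H}|>0$, so I would state explicitly that the definition presupposes a nonempty hyperedge set, which the inter-order model implicitly assumes since $k_2=3|\mathcal{H}|/N>0$. If $\mathcal{H}=\emptyset$, the quantity is undefined and the proposition is vacuous. One could optionally fix a convention such as $\alpha=0$ in that case, which still lies in the range.

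As a sanity check, I would note that both endpoints are realized. If every hyperedge carries no pairwise edges, then $\alpha=0$. If every hyperedge is a simplex, then $\alpha=1$, recovering the simplicial constraint~\eqref{eq:simplicial-constraint} of the PBS model. This shows the bounds are sharp.
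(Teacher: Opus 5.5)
Your proposal is correct and follows the paper's argument. The paper likewise bounds each hyperedge's contribution $\epsilon(h)$ between $0$ and $3$ because $A_{ij}\in\{0,1\}$, and then divides by $3|\mathcal{H}|$. Your remarks that $\alpha$ requires $|\mathcal{H}|>0$ and that both bounds are attained are correct; the paper leaves the first implicit and notes the second in Remark~\ref{rem:alpha-extremes}.
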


\begin{proof}
  Each hyperedge contributes between $0$ and $3$ to the numerator, since $A_{ij} \in \set{0,1}$.
\end{proof}

\begin{remark}\label{rem:alpha-extremes}
  When $\alpha = 0$, no pair of nodes in any hyperedge is connected by an edge. When $\alpha = 1$, every hyperedge $\set{i,j,k} \in \mathcal{H}$ satisfies $\set{i,j}, \set{i,k}, \set{j,k} \in \mathcal{E}$, so that each hyperedge together with its pairwise edges forms a $2$-simplex.
\end{remark}

\begin{lemma}\label{lem:alpha-tensor}
  The inter-order overlap parameter can be expressed as
  \[
    \alpha
    = \frac{1}{6\,|\mathcal{H}|}
      \sum_{i,j,k} H_{ijk}\, A_{ik}.
  \]
\end{lemma}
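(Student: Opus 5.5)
The plan is to rewrite the sum over hyperedges in Definition~\ref{def:alpha} as an unrestricted sum over ordered index triples, then use the permutation symmetry of $\mathbf{H}$ (Definition~\ref{def:adjacency-tensors}) to reduce the three edge indicators in $\epsilon(h)$ to a single representative one.

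First we convert the hyperedge sum. Each unordered hyperedge $h=\set{i,j,k}\in\mathcal{H}$ corresponds to exactly $3! = 6$ ordered triples $(i,j,k)$ with $H_{ijk}=1$. Also, $H_{ijk}=0$ whenever two indices coincide. Hence, for any function $g$ symmetric in its three arguments,
\[
\sum_{h\in\mathcal{H}} g(h) = \frac{1}{6}\sum_{i,j,k} H_{ijk}\, g(i,j,k).
\]
Applying this with $g = \epsilon$ gives
\[
\sum_{h} \epsilon(h) = \frac16 \sum_{i,j,k} H_{ijk}\,(A_{ij}+A_{ik}+A_{jk}).
\]

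Next we show the three terms are equal. Relabeling the summation indices and using $H_{ijk}=H_{ikj}=\cdots$ gives
\[
\sum_{i,j,k} H_{ijk}A_{ij} = \sum_{i,j,k} H_{ikj}A_{ik} = \sum_{i,j,k} H_{ijk}A_{ik},
\]
where the first equality swaps the dummy indices $j$ and $k$. In the same way, the $A_{jk}$ term is mapped to the $A_{ik}$ term by swapping $i$ and $j$. Therefore
\[
\sum_h \epsilon(h) = \frac{3}{6}\sum_{i,j,k} H_{ijk}A_{ik} = \frac12\sum_{i,j,k} H_{ijk}A_{ik}.
\]

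Finally, dividing by $3|\mathcal{H}|$ as in Definition~\ref{def:alpha} yields
\[
\alpha = \frac{1}{6|\mathcal{H}|}\sum_{i,j,k}H_{ijk}A_{ik},
\]
which is the claim.

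The only delicate step is the bookkeeping of the factor $6$ from ordered versus unordered triples. This relies on $H_{ijk}$ vanishing on coincident indices, which is what allows the sum to run over all $i,j,k$ independently rather than over distinct indices. As a consistency check, the same counting gives $\sum_{i,j,k}H_{ijk}=6|\mathcal{H}|$, which agrees with $k_2 = 3|\mathcal{H}|/N$ in Definition~\ref{def:mean-degrees}.
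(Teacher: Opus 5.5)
Your proof is correct and follows essentially the same route as the paper. You convert the hyperedge sum into an ordered-triple sum with the factor $3! = 6$, use dummy-index relabeling together with the permutation symmetry of $\mathbf{H}$ to show that the three edge terms in $\epsilon(h)$ contribute equally, and then divide by $3|\mathcal{H}|$; your remark that $H_{ijk}$ vanishes on coincident indices, which justifies the unrestricted sum, makes explicit a point the paper leaves implicit.
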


\begin{proof}
  Since $\mathbf{H}$ is symmetric under all permutations, each hyperedge $\set{i,j,k} \in \mathcal{H}$ contributes to all $3! = 6$ ordered triples, so $\sum_{i,j,k} H_{ijk} = 6|\mathcal{H}|$. The sum $A_{ij} + A_{ik} + A_{jk}$ is likewise invariant under permutations of its indices, and the three terms $H_{ijk}\,A_{ij}$, $H_{ijk}\,A_{ik}$, $H_{ijk}\,A_{jk}$ each yield the same value when summed over all $i,j,k$ by relabeling dummy indices. Hence
  \[
    \sum_{i,j,k} H_{ijk}\bigl(A_{ij} + A_{ik} + A_{jk}\bigr)
    = 3 \sum_{i,j,k} H_{ijk}\, A_{ik},
  \]
  and dividing by $3 \times 6|\mathcal{H}|$ gives the result.
\end{proof}

As it stands, $\alpha$ captures only the mean number of edges per hyperedge and is insufficient to fully resolve the isomorphism class of each hyperedge triplet. The full distribution of edge counts, which we denote $\theta$, will be required when we carry out the derivation of up to triplet terms---this is the first extra parameterisation that we introduce to the inter-order model.

\begin{definition}\label{def:theta}
  The \emph{inter-order overlap probability distribution} of~$\mathcal{G}$ is the function $\theta \colon \set{0,1,2,3} \to [0,1]$ defined by
  \[
    \theta(m)
    = \frac{1}{|\mathcal{H}|}
      \sum_{\set{i,j,k} \in \mathcal{H}}
      \delta\bigl(A_{ij} + A_{ik} + A_{jk},\; m\bigr),
  \]
  where $\delta$ denotes the Kronecker delta. That is, $\theta(m)$ is the proportion of hyperedges containing exactly $m$ pairwise edges. By construction, $\sum_{m=0}^{3} \theta(m) = 1$.
\end{definition}

\begin{lemma}\label{lem:alpha-from-theta}
  The inter-order overlap parameter is the first moment (i.e. expectation value) of $\theta$:
  \[
    \alpha
    = \frac{1}{3} \sum_{m=0}^{3} m\,\theta(m).
  \]
\end{lemma}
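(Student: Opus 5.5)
The plan is to compute the sum $\sum_{h \in \mathcal{H}} \epsilon(h)$ from Definition~\ref{def:alpha} in a second way, by grouping the hyperedges according to their edge count. For each $m \in \set{0,1,2,3}$, let $\mathcal{H}_m = \set{h \in \mathcal{H} : \epsilon(h) = m}$.

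First I establish that these sets partition $\mathcal{H}$. Each $A_{ij} \in \set{0,1}$, so $\epsilon(h)$ takes exactly one value in $\set{0,1,2,3}$. Equivalently, $\sum_{m=0}^{3} \delta(\epsilon(h), m) = 1$ for every hyperedge $h$, which is the identity $\sum_m \theta(m) = 1$ already noted in Definition~\ref{def:theta}.

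Next I insert this resolution of the identity into the sum:
\[
  \sum_{h \in \mathcal{H}} \epsilon(h)
  = \sum_{h \in \mathcal{H}} \sum_{m=0}^{3} \epsilon(h)\,\delta\bigl(\epsilon(h), m\bigr)
  = \sum_{m=0}^{3} m \sum_{h \in \mathcal{H}} \delta\bigl(\epsilon(h), m\bigr)
  = \sum_{m=0}^{3} m\,|\mathcal{H}|\,\theta(m).
\]
The second equality does two things. It uses $\epsilon(h)\,\delta(\epsilon(h),m) = m\,\delta(\epsilon(h),m)$, and it swaps the two finite sums. The third equality is exactly Definition~\ref{def:theta}.

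Finally, I divide by $3|\mathcal{H}|$ as in Definition~\ref{def:alpha}, which gives $\alpha = \frac{1}{3}\sum_m m\,\theta(m)$. Here I implicitly assume $\mathcal{H} \neq \emptyset$, which both definitions already require.

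There is no real obstacle; the only care needed is this nonemptiness condition. As an optional sanity check, the result can be compared with Lemma~\ref{lem:alpha-tensor}. Since $\sum_{i,j,k} H_{ijk} A_{ik} = 2\sum_h \epsilon(h)$, it gives the same first-moment expression.
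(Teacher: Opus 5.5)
Your proof is correct and is essentially the paper's argument run in the opposite direction: the paper starts from $\sum_m m\,\theta(m)$, expands $\theta$ via Definition~\ref{def:theta}, swaps the two finite sums, and lets the Kronecker delta collapse the inner sum to $\epsilon(h)$, which is exactly your computation. One small wording point: the per-hyperedge identity $\sum_m \delta(\epsilon(h),m)=1$ implies $\sum_m \theta(m)=1$ but is not equivalent to it; since you justify the per-hyperedge identity directly from $A_{ij}\in\{0,1\}$, the argument is unaffected.
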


\begin{proof}
  Expanding the expectation gives
  \[
    \mathbb{E}[M]
    = \sum_{m} m\,\theta(m)
    = \frac{1}{|\mathcal{H}|}
      \sum_{\set{i,j,k} \in \mathcal{H}}
      \sum_{m} m\,\delta\bigl(
        A_{ij} + A_{ik} + A_{jk},\, m
      \bigr).
  \]
  For each hyperedge, the Kronecker delta selects $m = A_{ij} + A_{ik} + A_{jk}$, so the inner sum
  reduces to $A_{ij} + A_{ik} + A_{jk} = \epsilon(\set{i,j,k})$. Therefore
  $\mathbb{E}[M]
  = \frac{1}{|\mathcal{H}|}
    \sum_{h \in \mathcal{H}} \epsilon(h)
  = 3\alpha$
  by Definition~\ref{def:alpha}.
\end{proof}

\subsubsection{Target equations and closures} \label{subsubsec:inter-order-target-eqs}

\begin{notation}\label{not:composite-motifs}
  In addition to the isomorphism class notation of Table~\ref{tab:isomorphism-class-notation}, we introduce notation for two composite motifs that arise from the IO closure approximations. Let $\set{i,j,k} \in \mathcal{H}$ be a hyperedge with nodes in states $X_i, Y_j, Z_k$.

  \emph{Hyperedge--pendant motif} (4 nodes). We write $\prob{X_i Y_j Z_k {}_{\bigcirc} W_l}$ for the joint probability over nodes $i, j, k, l$, where $\set{i,j,k}$ form a hyperedge and node $l$ is connected to the \emph{pivotal node} $k$ by a pairwise edge $\set{k,l} \in \mathcal{E}$. The pivotal node (state $Z_k$) always occupies the third entry, and the motif is symmetric in its first two entries.

  \emph{Bowtie motif} (5 nodes). We write $\prob{X_i Y_j Z_k {}_{\bowtie} W_l U_m}$ for the joint probability over nodes $i, j, k, l, m$, where $\set{i,j,k}$ and $\set{k,l,m}$ are two hyperedges sharing the pivotal node $k$ (state $Z_k$, third entry of the first triple). 

  The subscript symbol (${}_{\bigcirc}$ or ${}_{\bowtie}$) marks the boundary between the hyperedge triple and the external structure attached to the pivotal node.
\end{notation}

The inter-order model tracks five dynamical variables: $\prob{I}$, $\prob{SI}$, $\prob{SSS}^{\bigcirc}$, $\prob{SSI}^{\bigcirc}$, and $\prob{ISI}^{\bigcirc}$. The rate equations, stated in equation~(10) of~\cite{malizia2026nested} and translated into our notation, are as follows.

\begin{subequations}\label{eq:io-target-eqs}
\begin{equation}\label{eq:io-target-node}
  \dot{\prob{I}}
  = -\gamma\,\prob{I}
    + \beta_1 k_1\,\prob{SI}
    + \beta_2 k_2\,\prob{ISI}^{\bigcirc},
\end{equation}

\begin{equation}\label{eq:io-target-pair}
\begin{aligned}
  \dot{\prob{SI}}
  &= \beta_1 \bigl[
       (k_1 - 1)\,\bigl(\prob{SSI}
       - \prob{ISI} \bigr)
       - \prob{SI}
     \bigr] \\
  &\quad + \frac{\beta_2 k_2}{k_1} \bigl[
       (k_1 - 2\alpha)\bigl(
         \prob{IIS {}_{\bigcirc} S}
         - \prob{IIS {}_{\bigcirc} I}
       \bigr)
       - 2\alpha\,\prob{ISI}^{\bigcirc}
     \bigr] \\
  &\quad + \gamma \bigl[
       \prob{II} - \prob{SI}
     \bigr],
\end{aligned}
\end{equation}

\begin{equation}\label{eq:io-target-sss}
  \dot{\prob{SSS}}^{\bigcirc}
  = -3\beta_1 (k_1 - 2\alpha)\,
      \prob{SSS {}_{\bigcirc} I}
    - 3\beta_2 (k_2 - 1)\,
      \prob{SSS {}_{\bowtie} II}
    + 3\gamma\,\prob{SSI}^{\bigcirc},
\end{equation}

\begin{equation}\label{eq:io-target-ssi}
\begin{aligned}
  \dot{\prob{SSI}}^{\bigcirc}
  &= \beta_1 \bigl[
       (k_1 - 2\alpha)\bigl(
         \prob{SSS {}_{\bigcirc} I}
         - 2\,\prob{ISS {}_{\bigcirc} I}
       \bigr)
       - 2\alpha\,\prob{SSI}^{\bigcirc}
     \bigr] \\
  &\quad + \beta_2 (k_2 - 1) \bigl[
       \prob{SSS {}_{\bowtie} II}
       - 2\,\prob{ISS {}_{\bowtie} II}
     \bigr] \\
  &\quad + \gamma \bigl[
       2\,\prob{ISI}^{\bigcirc} - \prob{SSI}^{\bigcirc}
     \bigr],
\end{aligned}
\end{equation}

\begin{equation}\label{eq:io-target-isi}
\begin{aligned}
  \dot{\prob{ISI}}^{\bigcirc}
  &= \beta_1 \bigl[
       (k_1 - 2\alpha)\bigl(
         2\,\prob{ISS {}_{\bigcirc} I}
         - \prob{IIS {}_{\bigcirc} I}
       \bigr)
       + 2\alpha\bigl(
         \prob{SSI}^{\bigcirc}
         - \prob{ISI}^{\bigcirc}
       \bigr)
     \bigr] \\
  &\quad + \beta_2 \bigl[
       (k_2 - 1)\bigl(
         2\,\prob{ISS {}_{\bowtie} II}
         - \prob{IIS {}_{\bowtie} II}
       \bigr)
       - \prob{ISI}^{\bigcirc}
     \bigr] \\
  &\quad + \gamma \bigl[
       \prob{III}^{\bigcirc}
       - 2\,\prob{ISI}^{\bigcirc}
     \bigr].
\end{aligned}
\end{equation}
\end{subequations}

Equations~\eqref{eq:io-target-eqs} are not closed: the right-hand sides contain pendant and bowtie probabilities on four and five nodes respectively. The inter-order model closes these using the following approximations, stated in equation~(11) of~\cite{malizia2026nested}. The pairwise open-triple closures take the wedge decomposition (Subsection~\ref{subsec:triplet-closure}):
\begin{equation}\label{eq:io-closure-triples}
  \prob{ISI}
  = \frac{\prob{SI}^2}{\prob{S}},
  \qquad
  \prob{SSI}
  = \frac{\prob{SS}\,\prob{SI}}{\prob{S}}.
\end{equation}

The pendant closures factorize the four-node probability around the pivotal node:
\begin{equation}\label{eq:io-closure-pendant}
\begin{aligned}
  \prob{SSS {}_{\bigcirc} I}
  &= \frac{\prob{SSS}^{\bigcirc}\,\prob{SI}}{\prob{S}},
  &
  \prob{ISS {}_{\bigcirc} I}
  &= \frac{\prob{SSI}^{\bigcirc}\,\prob{SI}}{\prob{S}}, \\[4pt]
  \prob{IIS {}_{\bigcirc} S}
  &= \frac{\prob{ISI}^{\bigcirc}\,\prob{SS}}{\prob{S}},
  &
  \prob{IIS {}_{\bigcirc} I}
  &= \frac{\prob{ISI}^{\bigcirc}\,\prob{SI}}{\prob{S}}.
\end{aligned}
\end{equation}

The bowtie closures factorize the five-node probability around the shared pivotal node:
\begin{equation}\label{eq:io-closure-bowtie}
\begin{aligned}
  \prob{SSS {}_{\bowtie} II}
  &= \frac{\prob{SSS}^{\bigcirc}\,\prob{ISI}^{\bigcirc}}{\prob{S}},
  &
  \prob{ISS {}_{\bowtie} II}
  &= \frac{\prob{SSI}^{\bigcirc}\,\prob{ISI}^{\bigcirc}}{\prob{S}}, \\[4pt]
  \prob{IIS {}_{\bowtie} II}
  &= \frac{\bigl(\prob{ISI}^{\bigcirc}\bigr)^2}{\prob{S}}.
\end{aligned}
\end{equation}

\begin{remark}\label{rem:io-closure-canonical}
  Under dynamical homogeneity (Assumption~\ref{ass:dynamical-homogeneity}), the triplet closures in~\eqref{eq:io-closure-triples} follow from the canonical triplet closure of Lemma~\ref{lem:triplet-closure}. The pendant and bowtie closures in~\eqref{eq:io-closure-pendant}--\eqref{eq:io-closure-bowtie} are recovered from the canonical pendant and bowtie closures of Lemma~\ref{lem:pendant-closure} and Lemma~\ref{lem:bowtie-closure}, with the wedge reductions of Remark~\ref{rem:pendant-with-wedge-closure} and Remark~\ref{rem:bowtie-with-wedge-closure} applied to the open triplets. 
\end{remark}

\subsubsection{Implicit assumptions} \label{subsubsec:io-implicit-assumptions}

Two features of the target equations~\eqref{eq:io-target-eqs} and closures~\eqref{eq:io-closure-pendant}--\eqref{eq:io-closure-bowtie} reveal that assumptions beyond those of Assumption~\ref{ass:inter-order-stated-sparsity} are at work.

First, the pendant and bowtie closures admit only a single configuration each. The pendant closure~\eqref{eq:io-closure-pendant} treats the four-node motif as a hyperedge $\set{i,j,k}$ with a single dangling edge $\set{k,l}$, and no other connections among the four nodes. The bowtie closure~\eqref{eq:io-closure-bowtie} treats the five-node motif as two hyperedges $\set{i,j,k}$ and $\set{k,l,m}$ joined at a single pivotal node $k$, with no other connections. In principle, the stated sparsity assumptions do not preclude additional pairwise edges from forming within these motifs, e.g. a pairwise edge $\set{i,l}$ in the pendant configuration would not violate either condition of Assumption~\ref{ass:inter-order-stated-sparsity} (see figure~\ref{fig:inter-order-motifs}). The absence of any closure for such configurations indicates that the model implicitly forbids them.

Second, the dynamical variables tracked by the inter-order model treat the hyperedge triplet probability $\prob{XYZ}^{\bigcirc}$ as a single object, with no distinction between the four hyperedge isomorphism classes $\set{\bigcirc\triangle, \bigcirc\wedge, \bigcirc-, \bigcirc\bullet}$ identified in Table~\ref{tab:isomorphism-class-notation}. Our general microscopic framework requires these to be tracked separately, since hyperedges with different internal pairwise structures admit different infection pathways. The inter-order model implicitly collapses these four classes to a single common~value.

We make these two implicit conditions explicit in the following assumptions.

\begin{assumption}\label{ass:io-strong-sparsity}
  \emph{(Strong sparsity.)} In addition to the conditions of Assumption~\ref{ass:inter-order-stated-sparsity}, the network $\mathcal{G}$ satisfies:
  \begin{enumerate}
    \item Any pairwise triangle shares at most one node with any distinct hyperedge, and any wedge shares at most one leaf (non-central node) with any triangle or hyperedge.
    \item For any node $k$ that belongs to two distinct hyperedges $\set{i,j,k}, \set{k,l,m} \in \mathcal{H}$, no pairwise edge connects neighbors of $k$ in distinct hyperedges to each other, i.e. edges from $\set{i,j}$ to $\set{l,m}$ are forbidden.
  \end{enumerate}
\end{assumption}

\begin{figure}[hb!]
    \centering
    \includegraphics[width=250pt]{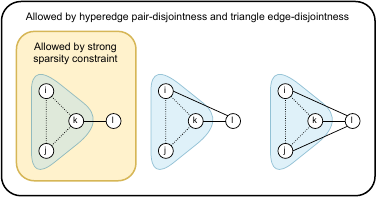}
    \includegraphics[width=430pt]{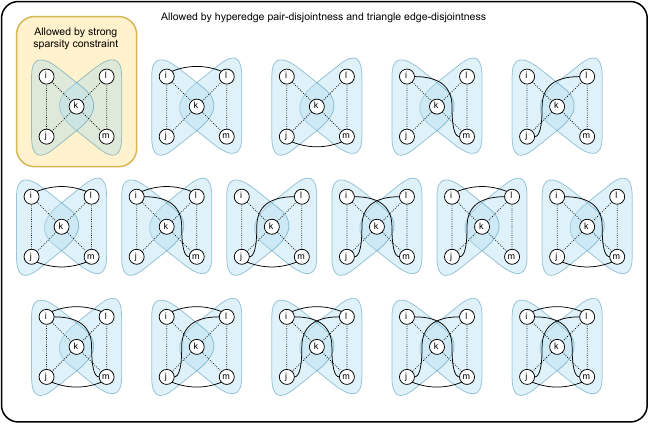}
    \caption{Allowed motif configurations under the explicit (weaker) assumptions of Assumption~\ref{ass:inter-order-stated-sparsity} on hyperedge pair-disjointedness (two hyperedges cannot share more than $1$ node) and triangle edge-disjointedness (two pairwise triangles cannot share more than $1$ node), versus the strong sparsity condition of Assumption~\ref{ass:io-strong-sparsity}. Hyperedges are represented by the filled blue borders; edges are represented by lines. Dotted lines indicate possible edges. The closures generated by the closure operator would produce different approximations for each motif.}
    \label{fig:inter-order-motifs}
\end{figure}

The first condition ensures that the pendant motif is the only four-node configuration containing a hyperedge: triangles and hyperedges interact only via shared single nodes, and the leaves of an external wedge cannot both lie within a hyperedge. The second condition restricts five-node configurations to the bowtie: when a node is shared between two hyperedges, the remaining nodes of one hyperedge cannot be pairwise-connected to those of the other. Figure~\ref{fig:inter-order-motifs} demonstrates the difference in allowed motifs under this strong sparsity assumption.
 
Intuitively, the strong sparsity assumption extends the disjointedness conditions of Assumption~\ref{ass:inter-order-stated-sparsity} in two ways. First, it forbids overlap \emph{between} hyperedges and triangles, not just within each type, and propagates this disjointedness through wedges: a wedge can share at most one leaf with any triangle or hyperedge. Second, it forbids pairwise connections between nodes belonging to distinct hyperedges that share a common node, so that two hyperedges meeting at a single node remain otherwise disconnected.

\begin{assumption}\label{ass:io-class-collapse}
  \emph{(Hyperedge-class collapse.)} For any three nodes $i, j, k$ such that $H_{ijk} = 1$, the joint state probability does not depend on the internal pairwise structure of the hyperedge:
  \[
    \prob{X_i Y_j Z_k}^{\bigcirc\sigma}
    \approx \prob{XYZ}^{\bigcirc}
    \quad \text{for all } \sigma \in \set{\triangle, \wedge, -, \bullet}.
  \]
\end{assumption}

\begin{remark}\label{rem:io-class-collapse-justification}
  The hyperedge-class collapse is a strong assumption: it asserts that the contribution of pairwise edges within a hyperedge to the joint state dynamics can be neglected relative to the hyperedge interaction itself. In practice, the closures~\eqref{eq:io-closure-pendant}--\eqref{eq:io-closure-bowtie} use a single hyperedge triplet probability $\prob{XYZ}^{\bigcirc}$ throughout, so without this collapse the system would not close at the level of the five tracked variables.
\end{remark}

With these assumptions in hand, we now proceed with the derivation. Our strategy, as before, is to start from our microscopic rate equations and derive the corresponding target rate equations. However, what sets this derivation apart is that we require additional parameterizations and assumptions in order to arrive at the exact form of the target rate equations. We demonstrate when and how these parameters and assumptions come into play.

\subsubsection{Network quantities} \label{subsubsec:io-network-quantities}

The derivation of the rate equations proceeds by evaluating each summation term in equations~\eqref{eq:dI}--\eqref{eq:dSII} under the mean-field assumptions, expressing the results in terms of the network parameters $k_1$, $k_2$, $\alpha$, $\theta$, and the isomorphism class-specific probabilities of Table~\ref{tab:isomorphism-class-notation}. 

A key part of the derivation involves expanding adjacency factors in front of joint state probabilities and identifying all isomorphism classes that arise. The resulting partial sums are evaluated under topological homogeneity (Assumption~\ref{ass:topological-homogeneity}), promoting them to complete sums that can be evaluated using $k_1$, $k_2$, $\alpha$, $\theta$, and $\phi$. We first establish four identities relating the products of $\mathbf{H}$, $\mathbf{A}$ and $\bar{\mathbf{A}}$ to the distribution $\theta$; these will then serve as the building blocks for the partial sums over $\leq 3$ nodes that arise in the derivation.

\begin{lemma}\label{lem:theta-tensor-products}
  Since $6|\mathcal{H}| = \sum_{i,j,k \in \mathcal{N}} H_{ijk}$, the following identities hold (all sums over $i,j,k \in \mathcal{N}$):
  \begin{subequations}\label{eq:theta-identities}
  \begin{align}
    \frac{
      \sum_{i,j,k} H_{ijk}\,
        A_{ij}\, A_{ik}\, A_{jk}
    }{
      \sum_{i,j,k} H_{ijk}
    }
      &= \theta(3),
      \label{eq:theta3} \\
    \frac{
      \sum_{i,j,k} H_{ijk}\,
        A_{ij}\, A_{ik}\, \bar{A}_{jk}
    }{
      \sum_{i,j,k} H_{ijk}
    }
      &= \tfrac{1}{3}\,\theta(2),
      \label{eq:theta2} \\
    \frac{
      \sum_{i,j,k} H_{ijk}\,
        A_{ij}\, \bar{A}_{ik}\, \bar{A}_{jk}
    }{
      \sum_{i,j,k} H_{ijk}
    }
      &= \tfrac{1}{3}\,\theta(1),
      \label{eq:theta1} \\
    \frac{
      \sum_{i,j,k} H_{ijk}\,
        \bar{A}_{ij}\, \bar{A}_{ik}\, \bar{A}_{jk}
    }{
      \sum_{i,j,k} H_{ijk}
    }
      &= \theta(0).
      \label{eq:theta0}
  \end{align}
  \end{subequations}
\end{lemma}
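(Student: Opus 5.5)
The plan is a counting argument over ordered triples. Each side of \eqref{eq:theta-identities} is a sum of $\set{0,1}$-valued products over ordered triples $(i,j,k)$. Since $H_{ijk}=0$ whenever two indices coincide, only triples of distinct nodes that form a hyperedge contribute. Each hyperedge $h=\set{i,j,k}\in\mathcal{H}$ contributes exactly $3!=6$ ordered triples. Hence the denominator is $6|\mathcal{H}|$, as already used in Lemma~\ref{lem:alpha-tensor}. The task is to count, for each hyperedge $h$, how many of its six orderings make the adjacency product equal to one, and then to group hyperedges by $\epsilon(h)=m$ (Definition~\ref{def:alpha}).

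First, for a fixed hyperedge $h$ I would record its internal pairwise pattern. The counts for each product are as follows.

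\begin{itemize}
\item \emph{Product $A_{ij}A_{ik}A_{jk}$.} It is symmetric, so it equals one for all $6$ orderings if $\epsilon(h)=3$, and for none otherwise.
\item \emph{Product $\bar A_{ij}\bar A_{ik}\bar A_{jk}$.} Likewise, it equals one for all $6$ orderings exactly when $\epsilon(h)=0$.
\item \emph{Product $A_{ij}A_{ik}\bar A_{jk}$.} This requires the first index to be the centre of a wedge, with the other two nodes unconnected. It is nonzero only if $\epsilon(h)=2$. In that case the centre is uniquely determined and the two leaves can be ordered in $2$ ways, giving $2$ of the $6$ orderings.
\item \emph{Product $A_{ij}\bar A_{ik}\bar A_{jk}$.} This requires $\set{i,j}$ to be the unique present edge and $k$ the isolated node. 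It is nonzero only if $\epsilon(h)=1$, again for exactly $2$ orderings.
\end{itemize}

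Second, I would sum over hyperedges. Definition~\ref{def:theta} gives $|\mathcal{H}|\theta(m)$ hyperedges with $\epsilon(h)=m$. The numerators are therefore $6|\mathcal{H}|\theta(3)$, $2|\mathcal{H}|\theta(2)$, $2|\mathcal{H}|\theta(1)$ and $6|\mathcal{H}|\theta(0)$. Dividing each by $6|\mathcal{H}|$ yields $\theta(3)$, $\tfrac13\theta(2)$, $\tfrac13\theta(1)$ and $\theta(0)$, as claimed.

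There is no real obstacle here. The only step needing care is the ordering count in the two asymmetric cases: one must confirm that the fixed positional roles (the centre in the first slot, or the isolated node in the third) allow exactly $2$ of the $3!$ permutations, and not $1$ or $3$. Implicitly I am also assuming $|\mathcal{H}|>0$, so that the ratios are defined.
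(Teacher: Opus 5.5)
Your proposal is correct and follows the same route as the paper: for each hyperedge, count how many of its $3!=6$ orderings make the adjacency product equal to one ($6$ in the two symmetric cases, $2$ in the wedge and single-edge cases), then group hyperedges by edge count via $\theta$ and divide by $6|\mathcal{H}|$. Your explicit check that the wedge centre (respectively the isolated node) is uniquely determined, so that exactly $2$ orderings contribute, is a slightly sharper justification than the paper's appeal to the symmetry of each product.
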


\begin{proof}
  For a fixed hyperedge $\set{i,j,k}$, we count how many of the $3! = 6$ ordered triples yield a non-zero contribution to each numerator. The product $A_{ij}\,A_{ik}\,A_{jk}$ is symmetric under all permutations, so all $6$ triples contribute, giving a factor of $6/6 = 1$. The product $A_{ij}\,A_{ik}\,\bar{A}_{jk}$ is symmetric only under $j \leftrightarrow k$, which gives a factor of $2/6 = 1/3$. Similarly, the product $A_{ij}\,\bar{A}_{ik}\,\bar{A}_{jk}$ is symmetric only under $i \leftrightarrow j$, yielding $2/6 = 1/3$. The product $\bar{A}_{ij}\,\bar{A}_{ik}\,\bar{A}_{jk}$ is fully symmetric, yielding $1$. Summing over hyperedges and normalising by $6|\mathcal{H}|$ gives the stated identities.
\end{proof}

With these identities in hand, we now compile the partial sums that appear in the rate equations. Each is conveniently cataloged by the isomorphism class it selects: for a fixed node or edge, the sum counts the number of motifs of a given class incident to it, and under topological homogeneity this local count equals the network-wide average.

\begin{lemma}\label{lem:io-network-quantities}
  Under topological homogeneity, the following partial sums hold:
  \begin{align}
    \bigcirc+\triangle :\qquad
    \sum_{\substack{x<y \\ x,y\not =i}}
    H_{ixy} A_{ix} A_{iy} A_{xy}
    &=k_2 \theta(3)
    \label{eq:io-hyperedge-triangle-condition-on-node} \\
    \bigcirc + \wedge \textrm{ ($x\leftrightarrow y$ symmetric)} :\qquad
    \sum_{\substack{x<y \\ x,y\not =i}}
    H_{ixy} A_{ix} A_{iy} \bar{A}_{xy}
    &=\frac{k_2}{3} \theta(2)
    \label{eq:io-hyperedge-wedge-condition-on-node-symm}\\
    \bigcirc + \wedge \textrm{ (not symmetric)} :\qquad
    \sum_{\substack{x<y \\ x,y\not =i}}
    H_{ixy} A_{ix} \bar{A}_{iy} A_{xy}
    &= \frac{k_2}{3} \theta(2)
    \label{eq:io-hyperedge-wedge-condition-on-node-asymm}\\
    \bigcirc + - \textrm{ ($x\leftrightarrow y$ symmetric)} :\qquad
    \sum_{\substack{x<y \\ x,y\not =i}}
    H_{ixy} \bar{A}_{ix} \bar{A}_{iy} A_{xy}
    &= \frac{k_2}{3} \theta(1)
    \label{eq:io-hyperedge-edge-condition-on-node-symm}\\
    \bigcirc + - \textrm{ (not symmetric)} :\qquad
    \sum_{\substack{x<y \\ x,y\not =i}}
    H_{ixy} \bar{A}_{ix} A_{iy} \bar{A}_{xy}
    &= \frac{k_2}{3} \theta(1)
    \label{eq:io-hyperedge-edge-condition-on-node-asymm}\\
    \bigcirc + \bullet :\qquad
    \sum_{\substack{x<y \\ x,y\not =i}}
    H_{ixy} \bar{A}_{ix} \bar{A}_{iy} \bar{A}_{xy}
    &= k_2 \theta(0)
    \label{eq:io-hyperedge-isonodes-condition-on-node} \\
    \textrm{conditioned on $A_{ij}$ } \bigcirc + \triangle :\qquad
    A_{ij} \sum_{x\not=i,j} 
    H_{ijx} A_{ix} A_{jx}
    &= \frac{2 k_2}{k_1} \theta(3)
    \label{eq:io-hyperedge-triangle-condition-on-edge} \\
    \textrm{conditioned on $A_{ij}$ } \bigcirc + \wedge :\qquad
    A_{ij} \sum_{x\not=i,j} 
    H_{ijx} A_{ix} \bar{A}_{jx}
    &= \frac{2 k_2}{3k_1} \theta(2)
    \label{eq:io-hyperedge-wedge-condition-on-edge} \\
    \textrm{conditioned on $A_{ij}$ } \bigcirc + - :\qquad
    A_{ij} \sum_{x\not=i,j} 
    H_{ijx} \bar{A}_{ix} \bar{A}_{jx}
    &=\frac{2 k_2}{3k_1} \theta(1)
    \label{eq:io-hyperedge-edge-condition-on-edge} \\
    \textrm{conditioned on $A_{ij}$ } \otimes + \triangle :\qquad
    A_{ij} \sum_{x\not=i,j} 
    \bar{H}_{ijx} A_{ix} A_{jx} 
    &= (k_1 - 1) \phi -\frac{2k_2}{k_1} \theta(3)
    \label{eq:io-free-triangle-condition-on-edge} \\
    \textrm{conditioned on $A_{ij}$ } \otimes + \wedge :\qquad
    A_{ij} \sum_{x\not=i,j} 
    \bar{H}_{ijx} A_{ix} \bar{A}_{jx}
    &= (k_1 - 1) (1 - \phi) - \frac{2k_2}{3k_1} \theta(2)
    \label{eq:io-free-wedge-condition-on-edge} 
  \end{align}
\end{lemma}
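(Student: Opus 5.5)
The plan is to prove each identity by promoting the partial sum through Assumption~\ref{ass:topological-homogeneity}. The numerator of each promoted sum is then evaluated with Lemma~\ref{lem:theta-tensor-products} together with $\sum_{i,j,k} H_{ijk} = 6|\mathcal{H}| = 2Nk_2$, which follows from Definition~\ref{def:mean-degrees}. The denominators are $\sum_i 1 = N$ for the node-conditioned sums \eqref{eq:io-hyperedge-triangle-condition-on-node}--\eqref{eq:io-hyperedge-isonodes-condition-on-node} and $\sum_{i,j} A_{ij} = Nk_1$ for the edge-conditioned sums \eqref{eq:io-hyperedge-triangle-condition-on-edge}--\eqref{eq:io-free-wedge-condition-on-edge}. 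Throughout, sums over distinct indices agree with unrestricted sums, since $H$ and $A$ vanish on coinciding indices.

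For the node-conditioned sums, I first convert $\sum_{x<y}$ into $\tfrac12\sum_{x\neq y}$ using Remark~\ref{rem:summation-convention}. For \eqref{eq:io-hyperedge-triangle-condition-on-node} the summand is symmetric in $x,y$, and promotion gives
\[
\frac{1}{2N}\sum_{i,x,y} H_{ixy}A_{ix}A_{iy}A_{xy} = \frac{1}{2N}\,\theta(3)\cdot 2Nk_2 = k_2\theta(3),
\]
by \eqref{eq:theta3}. The cases \eqref{eq:io-hyperedge-wedge-condition-on-node-symm}, \eqref{eq:io-hyperedge-edge-condition-on-node-symm} and \eqref{eq:io-hyperedge-isonodes-condition-on-node} are identical after relabelling the indices to match \eqref{eq:theta2}, \eqref{eq:theta1} and \eqref{eq:theta0}; for example, $H_{ixy}\bar A_{ix}\bar A_{iy}A_{xy}$ is the pattern of \eqref{eq:theta1} with $x$ playing the role of the first index.

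The non-symmetric cases need more care. In \eqref{eq:io-hyperedge-wedge-condition-on-node-asymm} the summand $H_{ixy}A_{ix}\bar A_{iy}A_{xy}$ is not symmetric in $x,y$, so Remark~\ref{rem:summation-convention} does not apply directly. I would symmetrize it as $\tfrac12\sum_{x\neq y}\bigl(f(x,y)+f(y,x)\bigr)$. Each of the two resulting full sums is a relabelling of \eqref{eq:theta2}, since it describes a wedge inside a hyperedge with a specified center, and therefore equals $\tfrac13\theta(2)\cdot 2Nk_2$. Adding the two pieces, halving, and dividing by $N$ gives $\tfrac{2k_2}{3}\theta(2)$, not the stated $\tfrac{k_2}{3}\theta(2)$. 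Before writing this up I would settle the convention carefully. The likely resolution is that the asymmetric sum is read as the unordered sum $\tfrac12\sum_{x\neq y}f(x,y)$ with a fixed orientation, rather than being symmetrized. Under that reading each orientation counts $\tfrac12\cdot\tfrac13\theta(2)\cdot 2Nk_2/N = \tfrac{k_2}{3}\theta(2)$. The same treatment gives \eqref{eq:io-hyperedge-edge-condition-on-node-asymm}. This factor-of-two bookkeeping is the step I expect to be the main obstacle, and I would verify it by a direct count on a single hyperedge.

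For the edge-conditioned sums there is no $\tfrac12$, and promotion gives, for example,
\[
\frac{1}{Nk_1}\sum_{i,j,x} H_{ijx}A_{ij}A_{ix}A_{jx} = \frac{2k_2}{k_1}\theta(3).
\]
The same computation, using \eqref{eq:theta2} with the wedge centered at $i$ and \eqref{eq:theta1}, gives $\tfrac{2k_2}{3k_1}\theta(2)$ and $\tfrac{2k_2}{3k_1}\theta(1)$. For the two $\otimes$ identities I write $\bar H_{ijx} = 1 - H_{ijx}$. The unweighted sums $\sum A_{ij}A_{ix}A_{jx}$ and $\sum A_{ij}A_{ix}\bar A_{jx}$ equal $Nk_1(k_1-1)\phi$ and $Nk_1(k_1-1)(1-\phi)$ by Definition~\ref{def:clustering}. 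Here the wedge count $\sum A_{ij}A_{ix}$ over distinct indices is taken as $Nk_1(k_1-1)$, the quantity used in Proposition~\ref{prop:pair-network-counts}; this is exact under regularity and holds as a mean-field approximation otherwise. Subtracting the hyperedge parts already computed then gives \eqref{eq:io-free-triangle-condition-on-edge} and \eqref{eq:io-free-wedge-condition-on-edge}.
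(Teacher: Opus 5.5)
Your route is the paper's. You promote each partial sum with $p=1$ (denominator $N$) or $p=A_{ij}$ (denominator $Nk_1$), evaluate the numerators through Lemma~\ref{lem:theta-tensor-products} with $\sum_{i,j,k}H_{ijk}=2Nk_2$, and handle the two $\otimes$ identities via $\bar H_{ijx}=1-H_{ijx}$ and the clustering coefficient. The symmetric node-conditioned cases and all edge-conditioned cases go through as you wrote them. Your caveat that $\sum_{\text{distinct}}A_{ij}A_{ix}=Nk_1(k_1-1)$ needs regularity is a fair point the paper passes over. One small inaccuracy: distinct and unrestricted sums agree only for summands carrying an $H$, or an $A$ on the coinciding pair. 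In \eqref{eq:io-free-wedge-condition-on-edge} the $x=j$ term $A_{ij}\bar H_{ijj}A_{ij}\bar A_{jj}=A_{ij}$ does not vanish, so the restriction $x\neq i,j$ must be kept there, as you in fact do.

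The real problem is in the asymmetric cases \eqref{eq:io-hyperedge-wedge-condition-on-node-asymm} and \eqref{eq:io-hyperedge-edge-condition-on-node-asymm}. It is not a convention to be settled but a factor-of-two slip in your symmetrization. The formula $\sum_{x<y}f(x,y)=\tfrac12\sum_{x\neq y}\bigl(f(x,y)+f(y,x)\bigr)$ already fails for symmetric $f$, where its right-hand side is $\sum_{x\neq y}f$, twice the ordered sum. The correct identity is
\[
  \sum_{x<y}f(x,y)+\sum_{x<y}f(y,x)=\sum_{x\neq y}f(x,y).
\]
So the $\tfrac{2k_2}{3}\theta(2)$ you obtained is the promoted value of the ordered sum \emph{together with its mirror}. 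That is exactly the coefficient of $\prob{SII}^{\bigcirc\wedge}$ (with $i$ a leaf) in Lemma~\ref{lem:io-hyperedge-expansion}. The stated $\tfrac{k_2}{3}\theta(2)$ per orientation follows once the two orientations are declared equal; this is the paper's argument and coincides with your ``likely resolution''.

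Be aware that the direct count you plan will not confirm this equality exactly. Take a hyperedge carrying a wedge with centre $b$ and leaves $a,c$. The sum \eqref{eq:io-hyperedge-wedge-condition-on-node-asymm}, totalled over $i$, equals the number of leaves whose label exceeds that of $b$, i.e.\ $0$, $1$ or $2$ depending on the labelling. Relabelling $x\leftrightarrow y$ sends $\sum_{x<y}$ to $\sum_{x>y}$ rather than to itself. The paper's one-line appeal to relabelling and the symmetry of $H$ therefore holds only on average over label orderings.

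You can either justify the equal split as part of the homogeneity approximation, or sidestep it. In the rate equations only the sum of the two orientations ever appears. That sum, $\sum_{x\neq y}f$, is label-independent and promotes to $\tfrac{2k_2}{3}\theta(2)$, and to $\tfrac{2k_2}{3}\theta(1)$ in the single-edge case.
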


\begin{proof}
    Each of these partial sums has a natural interpretation: given a small motif fixed by the terms outside the summation, count the number of larger motifs (described by the terms inside the summation) that contain it. For example, equation~\eqref{eq:io-hyperedge-triangle-condition-on-edge} fixes an arbitrary edge $\set{i,j}$ via the $A_{ij}$ prefactor, and the summation counts the number (or in the mean-field case, the expected number) of simplicial complexes (hyperedge $+$ pairwise triangle) that this edge belongs to. 

    To evaluate each of the partial sums, we assume topological homogeneity (Assumption~\ref{ass:topological-homogeneity}) to promote these partial sums into an expression involving a complete sum.
    The definition of $k_1$ and $k_2$, as well as the $\theta$ identities stated in Lemma~\ref{lem:theta-tensor-products} are used to evaluate the complete sums. 
    
    For equations~\eqref{eq:io-hyperedge-triangle-condition-on-node}--\eqref{eq:io-hyperedge-isonodes-condition-on-node}, we must first account for the change in summation conventions from ordered to unordered. Where the dummy indices $x$ and $y$ are symmetric under exchange, this simply introduces a factor of $\frac{1}{2}$ to account for double counting. Furthermore, there is only one fixed node index $i$, and all adjacency terms involve at least one dummy node index, and cannot be factored out of the summation. Thus, recalling the sum promotion formula
    \[
        p(i, j)
    \sum_{x, y} q(i, j, x, y)
    \approx
    \frac{
      \displaystyle\sum_{i, j, x, y}
        p(i, j)\, q(i, j, x, y)
    }{
      \displaystyle\sum_{i, j} p(i, j)
    },
    \]
    we can identify $p(i)=1$ and $q(i,x,y)$ as the terms inside the summation. The expressions $\sum_{i,x,y} q(i,x,y)$ correspond directly to the $\theta$ identities in equation~\eqref{eq:theta-identities}. For example, the derivation for equation~\eqref{eq:io-hyperedge-triangle-condition-on-node} is
   
    \begin{align*}
        \sum_{\substack{x<y \\ x,y\not =i}} H_{ixy} A_{ix} A_{iy} A_{xy} 
        = \frac{1}{2} \sum_{x,y} H_{ixy} A_{ix} A_{iy} A_{xy} 
        &= \frac{1}{2} \dfrac{\sum_{i,x,y} H_{ixy} A_{ix} A_{iy} A_{xy}}{\sum_{i} 1 } \\
        &= \frac{1}{2N} \underbrace{2Nk_2 \theta(3)}_{6|\mathcal{H}| \theta(3) \textrm{ from eq.~\ref{eq:theta3}}}
    \end{align*}

    For equations~\eqref{eq:io-hyperedge-triangle-condition-on-edge}--\eqref{eq:io-hyperedge-edge-condition-on-edge}, the indices $i,j$ are fixed and the index $x$ is the dummy index. Thus, we identify $p(i,j)=A_{ij}$ and $q(i,j,x)$ as the remaining terms inside the summation. Using the sum promotion equation, we have $\sum_{i,j} A_{ij} = N k_1$, and likewise $\sum_{i,j,x} q(i,j,x)$  correspond directly to the $\theta$ identities in equation~\eqref{eq:theta-identities}.

    For the asymmetric cases---equations~\eqref{eq:io-hyperedge-wedge-condition-on-node-asymm} and~\eqref{eq:io-hyperedge-edge-condition-on-node-asymm}---the integrand is not symmetric under $x \leftrightarrow y$, so the ordered-to-unordered conversion $\sum_{x<y} = \frac{1}{2}\sum_{\substack{x,y \\ y \neq x}}$ does not apply directly. Instead, we note that the ordered sum and its mirror (obtained by swapping $x \leftrightarrow y$ in the adjacency factors) together equal the full unordered sum:
    \[
        \sum_{\substack{x<y}} H_{ixy}\, A_{ix}\, \bar{A}_{iy}\, A_{xy}
        + \sum_{\substack{x<y}} H_{ixy}\, \bar{A}_{ix}\, A_{iy}\, A_{xy}
        = \sum_{\substack{x,y \\ y \neq x}} H_{ixy}\, A_{ix}\, \bar{A}_{iy}\, A_{xy}.
    \]
    Since the two ordered sums are equal (by relabelling $x \leftrightarrow y$ and using the symmetry of $\mathbf{H}$), each equals half the promoted unordered sum, yielding the same value as the symmetric case.

    For equations~\eqref{eq:io-free-triangle-condition-on-edge} and \eqref{eq:io-free-wedge-condition-on-edge}, we use the substitution that $\bar{H}_{ijx} = 1 - H_{ijx}$. The term with $H_{ijx}$ is exactly the one derived in equation~\eqref{eq:io-hyperedge-triangle-condition-on-edge} and~\eqref{eq:io-hyperedge-wedge-condition-on-edge}, respectively. This leaves the term with $1$, which is exactly equal to $k_1 (k_1 - 1) \phi$ or $k_1 (k_1-1) (1-\phi)$, where $\phi$ is the global clustering coefficient (Definition~\ref{def:clustering}).
\end{proof}

\subsubsection{Node-level derivation}  \label{subsubsec:io-single}
For the pairwise infection term in equation~\eqref{eq:dI}, the factor $A_{ix}$ selects the connected pair class (Table~\ref{tab:isomorphism-class-notation}), so we drop the node labels and factor the probability out of the sum:
\begin{equation}\label{eq:io-pairwise}
  \sum_{x \neq i} A_{ix}\, \prob{S_i I_x}
  \approx \prob{SI}^{-} \sum_{x \neq i} A_{ix}
  \approx k_1\, \prob{SI}.
\end{equation}
The first step uses the dynamical homogeneity assumption (Assumption~\ref{ass:dynamical-homogeneity}) and the second uses the standard result for the node degree. Since pair probabilities accompanied by an adjacency factor always refer to the connected class, we suppress the $-$ superscript and write simply $\prob{SI}$.

The hyperedge infection term requires more care, as the joint probability depends on the isomorphism class of the sub-graph spanned by $\set{i,x,y}$.

\begin{lemma}\label{lem:io-hyperedge-expansion}
  Under Assumption~\ref{ass:dynamical-homogeneity} (dynamical homogeneity) and Assumption~\ref{ass:topological-homogeneity} (topological homogeneity), the hyperedge infection term in equation~\eqref{eq:dI} satisfies
  \begin{equation}\label{eq:io-hyperedge}
    \begin{aligned}
      \sum_{\substack{x < y \\ x,y \neq i}}
        H_{ixy}\prob{S_i I_x I_y}
      \approx k_2 \Bigl[ &
        \theta(3)\, \prob{SII}^{\bigcirc\triangle}
        + \tfrac{1}{3}\theta(2)\, \prob{I{S}I}^{\bigcirc\wedge}
        + \tfrac{2}{3}\theta(2)\, \prob{S{I}I}^{\bigcirc\wedge}\\
        &+ \tfrac{1}{3}\theta(1)\, \prob{I{S}I}^{\bigcirc-}
        + \tfrac{2}{3}\theta(1)\, \prob{S{I}I}^{\bigcirc-}
        + \theta(0)\, \prob{SII}^{\bigcirc\,\bullet}
      \Bigr].
    \end{aligned}
  \end{equation}
\end{lemma}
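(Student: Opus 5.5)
The plan is to insert the partition of unity $1 = (A_{ix}+\bar A_{ix})(A_{iy}+\bar A_{iy})(A_{xy}+\bar A_{xy})$ inside the hyperedge sum. This follows the expansion described before equation~\eqref{eq:closure-4}, now applied to a triplet. Expanding gives eight terms, each carrying $H_{ixy}$ times a product of three factors from $A$ or $\bar A$. By the mutual exclusivity of the prefactors in Table~\ref{tab:isomorphism-class-notation}, each term selects exactly one hyperedge isomorphism class of $\set{i,x,y}$. The terms are grouped by how node $i$ (the susceptible node) sits in the motif:
\begin{itemize}
\item $\bigcirc\triangle$ and $\bigcirc\bullet$: one term each.
\item $\bigcirc\wedge$: three terms. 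In one, $i$ is the wedge center (factor $A_{ix}A_{iy}\bar A_{xy}$). In the other two, the center is $x$ or $y$ (factors $A_{ix}\bar A_{iy}A_{xy}$ and its mirror).
\item $\bigcirc-$: three terms. In one, $i$ is the isolated node ($\bar A_{ix}\bar A_{iy}A_{xy}$). In the other two, $x$ or $y$ is isolated.
\end{itemize}

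Next I apply dynamical homogeneity (Assumption~\ref{ass:dynamical-homogeneity}) to each term. The probability $\prob{S_iI_xI_y}^{\bigcirc\sigma}$ is replaced by its class average, with the distinguished node moved to the middle position. If $i$ is distinguished, this gives $\prob{ISI}^{\bigcirc\sigma}$. If $x$ or $y$ is distinguished, it gives $\prob{SII}^{\bigcirc\sigma}$, written $\prob{S{I}I}$ with the infected distinguished node in the middle. Because the class averages no longer depend on labels, they factor out of the sums. What remains are pure topological partial sums.

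These partial sums are evaluated by Lemma~\ref{lem:io-network-quantities}:
\begin{itemize}
\item equation~\eqref{eq:io-hyperedge-triangle-condition-on-node} gives $k_2\theta(3)$;
\item equation~\eqref{eq:io-hyperedge-wedge-condition-on-node-symm} gives $\tfrac{k_2}{3}\theta(2)$ for the center-at-$i$ wedge;
\item equation~\eqref{eq:io-hyperedge-wedge-condition-on-node-asymm} gives $\tfrac{k_2}{3}\theta(2)$ for each of the two mirror wedge terms, for $\tfrac{2k_2}{3}\theta(2)$ in total;
\item the analogous equations~\eqref{eq:io-hyperedge-edge-condition-on-node-symm}--\eqref{eq:io-hyperedge-edge-condition-on-node-asymm} give $\tfrac{k_2}{3}\theta(1)$ and $\tfrac{2k_2}{3}\theta(1)$ for the single-edge classes;
\item equation~\eqref{eq:io-hyperedge-isonodes-condition-on-node} gives $k_2\theta(0)$.
\end{itemize}
Collecting terms and factoring out $k_2$ yields~\eqref{eq:io-hyperedge}. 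As a sanity check, the coefficients sum to $\sum_m\theta(m)=1$, consistent with $\sum_{x<y}H_{ixy}=k_2$.

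The main obstacle is the pair of asymmetric terms. The ordered sum $\sum_{x<y}$ applied to an integrand that is not symmetric in $x\leftrightarrow y$ must be matched to its mirror term. This has to be done without double counting, and the state labels must be kept consistent. Both mirror terms carry the same state probability, since $x$ and $y$ are both infected, so they can be combined. Their total is the full unordered sum $\sum_{x\ne y}H_{ixy}A_{ix}\bar A_{iy}A_{xy}$. Promoting this sum via Assumption~\ref{ass:topological-homogeneity} and using~\eqref{eq:theta2} gives $\tfrac{2k_2}{3}\theta(2)$, as in the proof of Lemma~\ref{lem:io-network-quantities}. I would verify this carefully by counting, for a single hyperedge with a given number of edges, how many of the three node roles are realised by each term.
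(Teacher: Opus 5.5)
Your proposal is correct and follows essentially the same route as the paper. You insert the partition of unity over the three pair adjacencies, group the eight $\bigcirc$ terms by the role of node $i$, apply dynamical homogeneity to factor out the class probabilities, and read off the coefficients from the node-conditioned partial sums of Lemma~\ref{lem:io-network-quantities}. Your treatment of the mirror wedge and single-edge terms via the full unordered sum is exactly how the paper justifies equations~\eqref{eq:io-hyperedge-wedge-condition-on-node-asymm} and~\eqref{eq:io-hyperedge-edge-condition-on-node-asymm}.
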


\begin{proof}
  We first expand and identify the isomorphism classes. 
  Inserting the identity $(A_{ix} + \bar{A}_{ix})(A_{iy} + \bar{A}_{iy})(A_{xy} + \bar{A}_{xy}) = 1$ into each term of the sum and noting that $H_{ixy}\,\bar{H}_{ixy} = 0$ eliminates the four $\otimes$ cases. Expanding the remaining product yields eight terms grouped by the number of edges present: $1$~triangle term, $3$~wedge terms, $3$~single-edge terms, and $1$~empty term. By dynamical homogeneity, the joint probability in each term depends only on its isomorphism class. However, within the wedge and single-edge groups, the role of node~$i$ differs across the three permutations:
  \begin{itemize}
    \item In the wedge term $A_{ix}\,A_{iy}\,\bar{A}_{xy}$, node~$i$ is the center, giving $\prob{I{S}I}^{\bigcirc\wedge}$; in the other two ($A_{ix}\,\bar{A}_{iy}\,A_{xy}$ and $\bar{A}_{ix}\,A_{iy}\,A_{xy}$), node~$i$ is a leaf, giving $\prob{S{I}I}^{\bigcirc\wedge}$.
    \item In the single-edge term $\bar{A}_{ix}\,\bar{A}_{iy}\,A_{xy}$, node~$i$ is isolated, giving $\prob{I{S}I}^{\bigcirc-}$; in the other two ($A_{ix}\,\bar{A}_{iy}\,\bar{A}_{xy}$ and $\bar{A}_{ix}\,A_{iy}\,\bar{A}_{xy}$), node~$i$ is an endpoint of the edge, giving $\prob{S{I}I}^{\bigcirc-}$.
  \end{itemize}
  Grouping accordingly:
  \begin{equation}\label{eq:io-hyperedge-expanded}
  \begin{aligned}
    \sum_{\substack{x < y \\ x,y \neq i}} &
      H_{ixy}\prob{S_i I_x I_y} \\
    \approx{}&
      \prob{SII}^{\bigcirc\triangle}
      \sum_{\substack{x < y \\ x,y \neq i}}
        H_{ixy}\, A_{ix}\, A_{iy}\, A_{xy} \\
    &+ \sum_{\substack{x < y \\ x,y \neq i}}
        H_{ixy} \bigl(
          A_{ix}\, A_{iy}\, \bar{A}_{xy}\; \prob{I{S}I}^{\bigcirc\wedge}
        + [A_{ix}\, \bar{A}_{iy}\, A_{xy}
        + \bar{A}_{ix}\, A_{iy}\, A_{xy}]\; \prob{S{I}I}^{\bigcirc\wedge}
        \bigr) \\
    &+ \sum_{\substack{x < y \\ x,y \neq i}}
        H_{ixy} \bigl(
          \bar{A}_{ix}\, \bar{A}_{iy}\, A_{xy}\; \prob{I{S}I}^{\bigcirc-}
        + [A_{ix}\, \bar{A}_{iy}\, \bar{A}_{xy}
        + \bar{A}_{ix}\, A_{iy}\, \bar{A}_{xy}]\; \prob{S{I}I}^{\bigcirc-}
        \bigr) \\
    &+ \prob{SII}^{\bigcirc\,\bullet}
      \sum_{\substack{x < y \\ x,y \neq i}}
        H_{ixy}\, \bar{A}_{ix}\, \bar{A}_{iy}\, \bar{A}_{xy}.
  \end{aligned}
  \end{equation}

  The prefactors for each isomorphism class directly map onto the network quantities listed in Lemma~\ref{lem:io-network-quantities}---specifically, those in equations~\eqref{eq:io-hyperedge-triangle-condition-on-node}--\eqref{eq:io-hyperedge-isonodes-condition-on-node}.
\end{proof}

\begin{remark}\label{rem:io-naive-limit}
  If we treat all joint probabilities in equation~\eqref{eq:io-hyperedge} as equivalent, denoting their common value by $\prob{SII}^{\bigcirc}$, then the coefficients regroup as $\theta(3) + (\frac{1}{3} + \frac{2}{3})\theta(2) + (\frac{1}{3} + \frac{2}{3})\theta(1) + \theta(0) = \sum_{m=0}^{3} \theta(m) = 1$, and the right-hand side reduces to $k_2\,\prob{SII}^{\bigcirc}$. This is the naive mean-field result one would obtain by treating all hyperedge infections as equivalent regardless of the internal pairwise structure. The full expression reveals that the dynamics depend not only on the distribution of edges within hyperedges (through~$\theta$), but also on the position of the susceptible node within the sub-graph topology.
\end{remark}

\subsubsection{Pair-level derivation} \label{subsubsec:io-pair}
We now evaluate the summation terms in the pair equation~\eqref{eq:dSI} under dynamical and topological homogeneity, with $A_{ij} = 1$ enforced throughout. The pair equation contains five summation terms: two external pairwise infection terms, one mixed hyperedge infection term, and two external hyperedge infection terms. We treat each in turn.

\begin{lemma}\label{lem:io-pair-beta1}
  Under Assumption~\ref{ass:dynamical-homogeneity} and Assumption~\ref{ass:topological-homogeneity}, with $A_{ij} = 1$, the first external pairwise infection term in equation~\eqref{eq:dSI} satisfies
  \begin{equation}\label{eq:io-pair-beta1}
  \begin{aligned}
    A_{ij} \sum_{x \neq i,j} A_{ix}\, \prob{S_i I_j I_x}
    \approx{}&
      {\frac{2k_2\,\theta(3)}{k_1}}\,
        \prob{SII}^{\bigcirc\triangle}
    + {\frac{2k_2\,\theta(2)}{3k_1}}\,
        \prob{I{S}I}^{\bigcirc\wedge} \\
    &+ \left[
        (k_1 - 1)\phi
        - {\frac{2k_2\,\theta(3)}{k_1}}
      \right] \prob{SII}^{\otimes\triangle} \\
    &+ \left[
        (k_1 - 1)(1 - \phi)
        - {\frac{2k_2\,\theta(2)}{3k_1}}
      \right] \prob{I{S}I}^{\otimes\wedge}.
  \end{aligned}
  \end{equation}
  The four coefficients sum to $k_1 - 1$, the mean number of neighbors of~$i$ excluding~$j$.
\end{lemma}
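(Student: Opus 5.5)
We mirror the proof of Lemma~\ref{lem:io-hyperedge-expansion}: insert partitions of unity into the summand, identify the isomorphism class of each surviving term, pull the class-specific probability out of the sum via dynamical homogeneity, and evaluate the remaining adjacency sums using Lemma~\ref{lem:io-network-quantities}.

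Concretely, inside $A_{ij}\sum_{x\neq i,j} A_{ix}\prob{S_i I_j I_x}$ we insert $1=(H_{ijx}+\bar H_{ijx})(A_{jx}+\bar A_{jx})$. The factors $A_{ij}A_{ix}$ are already fixed, so this gives four terms.
\begin{itemize}
\item The term with $H_{ijx}A_{jx}$ spans a $\bigcirc\triangle$ triplet with probability $\prob{SII}^{\bigcirc\triangle}$.
\item The term with $H_{ijx}\bar A_{jx}$ spans a hyperedge wedge centred at $i$, since $i$ is joined to both $j$ and $x$. In the centre-in-middle convention its probability is $\prob{ISI}^{\bigcirc\wedge}$.
\item The term with $\bar H_{ijx}A_{jx}$ gives $\prob{SII}^{\otimes\triangle}$.
\item The term with $\bar H_{ijx}\bar A_{jx}$ gives $\prob{ISI}^{\otimes\wedge}$, again centred at $i$.
\end{itemize}
There are no $\bigcirc-$ or $\bigcirc\bullet$ terms, because two edges through $i$ are always present.

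By Assumption~\ref{ass:dynamical-homogeneity}, each of the four probabilities is constant within its class and can be factored out of the sum. The remaining coefficients are then the conditioned-on-edge sums of Lemma~\ref{lem:io-network-quantities}:
\begin{itemize}
\item $A_{ij}\sum_x H_{ijx}A_{ix}A_{jx}$ is \eqref{eq:io-hyperedge-triangle-condition-on-edge}, giving $2k_2\theta(3)/k_1$.
\item $A_{ij}\sum_x H_{ijx}A_{ix}\bar A_{jx}$ is \eqref{eq:io-hyperedge-wedge-condition-on-edge}, giving $2k_2\theta(2)/(3k_1)$.
\item The two $\otimes$ sums are \eqref{eq:io-free-triangle-condition-on-edge} and \eqref{eq:io-free-wedge-condition-on-edge}.
\end{itemize}
Substituting these values produces \eqref{eq:io-pair-beta1}.

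The main point needing care is matching index roles to the lemma as stated. In \eqref{eq:io-hyperedge-wedge-condition-on-edge} the present edge within the summand is $A_{ix}$ and the absent one is $\bar A_{jx}$, which is exactly our configuration with the wedge centre at the susceptible node $i$. No relabelling or symmetry factor is therefore needed. We must also check that this is the asymmetric wedge count, with $1/3$ rather than $2/3$ of the $\theta(2)$ mass. This holds because, of the three possible wedge centres in a two-edge hyperedge, only one coincides with the fixed endpoint $i$.

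For the closing claim that the four coefficients sum to $k_1-1$:
\begin{itemize}
\item The $\theta$-dependent parts cancel pairwise between the $\bigcirc$ and $\otimes$ coefficients.
\item What remains is $(k_1-1)\phi+(k_1-1)(1-\phi)=k_1-1$.
\end{itemize}
This agrees with \eqref{eq:partial-edge-degree}, or equivalently with collapsing the inserted partitions of unity back to $A_{ij}\sum_x A_{ix}$.
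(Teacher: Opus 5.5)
Your proposal is correct and follows the paper's proof essentially step for step. You insert $(A_{jx}+\bar A_{jx})(H_{ijx}+\bar H_{ijx})=1$, identify the four surviving classes with $i$ as the wedge centre, factor the class probabilities out by dynamical homogeneity, and read the coefficients off the edge-conditioned sums \eqref{eq:io-hyperedge-triangle-condition-on-edge}--\eqref{eq:io-free-wedge-condition-on-edge} of Lemma~\ref{lem:io-network-quantities}; your explicit checks of the $\tfrac{1}{3}$ centre fraction and of the coefficient sum $k_1-1$ are small additions the paper leaves implicit.
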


\begin{proof}
  Since $A_{ij} = 1$ and $A_{ix} = 1$ are enforced by the prefactors, the sub-graph on $\set{i, j, x}$ always contains at least the two edges $\set{i,j}$ and $\set{i,x}$. Inserting $(A_{jx} + \bar{A}_{jx})(H_{ijx} + \bar{H}_{ijx}) = 1$ yields four terms corresponding to triangle or wedge configurations, each with or without a hyperedge. By dynamical homogeneity, the joint probability in each term depends only on the isomorphism class. In the wedge cases ($\bar{A}_{jx}$), node~$i$ is connected to both~$j$ and~$x$, making it the center; its state~$S$ therefore occupies the middle entry. In the triangle cases, the ordering is irrelevant. Grouping accordingly:
  \begin{equation}\label{eq:io-pair-beta1-expanded}
  \begin{aligned}
    A_{ij} \sum_{x \neq i,j} &
      A_{ix}\, \prob{S_i I_j I_x} \\
    \approx{}&
      \prob{SII}^{\bigcirc\triangle}
        \sum_{x \neq i,j}
          A_{ij}\, A_{ix}\, A_{jx}\, H_{ijx}
    + \prob{I{S}I}^{\bigcirc\wedge}
        \sum_{x \neq i,j}
          A_{ij}\, A_{ix}\, \bar{A}_{jx}\, H_{ijx} \\
    &+ \prob{SII}^{\otimes\triangle}
        \sum_{x \neq i,j}
          A_{ij}\, A_{ix}\, A_{jx}\, \bar{H}_{ijx}
    + \prob{I{S}I}^{\otimes\wedge}
        \sum_{x \neq i,j}
          A_{ij}\, A_{ix}\, \bar{A}_{jx}\, \bar{H}_{ijx}.
  \end{aligned}
  \end{equation}

  The corresponding factors for each isomorphism class are precisely the same network quantities listed in Lemma~\ref{lem:io-network-quantities}---specifically, those in equations~\eqref{eq:io-hyperedge-triangle-condition-on-edge}--\eqref{eq:io-free-wedge-condition-on-edge}.

\end{proof}

The second external pairwise infection term in equation~\eqref{eq:dSI}, namely $A_{ij} \sum_{x \neq i,j} A_{jx}\, \prob{S_i S_j I_x}$, follows from Lemma~\ref{lem:io-pair-beta1} by the substitution $i \leftrightarrow j$ with corresponding state relabeling $I_i \to S_i$, $S_j \to S_j$, and yields the same structure with node~$j$ playing the central role.

\begin{remark}
    The corresponding coefficient in the target equation~\eqref{eq:io-target-pair} is $(k_1-1) \prob{I S I}$, where $\prob{ISI}$ is treated as wedges in the closure decomposition. This requires assuming that both $\frac{k_2}{k_1}\ll 1$ and $\phi \ll (1-\phi)$. This is supported in~\cite{malizia2026nested} by asserting that closed triplets of edges are not allowed in the model, which deals with $\phi \ll 1$, but it allows for exceptions for hyperedges in the case where $\alpha =1$.
\end{remark}

\begin{lemma}\label{lem:io-pair-beta2-single}
  Under Assumption~\ref{ass:dynamical-homogeneity} and Assumption~\ref{ass:topological-homogeneity}, with $A_{ij} = 1$, the mixed hyperedge infection term in equation~\eqref{eq:dSI} satisfies
  \begin{equation}\label{eq:io-pair-beta2-single}
  \begin{aligned}
    A_{ij} \sum_{x \neq i,j} H_{ijx}\, \prob{S_i I_j I_x}
    \approx \frac{2k_2}{k_1} \bigl( &
      \theta(3)\, \prob{SII}^{\bigcirc\triangle}
    + \tfrac{1}{3}\theta(2)\, \prob{I{S}I}^{\bigcirc\wedge} \\
    &+ \tfrac{1}{3}\theta(2)\, \prob{S{I}I}^{\bigcirc\wedge}
    + \tfrac{1}{3}\theta(1)\, \prob{S{I}I}^{\bigcirc-}
    \bigr).
  \end{aligned}
  \end{equation}
\end{lemma}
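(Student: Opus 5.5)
The plan follows the proof of Lemma~\ref{lem:io-pair-beta1}: expand the adjacency structure of the triplet $\set{i,j,x}$, identify the isomorphism class of each surviving term, and evaluate the prefactors with Lemma~\ref{lem:io-network-quantities}. The prefactor $A_{ij}H_{ijx}$ already fixes the edge $\set{i,j}$ and the hyperedge $\set{i,j,x}$. So I would insert only $(A_{ix}+\bar{A}_{ix})(A_{jx}+\bar{A}_{jx})=1$ into each summand, with no $H$ factor needed. The four $\otimes$ classes never appear because of the factor $H_{ijx}$, and this leaves four terms, all of class $\bigcirc$.

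Next I would identify the class of each term under dynamical homogeneity (Assumption~\ref{ass:dynamical-homogeneity}), placing the distinguished node in the middle entry as in Table~\ref{tab:isomorphism-class-notation}.
\begin{enumerate}
\item $A_{ix}A_{jx}$ gives a triangle, hence $\prob{SII}^{\bigcirc\triangle}$.
\item $A_{ix}\bar{A}_{jx}$ gives a wedge centred at $i$ (state $S$), hence $\prob{ISI}^{\bigcirc\wedge}$.
\item $\bar{A}_{ix}A_{jx}$ gives a wedge centred at $j$ (state $I$), with $i$ as a leaf, hence $\prob{SII}^{\bigcirc\wedge}$.
\item $\bar{A}_{ix}\bar{A}_{jx}$ leaves only the edge $\set{i,j}$, so $x$ (state $I$) is the isolated middle node, hence $\prob{SII}^{\bigcirc-}$.
\end{enumerate}
Each probability can then be pulled out of the sum, which leaves four partial sums of the form $A_{ij}\sum_{x\neq i,j}H_{ijx}(\cdots)$.

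I would evaluate these using topological homogeneity (Assumption~\ref{ass:topological-homogeneity}) with $p=A_{ij}$.
\begin{itemize}
\item The triangle sum is \eqref{eq:io-hyperedge-triangle-condition-on-edge}, giving $\tfrac{2k_2}{k_1}\theta(3)$.
\item The wedge sum centred at $i$ is \eqref{eq:io-hyperedge-wedge-condition-on-edge}, giving $\tfrac{2k_2}{3k_1}\theta(2)$.
\item The single-edge sum is \eqref{eq:io-hyperedge-edge-condition-on-edge}, giving $\tfrac{2k_2}{3k_1}\theta(1)$.
\end{itemize}
Factoring out $2k_2/k_1$ then produces exactly the coefficients $\theta(3)$, $\tfrac13\theta(2)$, $\tfrac13\theta(2)$ and $\tfrac13\theta(1)$ of \eqref{eq:io-pair-beta2-single}.

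The step I expect to need care is the wedge centred at $j$, namely $A_{ij}\sum_x H_{ijx}\bar{A}_{ix}A_{jx}$, because Lemma~\ref{lem:io-network-quantities} does not list it. To handle it, I would promote it to the full sum $\sum_{i,j,x}A_{ij}H_{ijx}\bar{A}_{ix}A_{jx}/(Nk_1)$. Relabelling the dummy indices $i\leftrightarrow j$ and using the symmetry of $\mathbf{A}$ and $\mathbf{H}$ turns this into the numerator of \eqref{eq:io-hyperedge-wedge-condition-on-edge}. That equality gives the same value $\tfrac{2k_2}{3k_1}\theta(2)$; it is the edge-level analogue of the asymmetric-case argument in the proof of Lemma~\ref{lem:io-network-quantities}.

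A secondary concern is bookkeeping of positions in the state probabilities, so that the centre and isolated-node conventions are applied correctly. As a consistency check, setting all four probabilities equal should give the total coefficient $\tfrac{2k_2}{k_1}\left(\theta(3)+\tfrac23\theta(2)+\tfrac13\theta(1)\right)=\tfrac{2k_2}{k_1}\alpha$. This agrees with the $2\alpha k_2/k_1$ in the target equation \eqref{eq:io-target-pair} via Lemma~\ref{lem:alpha-from-theta}.
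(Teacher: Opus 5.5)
Your proposal is correct and follows the paper's proof: insert $(A_{ix}+\bar{A}_{ix})(A_{jx}+\bar{A}_{jx})=1$, identify the four $\bigcirc$ classes under dynamical homogeneity, and read the coefficients off Lemma~\ref{lem:io-network-quantities}. Your explicit $i\leftrightarrow j$ relabelling for the wedge centred at $j$ (the $\bar{A}_{ix}A_{jx}$ term) fills in a step the paper leaves implicit, since that sum is not literally among the identities the paper cites.
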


\begin{proof}
  Since $H_{ijx}$ is already a factor in the sum, only four $\bigcirc$ isomorphism classes survive. Inserting the identity $(A_{ix} + \bar{A}_{ix})(A_{jx} + \bar{A}_{jx}) = 1$ and applying dynamical homogeneity, we group by isomorphism class:
  \begin{equation}\label{eq:io-pair-beta2-single-expanded}
  \begin{aligned}
    A_{ij} \sum_{x \neq i,j} &
      H_{ijx}\, \prob{S_i I_j I_x} \\
    \approx{}&
      \prob{SII}^{\bigcirc\triangle}
        \sum_{x \neq i,j}
          A_{ij}\, H_{ijx}\, A_{ix}\, A_{jx} \\
    &+ \prob{I{S}I}^{\bigcirc\wedge}
        \sum_{x \neq i,j}
          A_{ij}\, H_{ijx}\, A_{ix}\, \bar{A}_{jx} \\
    &+ \prob{S{I}I}^{\bigcirc\wedge}
        \sum_{x \neq i,j}
          A_{ij}\, H_{ijx}\, \bar{A}_{ix}\, A_{jx} \\
    &+ \prob{S{I}I}^{\bigcirc-}
        \sum_{x \neq i,j}
          A_{ij}\, H_{ijx}\, \bar{A}_{ix}\, \bar{A}_{jx}.
  \end{aligned}
  \end{equation}
  The solution to the factors corresponding to each isomorphism class is listed in Lemma~\ref{lem:io-network-quantities}---specifically equations~\eqref{eq:io-hyperedge-triangle-condition-on-edge}--\eqref{eq:io-hyperedge-edge-condition-on-edge}.
\end{proof}

\begin{remark}\label{rem:io-pair-beta2-alpha}
  If we treat all joint probabilities in equation~\eqref{eq:io-pair-beta2-single} as equivalent, denoting their common value by $\prob{SII}^{\bigcirc}$, then the coefficient reduces to
  \[
    \frac{2k_2}{k_1} \bigl[
      \theta(3) + \tfrac{1}{3}\theta(2)
      + \tfrac{1}{3}\theta(2) + \tfrac{1}{3}\theta(1)
      + 0 \cdot \theta(0)
    \bigr]
    = \frac{2k_2}{k_1}\,\alpha,
  \]
  where the last equality follows from Lemma~\ref{lem:alpha-from-theta}, since $\theta(3) + \frac{2}{3}\theta(2) + \frac{1}{3}\theta(1) = \frac{1}{3}[3\theta(3) + 2\theta(2) + \theta(1)] = \alpha$. This corresponds directly to the coefficients in equation~\eqref{eq:io-target-pair}.
\end{remark}

We have now established the machinery to close the inter-order model. The node-level and pair-level partial sums in Lemma~\ref{lem:io-network-quantities} demonstrate two things. First, a precise derivation starting from the microscopic rate equations~\eqref{eq:dI}--\eqref{eq:dSII} and using only the explicitly stated structural assumptions (Assumption~\ref{ass:inter-order-stated-sparsity}) requires the additional parameterisation $\theta$ to resolve the isomorphism classes of hyperedge triplets. Second, Figure~\ref{fig:inter-order-motifs} shows that the number of pendant and bowtie closures stated in~\cite{malizia2026nested} do not cover the full range of four- and five-node motifs permitted under Assumption~\ref{ass:inter-order-stated-sparsity}. The strong sparsity Assumption~\ref{ass:io-strong-sparsity} restricts the admissible motifs to exactly those covered by the stated closures, and the hyperedge-class collapse Assumption~\ref{ass:io-class-collapse} allows the $\theta$-resolved expressions to collapse onto the single hyperedge triplet probability $\prob{XYZ}^{\bigcirc}$ tracked by the model. Together, these implicit assumptions are sufficient to close the system.

\begin{theorem}\label{thm:io-recovery}
  Under the stated sparsity assumption (Assumption~\ref{ass:inter-order-stated-sparsity}), the strong sparsity assumption (Assumption~\ref{ass:io-strong-sparsity}), and the hyperedge-class collapse assumption (Assumption~\ref{ass:io-class-collapse}), together with dynamical and topological homogeneity (Assumptions~\ref{ass:dynamical-homogeneity} and~\ref{ass:topological-homogeneity}), the microscopic rate equations~\eqref{eq:dI}--\eqref{eq:dSII} reduce to the inter-order target equations~\eqref{eq:io-target-eqs}, with closures given by~\eqref{eq:io-closure-triples}--\eqref{eq:io-closure-bowtie}.
\end{theorem}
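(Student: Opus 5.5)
The plan is to proceed equation by equation, reusing the lemmas already established. The first stage collapses the $\theta$-resolved expressions; the second evaluates the four- and five-node external terms.

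\textbf{Node and pair equations.} For the node equation~\eqref{eq:io-target-node}, combine~\eqref{eq:io-pairwise} with Lemma~\ref{lem:io-hyperedge-expansion}. Assumption~\ref{ass:io-class-collapse} identifies all six hyperedge-class probabilities with $\prob{SII}^{\bigcirc}=\prob{ISI}^{\bigcirc}$. The coefficients then sum to $k_2\sum_m\theta(m)=k_2$, exactly as in Remark~\ref{rem:io-naive-limit}. For the pair equation~\eqref{eq:io-target-pair}, multiply~\eqref{eq:dSI} by $A_{ij}$ and treat its terms as follows.
\begin{itemize}
\item The recovery and internal terms pass through directly.
\item The two external pairwise terms are given by Lemma~\ref{lem:io-pair-beta1} and its $i\leftrightarrow j$ analogue. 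Under Assumption~\ref{ass:io-strong-sparsity}, an edge adjacent to $\set{i,j}$ cannot close a triangle or a hyperedge with it unless $\set{i,j}$ itself lies in that hyperedge. I would use this, together with the regime discussed in the remark after Lemma~\ref{lem:io-pair-beta1}, to reduce these terms to $(k_1-1)(\prob{SSI}-\prob{ISI})$ with the wedge closures~\eqref{eq:io-closure-triples}.
\item The mixed hyperedge term follows from Lemma~\ref{lem:io-pair-beta2-single} plus class collapse, giving $-2\beta_2(k_2/k_1)\alpha\,\prob{ISI}^{\bigcirc}$ as in Remark~\ref{rem:io-pair-beta2-alpha}.
\item The external hyperedge terms $A_{ij}\sum_{x<y}H_{jxy}\prob{S_iS_jI_xI_y}$ (and the $H_{ixy}$ counterpart) are pendant motifs with pivot $j$. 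Strong sparsity forces $A_{ix}=A_{iy}=0$ and makes $\set{i,j,x,y}$ exactly the hyperedge--pendant motif. Promoting $A_{ij}\sum_{x<y}H_{jxy}$ with $p=A_{ij}$ gives $\sum_{i,j,x,y}A_{ij}H_{jxy}/(2Nk_1)$. Each hyperedge at $j$ excludes the $\epsilon$-count of edges from $j$ into it, whose mean is $2\alpha$ by Lemma~\ref{lem:alpha-tensor}. The count is therefore $2Nk_2(k_1-2\alpha)$, yielding the coefficient $(k_2/k_1)(k_1-2\alpha)$.
\end{itemize}

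\textbf{Triplet equations.} For~\eqref{eq:io-target-sss}--\eqref{eq:io-target-isi}, multiply~\eqref{eq:dSSI} and~\eqref{eq:dSII} by $H_{ijk}$ and expand over the four $\bigcirc$ classes. The internal pairwise terms carry $A_{ik}+A_{jk}$ (respectively $A_{ij}+A_{ik}$, $A_{jk}$); averaged over hyperedges each $A$ contributes $\alpha$, giving the $2\alpha$ terms after class collapse. The internal hyperedge term gives $-\beta_2\prob{ISI}^{\bigcirc}$ directly. The mixed hyperedge terms carry $H_{ijk}H_{ikx}$ with $x\ne j$, so they vanish by pair-disjointedness (Assumption~\ref{ass:inter-order-stated-sparsity}). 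The external pairwise terms $H_{ijk}\sum_x A_{ix}\prob{\cdots}$ are pendant motifs at pivot $i$. Promotion with $p=H_{ijk}$ gives $\sum H_{ijk}A_{ix}/(6|\mathcal{H}|)$, i.e.\ $k_1$ minus the expected number of edges from $i$ into the hyperedge, which is $2\alpha$ by Lemma~\ref{lem:alpha-tensor}. The external hyperedge terms $H_{ijk}\sum_{x<y}H_{ixy}$ are bowties by strong sparsity. Their promoted count is $\sum H_{ijk}H_{ixy}/(6|\mathcal{H}|)$, twice the ordered count divided appropriately, which gives $k_2-1$. Summing the three permutation contributions supplies the factors $3$ and $2$ in the targets, using the symmetry of $\prob{\cdot}^{\bigcirc}$ under class collapse.

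\textbf{Closures and main obstacle.} Finally, the closures follow from Remark~\ref{rem:io-closure-canonical}: Lemma~\ref{lem:pendant-closure} with Remark~\ref{rem:pendant-hyperedge-only}, and Lemma~\ref{lem:bowtie-closure} with Remark~\ref{rem:bowtie-hyperedge-only}, valid for every $\bigcirc\sigma$ core. Class collapse then replaces each core with $\prob{\cdot}^{\bigcirc}$. The main obstacle is the pair-level external pairwise term. There the $\theta$ and $\phi$ corrections of Lemma~\ref{lem:io-pair-beta1} do not disappear under the stated assumptions alone, so recovering the bare coefficient $(k_1-1)$ requires explicitly invoking the approximation $\phi\ll1$, $k_2/k_1\ll1$. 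I would state this as an additional limit rather than hide it.
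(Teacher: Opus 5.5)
Your proposal follows the paper's proof essentially step for step. You reduce the node and pair equations term by term using the partial-sum lemmas plus class collapse. You obtain the counts $(k_2/k_1)(k_1-2\alpha)$, $k_1-2\alpha$, $k_2-1$ and $\alpha$ for the four-node, five-node and internal terms. You take the closures from Remark~\ref{rem:io-closure-canonical}. Your explicit $\phi\ll1$, $k_2/k_1\ll1$ limit for the pair-level external pairwise term is exactly what the paper records in the remark after Lemma~\ref{lem:io-pair-beta1}, though the theorem's proof leaves it implicit.

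One small correction: under its intended reading, Assumption~\ref{ass:io-strong-sparsity} does not forbid a hyperedge-free triangle on $\{i,j,x\}$. So it is that limit, not strong sparsity, that removes the $\otimes\triangle$ and $\bigcirc$ contributions there, as your final paragraph already concedes.
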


\begin{proof}
  The node-level equation~\eqref{eq:io-target-node} and the recovery, internal pairwise infection, external pairwise infection, and mixed hyperedge infection terms of the pair-level equation~\eqref{eq:io-target-pair} follow from the partial sums of Lemma~\ref{lem:io-network-quantities} together with Assumption~\ref{ass:io-class-collapse} (derivation at the node-level in subsubsection~\ref{subsubsec:io-single} and the pair level in subsubsection~\ref{subsubsec:io-pair}). The remaining terms---the external hyperedge infection term in the pair-level equation, which involves a four-node motif, and the entirety of the triplet-level equations~\eqref{eq:io-target-sss}--\eqref{eq:io-target-isi}, which involve four- and five-node motifs---require additional supporting lemmas analogous to Lemma~\ref{lem:io-network-quantities} but for partial sums conditioned on a fixed hyperedge or a fixed pair of hyperedges. These lemmas, together with their proofs and the term-by-term construction of each rate equation, are provided in Appendix~\ref{app:inter-order-derivations}.
\end{proof}

\section{Discussion} \label{sec:discussion}
We have introduced a closure operator that, given the adjacency data of any local set of nodes, returns a pseudo-probability for the motif's state as a product of lower-order joint probabilities. Applying this operator to four- and five-node motif terms in the exact microscopic SIS rate equations on a higher-order network yields a closed system of equations expressed entirely in terms of node, pair, and triplet level joint probabilities. We close this paper by demonstrating that this framework can (a) recover known existing models by imposing structural constraints and assuming dynamical homogeneity, and (b) identify hidden assumptions of existing models and suggest potential avenues to reconcile them. Specifically, the pair-based model of Malizia et al.~\cite{malizia2025pair} and the maximal-clique model of Burgio et al.~\cite{burgio2024triadic} emerge as direct outputs of our closure operator under the structural and mean-field assumptions made explicit in our derivation.

The exercise also surfaces assumptions that were not stated in the original derivations of these models. The clearest case is the inter-order model of~\cite{malizia2026nested}, whose governing rate equations show good agreement with stochastic simulation but, when derived from the first-principles microscopic rate equation via our closure methodology, are seen to require strictly stronger conditions than those reported: in particular, a sparsity condition that is stronger than the version explicitly stated in their paper, as well as a homogeneity condition on the hyperedge dynamics. In this case, it appears that the minor deviations that would arise from these lost assumptions did not affect the numerical experiments of the paper. Nonetheless, identifying these is useful as they offer potential new avenues forward. We proposed that these assumptions can be lifted by introducing finer topological quantities: in addition to the inter-order overlap parameter $\alpha$, the underlying distribution $\theta$, of which $\alpha$ is the first moment, allows the hyperedge dynamics to be resolved at the level of individual motifs, removing the requirement of hyperedge dynamic homogeneity.

A further consequence of the framework is that it is constructive: the closure procedure does not only recover existing models, it opens up the space of possible models by removing the combinatorial wall of motifs that blocks the analysis of higher-order dynamics. Characterizing the state probabilities is no longer an issue; instead, for mean-field models, the challenge turns to properly characterizing the frequency or density of these motifs on any particular network. Our topological homogeneity framework provides a procedure for identifying the global network parameters required. The procedure has three steps: enumerate all isomorphism classes of motifs of four and five nodes; write the linear relations between their densities using the complement identities $\bar{A}_{ij} = 1 - A_{ij}$ and $\bar{H}_{ijk} = 1 - H_{ijk}$; and count the degrees of freedom of the resulting system to identify the number of global parameters that must be characterized. This makes the framework a generator of new models on novel network topologies, not only a tool for recovering existing ones.

Several limitations should be flagged. Firstly, the Kirkwood superposition approximation underpins all of the closures, and therefore regimes where the Kirkwood superposition approximation is a poor fit will affect the framework downstream. Secondly, the dynamics worked through in this paper are restricted to SIS, but in principle the closure operator is indifferent to the choice of dynamical model: it can be extended to other epidemiological models (SIR, SEIRD, etc.) or other network dynamics entirely (voter, Kuramoto, etc.). Thirdly, the hyperedge size is limited to three in this paper. Larger hyperedges would be a natural extension, particularly to hyperedges of arbitrary size; one would need to generalize the hyperedge operator in the closure procedure, which is further complicated when multiple hyperedge sizes are involved. In principle, however, the basic idea of the construction (using truncated Kirkwood approximations) can be extended to truncate more than two orders below, which would enable larger hyperedges in the model. Fourthly, this model closes at the level of triplets. While this is currently in line with the latest research, in principle the closure could be extended beyond third moments, applying the closure operator to close the system in the same manner. Finally, we have not undertaken computational tests of the framework, although the analytic derivations of existing models serve as a validation.


\section*{Acknowledgments}
K.T acknowledges the PhD studentship support from Northeastern University. The authors declare no competing interests.

\appendix

\section{BMC Model: second-moment derivation} \label{app:bmc-order-2-derivation}
 
We provide the full term-by-term derivation of Theorem~\ref{thm:burgio-order2}. We multiply both sides of equation~\eqref{eq:dSI} by~$B^{(1)}_{ij}$, which restricts attention to maximal $2$-cliques. By Proposition~\ref{prop:burgio-constraints}, $B^{(1)}_{ij}\, A_{ij} = (B^{(1)}_{ij})^2 = B^{(1)}_{ij}$, so the prefactor simplifies throughout.
 
\medskip
\textit{Recovery and internal pairwise infection terms.} The recovery terms $-\gamma\, B^{(1)}_{ij}\prob{S_i I_j} + \gamma\, B^{(1)}_{ij}\prob{I_i I_j}$ and the internal pairwise infection term $-\beta_1\, B^{(1)}_{ij}\prob{S_i I_j}$ pass through unchanged since $B^{(1)}_{ij} \in \set{0,1}$.
 
\textit{Mixed hyperedge infection term.} This term contains the factor $B^{(1)}_{ij}\, H_{ijx} = B^{(1)}_{ij}\, B^{(0,1)}_{ijx}$. By the maximality constraint (equation~\eqref{eq:burgio-maximality}), $B^{(1)}_{ij}\, B^{(0,1)}_{ijk} = 0$ for all~$k$, so the entire mixed hyperedge infection term vanishes.
 
\textit{External hyperedge infection terms.} The terms $\beta_2 B^{(1)}_{ij}\, B^{(0,1)}_{ixy}$ and $\beta_2 B^{(1)}_{ij}\, B^{(0,1)}_{jxy}$ survive because the hyperedges involve $\set{i,x,y}$ or $\set{j,x,y}$, not the pair $\set{i,j}$, so the maximality constraint does not apply. These appear directly after the $\sum_{x<y} \to \frac{1}{2}\sum_{x,y}$ conversion.
 
\textit{Intermediate result.} After Steps above, the intermediate result is
\begin{equation}\label{eq:burgio-order2-intermediate}
\begin{aligned}
  \frac{d}{dt} B^{(1)}_{ij}\prob{S_i I_j}
  ={}&
    \underbracket{
      -\gamma\, B^{(1)}_{ij}\prob{S_i I_j}
      + \gamma\, B^{(1)}_{ij}\prob{I_i I_j}
    }{recovery}
    \underbracket{
      \;-\; \beta_1\, B^{(1)}_{ij}\prob{S_i I_j}
    }{internal pairwise infection} \\
  &\underbracket{
    \;-\; \beta_1 \sum_{x \neq i,j}
      B^{(1)}_{ij}\, A_{ix}\,\prob{S_i I_j I_x}
    \;+\; \beta_1 \sum_{x \neq i,j}
      B^{(1)}_{ij}\, A_{jx}\,\prob{S_i S_j I_x}
  }{external pairwise infection} \\
  &\underbracket{
    \;-\; \frac{\beta_2}{2} \sum_{\substack{x,y \\ y \neq x}}
      B^{(1)}_{ij}\, B^{(0,1)}_{ixy}\,\prob{I_x I_y S_i I_j}
    \;+\; \frac{\beta_2}{2} \sum_{\substack{x,y \\ y \neq x}}
      B^{(1)}_{ij}\, B^{(0,1)}_{jxy}\,\prob{S_i S_j I_x I_y}
  }{external hyperedge infection}.
\end{aligned}
\end{equation}
 
\textit{External pairwise infection terms.} All terms except the external pairwise infection term already match equation~\eqref{eq:burgio-target-order2}. Applying Lemma~\ref{lem:burgio-substitution} to the first external pairwise infection term:
\begin{equation}\label{eq:burgio-order2-beta1}
\begin{aligned}
  -\beta_1 &\sum_{x \neq i,j}
    B^{(1)}_{ij}\, A_{ix}\,\prob{S_i I_j I_x} \\
  =& -\beta_1
    \Bigg[
      \sum_{x \neq i,j}
        B^{(1)}_{ij}\, B^{(1)}_{ix}\,\prob{S_i I_j I_x} \\
      &\quad + \sum_{\substack{x,y \\ y \neq x}}
        B^{(1)}_{ij}\, B^{(1,0)}_{ixy}\,
          \frac{1}{2}\Big(
            \prob{S_i I_j I_x S_y}
            + \prob{S_i I_j S_x I_y}
            + 2\,\prob{S_i I_j I_x I_y}
          \Big)
    \Bigg].
\end{aligned}
\end{equation}
The second external pairwise infection term (infection pressure from neighbors of~$j$) follows by the substitution $[I_j \to S_j,\; i \leftrightarrow j]$ applied to equation~\eqref{eq:burgio-order2-beta1}, i.e.\ swapping the state of~$j$ from~$I$ to~$S$ and exchanging the indices $i$ and~$j$:
\[
  +\beta_1 \Big[I_j \to S_j,\; i \leftrightarrow j\Big].
\]
With this expansion, equation~\eqref{eq:burgio-target-order2} is recovered, matching equation~(2b) of~\cite{burgio2024triadic}.

\section{BMC model: third-moment derivation}\label{app:bmc-order-3-derivation}

We provide the full term-by-term derivation of Theorem~\ref{thm:burgio-order3}. Throughout, we multiply the triplet equations~\eqref{eq:dSSI} and~\eqref{eq:dSII} by $B^{(2)}_{ijk}$, where
\begin{equation}\label{eq:B2-def}
    B^{(2)}_{ijk} = B^{(1,0)}_{ijk} + B^{(0,1)}_{ijk} - B^{(1,0)}_{ijk}\,B^{(0,1)}_{ijk},
\end{equation}
selects triplets that belong to at least one maximal $3$-clique (triangle, hyperedge, or simplicial complex) without double-counting simplicial complexes.

The following identities, which extend Proposition~\ref{prop:burgio-constraints}, are used throughout.
\begin{subequations}\label{eq:B2-identities}
\begin{align}
     B^{(2)}_{ijk}\, B^{(1)}_{ij} &= 0,
     \label{eq:B2-maximality} \\
     B^{(2)}_{ijk}\, B^{(\psi)}_{ijx} &= B^{(\psi)}_{ijk}\, \delta_{xk},
     \quad \text{for } (\psi) \in \set{(1,0),\, (0,1),\, (1,1)},
     \label{eq:B2-sparsity} \\
     B^{(2)}_{ijk}\, A_{ij} &= B^{(2)}_{ijk} \Big[ B^{(1)}_{ij} + \sum_x B^{(1,0)}_{ijx} \Big] = B^{(1,0)}_{ijk}.
     \label{eq:B2-edge}
\end{align}
\end{subequations}
Equation~\eqref{eq:B2-maximality} follows from the maximality constraint: if $B^{(2)}_{ijk} = 1$, then $\set{i,j,k}$ is a maximal $3$-clique and the edge $\set{i,j}$ cannot simultaneously be a maximal $2$-clique. Equation~\eqref{eq:B2-sparsity} follows from sparsity: two distinct $3$-cliques sharing the pair $\set{i,j}$ would require $x = k$. Equation~\eqref{eq:B2-edge} follows by substituting the decomposition~\eqref{eq:burgio-A-decomp} and applying~\eqref{eq:B2-maximality} and~\eqref{eq:B2-sparsity}.

\subsection{Derivation of SSI equation} \label{app:burgio-SSI}

Multiplying equation~\eqref{eq:dSSI} by $B^{(2)}_{ijk}$ and simplifying term by term:

\medskip
\textit{Recovery terms.} The recovery terms pass through unchanged:
\[
    \gamma\, B^{(2)}_{ijk} \Big( \prob{I_i S_j I_k} + \prob{S_i I_j I_k} - \prob{S_i S_j I_k} \Big).
\]

\textit{Internal pairwise infection terms.} The prefactor $B^{(2)}_{ijk}[A_{ik} + A_{jk}]$ simplifies via equation~\eqref{eq:B2-edge}: $B^{(2)}_{ijk}\,A_{ik} = B^{(1,0)}_{ijk}$ and $B^{(2)}_{ijk}\,A_{jk} = B^{(1,0)}_{ijk}$, giving
\[
    -\beta_1 \cdot 2\,B^{(1,0)}_{ijk}\,\prob{S_i S_j I_k}.
\]

\textit{Internal hyperedge infection term.} This term is absent from equation~\eqref{eq:dSSI} (the internal hyperedge $H_{ijk}$ does not contribute to $\prob{S_i S_j I_k}$ since neither susceptible node is infected by the hyperedge when only $k$ is infected).

\textit{Mixed hyperedge infection terms.} The prefactors $B^{(2)}_{ijk}\,H_{ikx}$ and $B^{(2)}_{ijk}\,H_{jkx}$ require two $3$-cliques to share the pair $\set{i,k}$ or $\set{j,k}$ respectively. By the sparsity constraint~\eqref{eq:B2-sparsity}, this forces $x = j$ or $x = i$, but $x \neq i,j$ in the summation. Hence these terms vanish.

\textit{External hyperedge infection terms.} Substituting $H_{ixy} = B^{(0,1)}_{ixy}$, $H_{jxy} = B^{(0,1)}_{jxy}$, and $H_{kxy} = B^{(0,1)}_{kxy}$ via equation~\eqref{eq:burgio-H-decomp}, and converting to unordered sums (Remark~\ref{rem:summation-convention}):
\[
    \beta_2 \bigg( \frac{1}{2} \sum_{\substack{x,y \\ y \neq x,\; x,y \neq i,j,k}} B^{(2)}_{ijk} \Big[ -B^{(0,1)}_{ixy} - B^{(0,1)}_{jxy} \Big] \prob{I_x I_y S_i S_j I_k} + B^{(2)}_{ijk}\, B^{(0,1)}_{kxy}\, \prob{S_i S_j S_k I_x I_y} \bigg).
\]

\textit{External pairwise infection terms.} These are the only terms requiring expansion via Lemma~\ref{lem:burgio-substitution}. The intermediate form is
\[
    \beta_1 \sum_{x \neq i,j,k} B^{(2)}_{ijk} \Big[ -A_{ix} - A_{jx} \Big] \prob{I_x S_i S_j I_k} + B^{(2)}_{ijk}\, A_{kx}\, \prob{S_i S_j S_k I_x}.
\]
Applying Lemma~\ref{lem:burgio-substitution} to each of the three terms and using the bracket notation from the proof of Theorem~\ref{thm:burgio-order2}:
\begin{equation}\label{eq:app-SSI-beta1-expanded}
\begin{aligned}
    &\sum_{x \neq i,j,k} B^{(2)}_{ijk} \Big[ -A_{ix} - A_{jx} \Big] \prob{I_x S_i S_j I_k} + B^{(2)}_{ijk}\, A_{kx}\, \prob{S_i S_j S_k I_x} \\
    = -{\Bigg[}& {\sum_{x \neq i,j,k} B^{(2)}_{ijk}\, B^{(1)}_{ix}\, \prob{S_i S_j I_k I_x}} \\
    &{+ \frac{1}{2} \sum_{\substack{x,y \\ y \neq x}} B^{(2)}_{ijk}\, B^{(1,0)}_{ixy} \Big( \prob{S_i S_j I_k I_x S_y} + \prob{S_i S_j I_k S_x I_y} + 2\,\prob{S_i S_j I_k I_x I_y} \Big) \Bigg]} \\
    &- {\Big[ i \leftrightarrow j \Big]} + {\Big[ I_k \to S_k,\; i \leftrightarrow k \Big]}.
\end{aligned}
\end{equation}

Collecting all terms recovers equation~\eqref{eq:burgio-target-order3-SSI}.

\subsection{Derivation of SII equation}\label{app:burgio-SII}
Multiplying equation~\eqref{eq:dSII} by $B^{(2)}_{ijk}$ and simplifying term by term:

\medskip
\textit{Recovery terms.} The recovery terms pass through unchanged:
\[
    \gamma\, B^{(2)}_{ijk} \Big( \prob{I_i I_j I_k} - 2\,\prob{S_i I_j I_k} \Big).
\]

\textit{Internal pairwise infection terms.} Using equation~\eqref{eq:B2-edge} on each prefactor: $B^{(2)}_{ijk}\,A_{ij} = B^{(1,0)}_{ijk}$, $B^{(2)}_{ijk}\,A_{ik} = B^{(1,0)}_{ijk}$, and $B^{(2)}_{ijk}\,A_{jk} = B^{(1,0)}_{ijk}$. This gives
\[
    \beta_1 \Big( -2\,B^{(1,0)}_{ijk}\,\prob{S_i I_j I_k} + B^{(1,0)}_{ijk} \big[ \prob{S_i I_j S_k} + \prob{S_i S_j I_k} \big] \Big).
\]

\textit{Internal hyperedge infection term.} The prefactor $B^{(2)}_{ijk}\,H_{ijk} = B^{(2)}_{ijk}\,B^{(0,1)}_{ijk} = B^{(0,1)}_{ijk}$, giving
\[
    -\beta_2\, B^{(0,1)}_{ijk}\, \prob{S_i I_j I_k}.
\]

\textit{Mixed hyperedge infection terms.} By the same sparsity argument as in the $\prob{S_i S_j I_k}$ case, the prefactors $B^{(2)}_{ijk}\,H_{ijx}$, $B^{(2)}_{ijk}\,H_{ikx}$, and $B^{(2)}_{ijk}\,H_{jkx}$ each require two $3$-cliques to share a pair, forcing the summation index to equal the excluded node. Hence these terms vanish.

\textit{External hyperedge infection terms.} Substituting $H = B^{(0,1)}$ and converting to unordered sums:
\begin{multline*}
    \beta_2 \bigg( \frac{1}{2} \sum_{\substack{x,y \\ y \neq x,\; x,y \neq i,j,k}} -B^{(2)}_{ijk}\, B^{(0,1)}_{ixy}\, \prob{I_x I_y S_i I_j I_k} \\
    + B^{(2)}_{ijk}\, B^{(0,1)}_{jxy}\, \prob{S_i S_j I_k I_x I_y} + B^{(2)}_{ijk}\, B^{(0,1)}_{kxy}\, \prob{S_i I_j S_k I_x I_y} \bigg).
\end{multline*}

\textit{External pairwise infection terms.} Applying Lemma~\ref{lem:burgio-substitution} to each of the three terms:
\begin{equation}\label{eq:app-SII-beta1-expanded}
\begin{aligned}
    &\sum_{x \neq i,j,k} -B^{(2)}_{ijk}\, A_{ix}\, \prob{I_x S_i I_j I_k} + B^{(2)}_{ijk}\, A_{jx}\, \prob{I_x S_i S_j I_k} + B^{(2)}_{ijk}\, A_{kx}\, \prob{I_x S_i I_j S_k} \\
    = -{\Bigg[}& {\sum_{x \neq i,j,k} B^{(2)}_{ijk}\, B^{(1)}_{ix}\, \prob{S_i I_j I_k I_x}} \\
    &{+ \frac{1}{2} \sum_{\substack{x,y \\ y \neq x}} B^{(2)}_{ijk}\, B^{(1,0)}_{ixy} \Big( \prob{S_i I_j I_k I_x S_y} + \prob{S_i I_j I_k S_x I_y} + 2\,\prob{S_i I_j I_k I_x I_y} \Big) \Bigg]} \\
    &+ {\Big[ I_j \to S_j,\; i \leftrightarrow j \Big]} + {\Big[ I_k \to S_k,\; i \leftrightarrow k \Big]}.
\end{aligned}
\end{equation}

Collecting all terms recovers equation~\eqref{eq:burgio-target-order3-SII}.

\section{Inter-order model: four- and five-node motif derivations}
\label{app:inter-order-derivations}

The lemmas in this appendix derive the closed-form reductions of the pair- and triplet-level rate-equation terms that involve four- or five-node motifs. Throughout, we work under the strong sparsity assumption (\ref{ass:io-strong-sparsity}) together with the dynamical and topological homogeneity assumptions (Assumptions~\ref{ass:dynamical-homogeneity} and \ref{ass:topological-homogeneity}). The strong sparsity assumption restricts the admissible local topologies on four and five nodes to a single isomorphism class in each case --- the pendant motif on four nodes and the bowtie motif on five nodes --- which substantially simplifies the closure.

\begin{lemma}\label{lem:pair-ext-hyper}
  Under Assumptions~\ref{ass:io-strong-sparsity}, \ref{ass:dynamical-homogeneity}, and~\ref{ass:topological-homogeneity}, the pair-level external hyperedge infection term on four nodes,
  \[
    A_{ij} \sum_{\substack{x < y \\ x, y \neq i, j}} H_{ixy}\, \prob{I_x I_y S_i I_j},
  \]
  reduces to
  \[
    \frac{k_2}{k_1}\,(k_1 - 2\alpha)\, \prob{I I S_\bigcirc I}.
  \]
\end{lemma}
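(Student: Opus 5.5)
I would follow the same template used for the node- and pair-level lemmas: first isolate the adjacency prefactor, then collapse the joint probability onto a single class, and finally promote the remaining partial sum under topological homogeneity. The prefactor here is $A_{ij}H_{ixy}$, with $x,y\neq i,j$. I would expand the identity $1=\prod(A+\bar A)\prod(H+\bar H)$ over the unresolved edges and hyperedges of $\set{i,j,x,y}$, namely $\set{i,x},\set{i,y},\set{x,y},\set{j,x},\set{j,y}$ and the hyperedges other than $\set{i,x,y}$. Strong sparsity (Assumption~\ref{ass:io-strong-sparsity}) then kills every term except those in which $j$ attaches to the hyperedge only through the pivot $i$. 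Concretely, $A_{jx}=A_{jy}=0$, and no further hyperedge contains $\set{i,j}$ together with $x$ or $y$: this follows from hyperedge pair-disjointedness, and from the wedge/leaf condition applied to the wedges $j$--$i$--$x$. Every surviving term is therefore a hyperedge--pendant motif with pivot $i$. By the closure of Lemma~\ref{lem:pendant-closure}, in the form of Remark~\ref{rem:pendant-with-wedge-closure} or Remark~\ref{rem:pendant-hyperedge-only}, its probability is $\prob{X_xX_yS_i}^{\bigcirc\sigma}\prob{S_iI_j}/\prob{S_i}$. The internal hyperedge classes $\sigma$ are then collapsed by Assumption~\ref{ass:io-class-collapse}. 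Under dynamical homogeneity this leaves the single quantity $\prob{IIS_\bigcirc I}$ multiplying the bare count $A_{ij}\sum_{x<y}H_{ixy}\bar A_{jx}\bar A_{jy}$.

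The second step is to evaluate this count. I would write $\bar A_{jx}\bar A_{jy}=1-A_{jx}-A_{jy}+A_{jx}A_{jy}$. Sparsity removes the last term, since a node adjacent to $i,x,y$ with $A_{ij}=1$ would form two triangles or a triangle meeting the hyperedge in two nodes. It is, however, cleaner to count directly. Promote with $p=A_{ij}$ (Assumption~\ref{ass:topological-homogeneity}), which gives
\[
\frac{\sum A_{ij}H_{ixy}\bar A_{jx}\bar A_{jy}}{2Nk_1}.
\]
For each ordered pair consisting of a node $i$ and a hyperedge $\set{i,x,y}\ni i$ (there are $Nk_2$ of them), count the neighbours $j$ of $i$ outside the hyperedge. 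This number is $k_{1,i}$ minus the number of edges from $i$ into $\set{x,y}$, namely $A_{ix}+A_{iy}$.

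The final step averages $A_{ix}+A_{iy}$ over pivot positions using Lemma~\ref{lem:alpha-tensor}, or equivalently Lemma~\ref{lem:io-network-quantities}. For each hyperedge this sum averages to $\tfrac{2}{3}\epsilon(h)$, whose mean is $2\alpha$. Homogeneity replaces $k_{1,i}$ by $k_1$, and the $\sum_{x<y}$ versus unordered conventions are reconciled by Remark~\ref{rem:summation-convention}. The result is
\[
A_{ij}\sum_{x<y}H_{ixy}\bar A_{jx}\bar A_{jy}\approx \frac{Nk_2(k_1-2\alpha)}{Nk_1}=\frac{k_2}{k_1}(k_1-2\alpha).
\]

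I expect the main obstacle to be the first step. It requires verifying carefully that strong sparsity really excludes every non-pendant configuration, in particular the ones with $A_{jx}=1$, where $j$ would sit in a wedge whose leaves meet the hyperedge. It also requires a careful justification that the degree/hyperedge correlation in $\sum_i k_{1,i}\cdot(\text{hyperedges at } i)$ may be replaced by $k_1k_2$ under topological homogeneity.
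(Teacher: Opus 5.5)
Your argument is correct and reaches the same count, but by a different route from the paper. The paper never inserts $\bar A_{jx}\bar A_{jy}$. It asserts that the pendant is the only admissible motif and then does inclusion--exclusion on the summation indices. The unrestricted sum gives $\tfrac12 A_{ij}\sum_{x\neq y}H_{ixy}=k_2$. From this it subtracts the excluded cases $x=j$ or $y=j$, which count hyperedges containing the edge $\set{i,j}$ and promote to $\sum A_{ij}H_{ijx}/\sum A_{ij}=2k_2\alpha/k_1$.

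You count the same incidences (an edge at $i$ together with a hyperedge at $i$ not containing that edge) from the other side. For each node--hyperedge incidence there are $k_{1,i}-(A_{ix}+A_{iy})$ neighbours outside the hyperedge, averaging to $k_1-2\alpha$, and the factor $Nk_2/(Nk_1)$ converts this to a per-edge count. Both subtractions are the same exact quantity, $\sum_{i,j,x}A_{ij}H_{ijx}=6|\mathcal{H}|\alpha$, read in dual ways.

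Your version gains explicitness in two places. You actually check that strong sparsity forces $A_{jx}=A_{jy}=0$, via the wedge/leaf clause, or the triangle clause when $A_{ix}=1$. You also surface the degree--hyperdegree decorrelation $\sum_i k_{1,i}k_{2,i}\approx Nk_1k_2$, which the paper hides in ``the fixed edge is independent of the hyperedges''. Both proofs need it, and Remark~\ref{rem:topological-homogeneity} supplies it.

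Three small points:
\begin{enumerate}
  \item The lemma's target is the unclosed $\prob{IIS_\bigcirc I}$, so the detour through Lemma~\ref{lem:pendant-closure} is unnecessary.
  \item Assumption~\ref{ass:io-class-collapse} is not among the lemma's hypotheses. However, the paper's claim that the pendant is a single isomorphism class tacitly performs exactly that collapse over the internal edges of $\set{i,x,y}$, so your invocation makes an implicit step explicit.
  \item Your aside that sparsity removes $A_{jx}A_{jy}$ because of ``two triangles'' fails when $A_{ix}=A_{iy}=A_{xy}=0$. The wedge/leaf clause you already cite covers that case.
\end{enumerate}
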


\begin{proof}
  Strong sparsity (Assumption~\ref{ass:io-strong-sparsity}) admits only one isomorphism class of four-node sub-graph: the pendant motif. Under dynamical homogeneity (Assumption~\ref{ass:dynamical-homogeneity}), the unspecified joint probability collapses to the single class-specific probability, with the adjacency prefactors selecting that class on both sides of the approximation:
  \[
    A_{ij}\, H_{ixy}\, \prob{I_x I_y S_i I_j} \approx A_{ij}\, H_{ixy}\, \prob{I I S_\bigcirc I}.
  \]

  We next handle the restricted summation. Since $H_{ixy}$ is symmetric in $x, y$ and the state probability is symmetric under $I_x \leftrightarrow I_y$, we first lift the $x < y$ ordering at the cost of a factor~$\tfrac{1}{2}$:
  \[
    A_{ij} \sum_{\substack{x < y \\ x, y \neq i, j}} H_{ixy}\, \prob{I_x I_y S_i I_j} = \frac{1}{2} A_{ij} \sum_{\substack{x \neq y \\ x, y \neq i, j}} H_{ixy}\, \prob{I_x I_y S_i I_j}.
  \]
  The unordered restricted sum then decomposes as an unrestricted sum minus the excluded indices:
  \[
    \sum_{\substack{x \neq y \\ x, y \neq i, j}} f(i, j, x, y) = \sum_{x \neq y} f(i, j, x, y) - \sum_{x} \bigl[f(i, j, x, i) + f(i, j, x, j)\bigr] - \sum_{y} \bigl[f(i, j, i, y) + f(i, j, j, y)\bigr].
  \]

  For the first term, observe that the fixed edge $A_{ij}$ is independent of the hyperedges $H_{ixy}$ enumerated by the sum. By topological homogeneity (Assumption~\ref{ass:topological-homogeneity}), the unordered double sum counts each hyperedge incident to~$i$ twice, giving
  \[
    \frac{1}{2}\, A_{ij} \sum_{x \neq y} H_{ixy} = \frac{1}{2} \cdot 2 k_2 = k_2.
  \]

  The correction terms account for the excluded indices. The cases $x = i$ and $y = i$ vanish since $H_{iiy} = H_{ixi} = 0$ by the diagonal convention of Definition~\ref{def:adjacency-tensors}. The remaining two cases, $x = j$ and $y = j$, each contribute $\sum_x H_{ixj}$ --- the symmetry factor of~$2$ here cancels the prefactor $\tfrac{1}{2}$:
  \[
    \frac{1}{2} \cdot 2\, A_{ij} \sum_{x} H_{ixj} = A_{ij} \sum_{x} H_{ixj} \approx \frac{\sum_{i, j, x} A_{ij}\, H_{ijx}}{\sum_{i, j} A_{ij}} = \frac{2 k_2 \alpha}{k_1}.
  \]

  Combining the two contributions yields
  \[
    k_2 - \frac{2 k_2 \alpha}{k_1} = \frac{k_2}{k_1}\,(k_1 - 2\alpha),
  \]
  as claimed.
\end{proof}

\begin{lemma}\label{lem:triplet-int-hyper}
  Under Assumptions~\ref{ass:io-strong-sparsity}, \ref{ass:dynamical-homogeneity}, and~\ref{ass:topological-homogeneity}, the internal hyperedge contribution to the triplet rate equation for $\prob{S_i S_j I_k}$,
  \[
    H_{ijk}\, A_{ik}\, \prob{S_i S_j I_k},
  \]
  reduces to
  \[
    \alpha\, \prob{S S I}^{\bigcirc}.
  \]
\end{lemma}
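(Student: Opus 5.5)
This will be proved in the same way as Lemma~\ref{lem:pair-ext-hyper}: first collapse the joint probability to a single class-specific value, then evaluate the adjacency prefactor as a promoted average. The statement is really about the term as it enters the mean-field triplet equation for $\prob{SSI}^{\bigcirc}$. There the left-hand side carries the tag $H_{ijk}$, and the internal pairwise factor $A_{ik}$ (and, symmetrically, $A_{jk}$) from equation~\eqref{eq:dSSI} must be averaged over all hyperedge triplets. So the quantity to compute is the expected value of $H_{ijk}A_{ik}$ conditioned on $H_{ijk}=1$, multiplied by the collapsed probability.

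\emph{Step 1: collapse the probability.} Insert $1=(A_{ij}+\bar A_{ij})(A_{jk}+\bar A_{jk})$ alongside the existing $A_{ik}$. This splits $H_{ijk}A_{ik}\prob{S_iS_jI_k}$ into the four classes $\bigcirc\triangle$, $\bigcirc\wedge$ (with two possible centres) and $\bigcirc-$. The selector functions $f_{\tau\sigma}$ of Table~\ref{tab:isomorphism-class-notation} pick out exactly one class per triplet. Assumption~\ref{ass:dynamical-homogeneity} then removes the node labels, and Assumption~\ref{ass:io-class-collapse} sets every $\prob{SSI}^{\bigcirc\sigma}$ equal to $\prob{SSI}^{\bigcirc}$. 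The brackets recombine to $1$, so the probability factors out and we are left with $H_{ijk}A_{ik}\,\prob{SSI}^{\bigcirc}$.

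\emph{Step 2: promote the prefactor.} Apply Assumption~\ref{ass:topological-homogeneity} with the fixed sub-graph $p=H_{ijk}$ and $q=A_{ik}$. In this case $q$ contains no dummy indices, so the ``promotion'' is an average over all labelled hyperedge triples:
\[
H_{ijk}A_{ik}\approx\frac{\sum_{i,j,k}H_{ijk}A_{ik}}{\sum_{i,j,k}H_{ijk}}.
\]
By Lemma~\ref{lem:alpha-tensor} the numerator equals $6|\mathcal H|\alpha$, and the denominator is $6|\mathcal H|$, so the ratio is exactly $\alpha$. The same value follows from Lemma~\ref{lem:theta-tensor-products}: expand $A_{ik}$ against the remaining edges, sum the $\theta$ identities with multiplicities $\theta(3)+\tfrac23\theta(2)+\tfrac13\theta(1)$, and apply Lemma~\ref{lem:alpha-from-theta}. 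I will give this second route as a consistency check, since it shows that the $\theta$-resolved expression reduces to $\alpha$ precisely because of the class collapse.

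\emph{Main obstacle.} The difficulty is justifying Step 2. Assumption~\ref{ass:topological-homogeneity} is stated for partial sums over dummy nodes, whereas here no sum remains. I will argue that the natural reading of the assumption applies: the triplet equation holds for a typical hyperedge, and its local edge-count $A_{ik}$ is replaced by the network-wide fraction of hyperedge node-pairs that are also edges. This is the case with no dummy indices, obtained by letting the fixed sub-graph $\mathcal S'$ equal the hyperedge and asking for the density of the larger sub-graph $\mathcal S$ (hyperedge plus edge $\{ik\}$). I will also note that strong sparsity (Assumption~\ref{ass:io-strong-sparsity}) plays no role at this three-node level; it is listed only so that the lemma has uniform hypotheses.
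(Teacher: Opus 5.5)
Your proposal is correct and takes essentially the same route as the paper. Both arguments collapse $H_{ijk}A_{ik}\prob{S_iS_jI_k}$ onto the single hyperedge class $\prob{SSI}^{\bigcirc}$, then replace $H_{ijk}A_{ik}$ by $H_{ijk}\alpha$ via topological homogeneity, with $\alpha=\sum_{i,j,k}H_{ijk}A_{ik}/\sum_{i,j,k}H_{ijk}$. The differences are that you invoke the class-collapse step (Assumption~\ref{ass:io-class-collapse}) explicitly, whereas the paper's proof attributes it to dynamical homogeneity even though the assumption is not among the lemma's listed hypotheses, and that you add an optional $\theta$-based consistency check.
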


\begin{proof}
  Under dynamical homogeneity (Assumption~\ref{ass:dynamical-homogeneity}), triplets containing a hyperedge share a single dynamically distinguishable isomorphism class, with the hyperedge prefactor selecting that class on both sides:
  \[
    H_{ijk} \, A_{ik} \, \prob{S_i S_j I_k} \approx H_{ijk}\, A_{ik} \, \prob{S S I}^{\bigcirc}.
  \]
  The pairwise prefactor $A_{ik}$ is treated using topological homogeneity (Assumption~\ref{ass:topological-homogeneity}): the local count of pairwise edges within a hyperedge is replaced by its network-wide average~$\alpha$, defined by
  \[
    \alpha = \frac{\sum_{i, j, k} H_{ijk}\, A_{ik}}{\sum_{i, j, k} H_{ijk}}.
  \]
  Hence $H_{ijk}\, A_{ik} \approx H_{ijk}\, \alpha$, which absorbs into the hyperedge-tagged triplet probability to give the stated result.
\end{proof}

\begin{lemma}\label{lem:triplet-ext-pair}
  Under Assumptions~\ref{ass:io-strong-sparsity}, \ref{ass:dynamical-homogeneity}, and~\ref{ass:topological-homogeneity}, the external pairwise infection contribution to the triplet rate equation for $\prob{S_i S_j I_k}$,
  \[
    H_{ijk} \sum_{x \neq i, j, k} A_{ix}\, \prob{I_x S_i I_j I_k},
  \]
  reduces to
  \[
    (k_1 - 2\alpha)\, \prob{I I S_\bigcirc I}.
  \]
\end{lemma}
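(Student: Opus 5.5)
Following the template of Lemma~\ref{lem:pair-ext-hyper}, the argument has two stages. First we identify the single surviving motif class and pull the probability out of the sum. Then we evaluate the remaining adjacency partial sum by promotion, as in Assumption~\ref{ass:topological-homogeneity}.

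\emph{Motif class.} The prefactor $H_{ijk}A_{ix}$ with $x\notin\set{i,j,k}$ spans a hyperedge $\set{i,j,k}$ with an edge $\set{i,x}$ attached at node $i$. By the strong sparsity condition (Assumption~\ref{ass:io-strong-sparsity}), this four-node subgraph admits only the hyperedge--pendant configuration, with pivotal node $i$. No edge can join $x$ to $j$ or $k$: such an edge would make a wedge sharing two leaves with the hyperedge, or close a triangle overlapping it in two nodes. By the hyperedge-class collapse (Assumption~\ref{ass:io-class-collapse}), the internal pairwise structure of $\set{i,j,k}$ is irrelevant. Dynamical homogeneity (Assumption~\ref{ass:dynamical-homogeneity}) then replaces $\prob{I_x S_i I_j I_k}$ by the class probability $\prob{IIS_\bigcirc I}$, written with the pivotal susceptible node in third position and the pendant node $x$ last. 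This probability factors out of the sum.

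\emph{Counting.} What remains is the partial sum $H_{ijk}\sum_{x\neq i,j,k}A_{ix}$, which counts the edges of $i$ that leave the hyperedge. I would write it as the unrestricted degree minus the excluded indices:
\[
H_{ijk}\sum_{x\neq i,j,k}A_{ix}=H_{ijk}\Bigl(\sum_x A_{ix}-A_{ij}-A_{ik}\Bigr),
\]
using $A_{ii}=0$. The first term is promoted with $p=H_{ijk}$. Because edges and hyperedges are a priori unrelated (Remark~\ref{rem:hyperedge-edge-independence}), this gives $\sum_{i,j,k,x}H_{ijk}A_{ix}/\sum_{i,j,k}H_{ijk}\approx k_1$. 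Each correction term is promoted by Lemma~\ref{lem:alpha-tensor}:
\[
\sum_{i,j,k}H_{ijk}A_{ik}\Big/\sum_{i,j,k}H_{ijk}=\alpha,
\]
and by symmetry of $\mathbf H$ the same holds for $A_{ij}$. Together these give $k_1-2\alpha$, which yields the claim.

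\emph{Main obstacle.} The delicate step is justifying that $H_{ijk}\sum_x A_{ix}\approx k_1$. Promotion alone gives the mean degree of nodes weighted by their hyperedge degree, not the plain mean degree. Equating the two requires the topological homogeneity reading of Remark~\ref{rem:topological-homogeneity}, under which every node has the same degree, or at least a lack of correlation between a node's degree and its hyperedge degree. I would state this explicitly, exactly as the analogous step $A_{ij}\sum H_{ixy}\to 2k_2$ was handled in Lemma~\ref{lem:pair-ext-hyper}. A secondary point to check is the state labelling. The term in \eqref{eq:dSSI} is $\prob{I_xS_iS_jI_k}$, while the statement as written has $I_j$, so I would verify which triplet equation is intended and map the states onto the pendant notation accordingly. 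The coefficient $(k_1-2\alpha)$ is unaffected either way.
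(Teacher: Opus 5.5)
Your proposal is correct and follows the paper's proof essentially step for step. The pendant motif is identified via strong sparsity and dynamical homogeneity, and the remaining count is split as $H_{ijk}\sum_{x\neq i,j,k}A_{ix}=H_{ijk}\bigl(\sum_x A_{ix}-A_{ij}-A_{ik}\bigr)$, with the first term promoted to $k_1$ and each correction to $\alpha$. Your additional remarks are valid refinements that the paper's proof passes over silently: hyperedge-class collapse is really needed to get a single pendant class, the $k_1$ step implicitly assumes degree and hyperedge degree are uncorrelated, and the term with $I_j$ actually comes from the $\prob{S_iI_jI_k}$ equation rather than the $\prob{S_iS_jI_k}$ one.
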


\begin{proof}
  Under strong sparsity (Assumption~\ref{ass:io-strong-sparsity}) and dynamical homogeneity (Assumption~\ref{ass:dynamical-homogeneity}), the only admissible four-node motif is the pendant, with the adjacency prefactors selecting that class on both sides:
  \[
    H_{ijk}\, A_{ix}\, \prob{I_x S_i I_j I_k} \approx H_{ijk}\, A_{ix}\, \prob{I I S_\bigcirc I}.
  \]

  We rewrite the restricted summation as an unrestricted sum minus the excluded indices:
  \[
    H_{ijk} \sum_{x \neq i, j, k} A_{ix} = H_{ijk} \sum_{x} A_{ix} - H_{ijk}\, A_{ii} - H_{ijk}\, A_{ij} - H_{ijk}\, A_{ik}.
  \]

  The self-loop term vanishes since $A_{ii} = 0$. By topological homogeneity (Assumption~\ref{ass:topological-homogeneity}), the unrestricted sum gives $H_{ijk} \sum_{x} A_{ix} = H_{ijk}\, k_1$, and each of the two remaining cross terms satisfies $H_{ijk}\, A_{ik} \approx H_{ijk}\, \alpha$, as in the proof of Lemma~\ref{lem:triplet-int-hyper}. Collecting all contributions yields
  \[
    H_{ijk} \sum_{x \neq i, j, k} A_{ix} \approx H_{ijk}\, (k_1 - 2\alpha),
  \]
  which combines with the homogenized state probability to give the stated result.
\end{proof}

\begin{lemma}\label{lem:triplet-ext-hyper}
  Under Assumptions~\ref{ass:io-strong-sparsity}, \ref{ass:dynamical-homogeneity}, and~\ref{ass:topological-homogeneity}, the external hyperedge infection contribution to the triplet rate equation for $\prob{S_i S_j I_k}$,
  \[
    H_{ijk} \sum_{\substack{x < y \\ x, y \neq i, j, k}} H_{ixy}\, \prob{I_x I_y S_i S_j I_k},
  \]
  reduces to
  \[
    (k_2 - 1)\, \prob{I S S_{\bowtie} I I}.
  \]
\end{lemma}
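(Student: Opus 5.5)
Following the pattern of Lemmas~\ref{lem:pair-ext-hyper} and~\ref{lem:triplet-ext-pair}, I would first collapse the state probability and then evaluate the remaining adjacency sum by promotion.

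\emph{Collapsing the state probability.} The prefactor $H_{ijk}H_{ixy}$ fixes two hyperedges, $\set{i,j,k}$ and $\set{i,x,y}$. By hyperedge pair-disjointedness (Assumption~\ref{ass:inter-order-stated-sparsity}), they share only the pivotal node~$i$, since $x,y\neq j,k$. Condition~2 of Assumption~\ref{ass:io-strong-sparsity} then forbids pairwise edges between $\set{j,k}$ and $\set{x,y}$. Condition~1 excludes any other triangle or wedge structure that could add correlations. Consequently the five-node sub-graph is always the bowtie with pivot~$i$. Internal edges within each hyperedge may vary, but Assumption~\ref{ass:io-class-collapse} lets us ignore that variation. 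Dynamical homogeneity (Assumption~\ref{ass:dynamical-homogeneity}) then gives
\[
H_{ijk}H_{ixy}\prob{I_xI_yS_iS_jI_k}\approx H_{ijk}H_{ixy}\prob{ISS_{\bowtie}II}.
\]
In this expression the pivot state $S_i$ sits third, the partner hyperedge contributes $(S_j,I_k)$ (equivalently $I S$ by symmetry in the first two entries), and the external hyperedge contributes $II$. Since the probability is now constant, it factors out of the sum.

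\emph{Evaluating the adjacency sum.} The remaining task is $H_{ijk}\sum_{x<y,\;x,y\neq i,j,k}H_{ixy}$.
\begin{enumerate}
\item Because $H_{ixy}$ is symmetric in $x,y$, I would lift the ordering as $\tfrac12\sum_{x\neq y}$.
\item I would then write the restricted sum as the unrestricted sum minus the excluded-index terms, exactly as in Lemma~\ref{lem:pair-ext-hyper}.
\item By topological homogeneity (Assumption~\ref{ass:topological-homogeneity}) with $p=H_{ijk}$, the promoted unrestricted sum is
\[
\frac{\sum_{i,j,k,x,y}H_{ijk}H_{ixy}}{\sum_{i,j,k}H_{ijk}}.
\]
Equivalently, a node in a hyperedge has on average $k_2$ incident hyperedges, which gives $\tfrac12 H_{ijk}\sum_{x\neq y}H_{ixy}\approx H_{ijk}k_2$.
\item Terms with $x$ or $y$ equal to $i$ vanish by the diagonal convention.
\item Terms with $\set{x,y}\cap\set{j,k}\neq\emptyset$ can be nonzero only when $\set{i,x,y}$ shares two nodes with $\set{i,j,k}$. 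By pair-disjointedness this forces $\set{x,y}=\set{j,k}$, which is the hyperedge $\set{i,j,k}$ itself. In the ordered sum this contributes exactly $H_{ijk}\cdot 1$.
\end{enumerate}
Subtracting gives $H_{ijk}(k_2-1)$, and hence the claimed $(k_2-1)\prob{ISS_{\bowtie}II}$.

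The step I expect to be the main obstacle is the excluded-index bookkeeping. The difficulty is to get the correction as exactly $1$ rather than $2$ or a term involving $\alpha$. The unordered sum counts $(x,y)=(j,k)$ and $(k,j)$, and each term also has a single-coincidence form $x=j,\,y\notin\set{i,k}$, and so on. The single-coincidence terms vanish by pair-disjointedness, while the double coincidences give $2H_{ijk}$, which becomes $1$ after the factor $\tfrac12$. A secondary subtlety is justifying that the promoted mean over nodes in hyperedges equals $k_2$, i.e.\ that there is no friendship-paradox bias. This rests on the node-level homogeneity in Remark~\ref{rem:topological-homogeneity}, which I would invoke explicitly.
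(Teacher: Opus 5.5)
Your proposal is correct and follows essentially the paper's route: you collapse the five-node probability to the single bowtie class via strong sparsity and dynamical homogeneity, then count the $k_2-1$ hyperedges incident to the pivot $i$ other than $\set{i,j,k}$. Your version is more explicit than the paper's proof, which leaves three points implicit: that pair-disjointedness is what makes the exclusion $x,y\neq j,k$ remove only $\set{i,j,k}$, that the hyperedge-class collapse (Assumption~\ref{ass:io-class-collapse}) is needed to obtain a single bowtie class, and that node-level hyperedge-degree regularity is needed so the promoted count is $k_2$ rather than $\langle k_2^2\rangle/\langle k_2\rangle$.
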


\begin{proof}
  Strong sparsity (Assumption~\ref{ass:io-strong-sparsity}) admits only one isomorphism class of five-node sub-graph compatible with two hyperedges sharing a single node, namely the bowtie motif. Under dynamical homogeneity (Assumption~\ref{ass:dynamical-homogeneity}), the hyperedge prefactors $H_{ijk}$ and $H_{ixy}$ select this class on both sides of the approximation:
  \[
    H_{ijk}\, H_{ixy}\, \prob{I_x I_y S_i S_j I_k} \approx H_{ijk}\, H_{ixy}\, \prob{I S S_{\bowtie} I I}.
  \]

  For the remaining summation, observe that the outer prefactor $H_{ijk}$ requires node~$i$ to belong to at least one hyperedge, namely $\set{i, j, k}$. Node~$i$ belongs to $k_2$ hyperedges on average, and since $\set{i, j, k}$ is explicitly excluded from the summation, there are $k_2 - 1$ remaining hyperedges incident to~$i$. Hence
  \[
    H_{ijk} \sum_{\substack{x < y \\ x, y \neq i, j, k}} H_{ixy} \approx H_{ijk}\, (k_2 - 1),
  \]
  which combines with the homogenized state probability to give the stated result.
\end{proof}

All other terms in the pair-level and triplet-level rate equations can be obtained via substitutions of state $S \leftrightarrow I$ and node labels and the lemmas within this appendix. Combined with the closures of Remark~\ref{rem:io-closure-canonical}, this completes the derivation of the inter-order model target equations~\eqref{eq:io-target-eqs} and target closures~\eqref{eq:io-closure-triples}--\eqref{eq:io-closure-bowtie}.

\bibliographystyle{unsrt}   

\bibliography{references}

\end{document}